 \newtheorem{thm}{Theorem}[section]
 \numberwithin{equation}{section}
\newtheorem{fact}{Fact}
\definecolor{lightgray}{gray}{0.8}
\newcounter{algorithmCounter}
\newtheorem{myAlg}[algorithmCounter]{Algorithm}
\renewcommand{\Re}{\mathbb{R}}
\newcommand{\vc}[1]{\ensuremath{\bm {#1}}} 
\renewcommand{\vec}[1]{\ensuremath{\mathbf {#1}}}
\newcommand{\mtx}[1]{\ensuremath{\mathbf {#1}}}
\newcommand{\SE}{\text{SE}} 
\newcommand{\se}{\text{se}}
\newcommand{\so}{\text{so}}
\newcommand{\smallcdots}{\cdot\hspace{-1mm}\cdot\hspace{-1mm}\cdot}
\newcommand{\rmfill}{\hspace{-1mm}\smallcdots \vec 0 \hspace{-1mm}\cdot\hspace{-1mm}\cdot\hspace{-.2mm}\cdot\hspace{-.8mm}}
\newcommand{\out}[1]{}
\newcommand{\comment}[1]{}
\long\def\symbolfootnote[#1]#2{\begingroup%
\def\thefootnote{\fnsymbol{footnote}}\footnote[#1]{#2}\endgroup}
\begin{document}	
	\title{Body-and-cad geometric constraint systems\footnote{
		An abbreviated version appeared in: 24th Annual ACM Symposium on Applied Computing, 
		Technical Track on Geometric Constraints and Reasoning GCR'09, Honolulu, HI, 2009. The work in this paper is based on Lee-St.John's Ph.D. dissertation \cite{lee:PhDThesis:2008}.}}
		\author{
		%
		 Kirk Haller\\
		       SolidWorks Corporation\\
		       300 Baker Avenue\\
		       Concord, MA 01742\\
		       \texttt{khaller@solidworks.com}
		\and Audrey Lee-St. John\footnote{Partially supported by NSF grant CCF-0728783 of Ileana Streinu, an NSF Graduate Fellowship and SolidWorks Corporation.}\\
		       Mount Holyoke College\\
		       South Hadley, MA 01075\\
		       \texttt{astjohn@mtholyoke.edu}
		\and Meera Sitharam \footnote{Partially supported by a Research Grant from SolidWorks 2007.}\\
		       University of Florida\\
		       Gainesville, FL 32611\\
		       \texttt{sitharam@cise.ufl.edu}
		\and Ileana Streinu\footnote{Partially supported by NSF grant CCF-0728783 and a DARPA ``23 Mathematical Challenges'' grant.}\\
		       Smith College\\
		       Northampton, MA 01063\\
		       \texttt{istreinu@smith.edu}
		\and Neil White \\ 
			       University of Florida\\
			       Gainesville, FL 32611\\
			       \texttt{white@math.ufl.edu}
		}
\maketitle
\begin{abstract}
Motivated by
constraint-based CAD
 software, we
develop the foundation for the rigidity theory of
a very general model: 
the {\em body-and-cad structure}, composed of rigid bodies
in 3D constrained by pairwise {\bf c}oincidence, {\bf a}ngular
and {\bf d}istance constraints. We identify 21 relevant 
geometric constraints and develop the corresponding
infinitesimal rigidity theory for these structures. The 
classical body-and-bar rigidity model can be viewed as a body-and-cad 
structure that uses only one constraint from this new class.

As a consequence, we identify a new, necessary, but not sufficient,
counting condition for {\em minimal rigidity} of body-and-cad structures: {\em nested sparsity}. 
This is a slight generalization of the well-known sparsity condition of Maxwell.

\end{abstract}

\section{Introduction}
This paper initiates the study of and sets up the foundation for the 
rigidity theory of a large class of 3D geometric constraint systems. 
These systems are composed of rigid bodies with specific 
{\bf c}oincidence, {\bf a}ngular and {\bf d}istance constraints
and are called {\em body-and-{\bf cad} structures}. To the best of our 
knowledge, these constraints have not been systematically studied 
before from this perspective.

\smallskip\noindent{\bf Motivation.} 
Popular computer aided design (CAD) software applications
based on geometric constraint solvers 
allow users to design complex 3D systems by placing geometric constraints
among sets of rigid body building blocks. The constraints are specified by identifying
{\em geometric elements} (points, lines, planes, or splines) on
participating rigid bodies. Detecting when a user has created a fully-defined
sub-system or has added a redundant (or inconsistent) constraint 
are important problems for providing informative feedback.
However, analyzing all constraints simultaneously is a very 
difficult problem. In this paper, we focus on a subset of these constraints
that are amenable to a rigidity-theoretical investigation.

Underlying classical rigidity theory results is a general proof pattern, spanning algebraic geometry (for {\em rigidity}), linear algebra (for {\em infinitesimal rigidity}) and graph theory (for {\em combinatorial rigidity}). The ultimate goal is a full combinatorial characterization of {\em generically  minimally rigid structures}, but
such results are extremely rare: 3D
bar-and-joint rigidity remains a conspicuously open 
problem \cite{graver:servatius:rigidityBook:1993}, while the 2D version is fully understood \cite{laman}.
An important step along the way is identifying a pattern in the rigidity
matrix developed as part of the {\em infinitesimal rigidity theory} for the structures. 
While this is straightforward for the well-known bar-and-joint model, it
is more complicated in the body-and-bar model. In this paper, we formulate 
the even more involved rigidity matrix for the {\em body-and-cad} model.

\smallskip\noindent{\bf Results.}
We define a {\em body-and-cad structure} to be composed of rigid bodies
connected by {\em pairwise} {\em {\bf c}oincidence, {\bf a}ngular} (parallel, perpendicular, 
or arbitrary fixed angular) and {\em {\bf d}istance} constraints. The
constraints occur between specified points, lines
or planes (called {\em geometric elements}). 
Besides the well-studied distance constraint between points (as in body-and-bar structures), 
we identify 20 new 
pairwise constraints.
We label constraints by the geometric elements involved, 
e.g., a {\em line-plane} perpendicular constraint between bodies $A$ and $B$ indicates that
a {\em line} on $A$ is perpendicular to a {\em plane}
on $B$. The complete set of body-and-cad constraints that we study is
further subdivided into six categories:
\begin{itemize}
	\item {\bf Point-point constraints:} coincidence, distance.
	\item {\bf Point-line constraints:} coincidence, distance.
	\item {\bf Point-plane constraints:} coincidence, distance.
	\item {\bf Line-line constraints:} parallel, perpendicular, fixed angular,
	coincidence, distance.
	\item {\bf Line-plane constraints:} parallel, perpendicular, fixed angular,
	coincidence, distance.
	\item {\bf Plane-plane constraints:} parallel, perpendicular, fixed angular,
	coincidence, distance.
\end{itemize}

We develop the pattern of the {\bf rigidity matrix} and identify a
necessary combinatorial counting property called
{\em nested sparsity}, which is the counterpart of the well-known
Maxwell condition \cite{MaxwellEquil1864} for fixed length rigidity. We also show that
this condition is {\em not} sufficient. However, it can be used 
as a filter for finding candidate rigid components. Finally, we present
an efficient algorithm for nested sparsity, based on pebble 
game algorithms previously developed
for sparse graphs.

\smallskip\noindent{\bf Related work.}
Classical rigidity theory \cite{graver:servatius:rigidityBook:1993}
focuses on distance constraints between
points \cite{laman} or rigid bodies \cite{tay:rigidityMultigraphs-I:1984,white:whiteley:algebraicGeometryFrameworks:1987}. 
{\em Direction}
constraints (where 2 points are required to define a fixed
direction, with respect to a global coordinate system) are
well-understood and arise from parallel redrawing applications
 \cite{whiteley:Hypergraph:1989}. 
Motivated by CAD systems, Servatius and Whiteley present a
characterization, which can be viewed as a generalized Laman
counting property, for 2D systems with both length and direction
constraints  \cite{ServatiusWhiteleyCAD1999}.

Work on angular constraints has also focused on combinatorial characterization results.
Zhou and Sitharam \cite{zhouThesis06} characterize a large
class of 2D angular constraint systems along with a set of
combinatorial construction rules that maintain generic independence.
Saliola and Whiteley \cite{SaliolaWhiteleyCAD2004} prove that, even in the
plane, the complexity of determining the independence of a set of circle intersection angles
is the same as that of generic bar-and-joint rigidity in 3D.
A full characterization for angular constraints of the nature that
 appear in this paper is further described 
 in \cite{stjohn:streinu:angularRigidityCCCG:2009,lee:PhDThesis:2008}.

Combinatorial {\em sparsity} conditions 
 are intimately tied with rigidity
theory, appearing often as necessary conditions 
(as for bar-and-joint rigidity) and sometimes even as
complete characterizations (as for
2D bar-and-joint and body-and-bar frameworks in arbitrary dimension)
\cite{whiteley:Matroids:1996,laman,tay:rigidityMultigraphs-I:1984}. 
Pebble game algorithms have been developed for 
solving sparsity problems \cite{streinu:lee:pebbleGames:2008,streinu:theran:hypergraphsPebbleGames:2009,graded}. 
These algorithms do not apply, however, to
the so-called $(3,6)$-counting conditions known to be a necessary, but not 
sufficient, condition for 3D bar-and-joint rigidity. In fact, no efficient
algorithm is known for these counts. 

Related work on the constraints studied in this paper 
has appeared in the CAD research community, usually
within the context of decomposition approaches; a survey may be found
in \cite{sitharamSurvey,decompSurvey}. 
In this setting, a geometric constraint
system (GCS) is formulated as an algebraic system of equations. Due to the complexity
of solving such a system, it is traditionally decomposed into structured
sub-systems that can be solved and later recombined to obtain a solution
to the original GCS. In the process of decomposition, approximate notions of
combinatorial rigidity have been used \cite{sitharamDR1,sitharamDR2}.

Results in the CAD literature have observed that
angular constraints exhibit special behavior. 
For the so-called generalized Stewart
platform, \cite{GaoLeiLiaoZhang2005} gives explicit equations that highlight
this distinction.
Gao et al. \cite{gao:lin:zhang:CTreeDecomposition:2006} present a method for analyzing
2D and 3D systems with a restricted set of coincidence, angular and distance constraints.
Both \cite{GaoLeiLiaoZhang2005}
and \cite{gao:lin:zhang:CTreeDecomposition:2006}
treat angular constraints separately, implicitly using  
natural necessary counting conditions to do so. We consider analogous systems 
from the rigidity theory perspective, expressing them infinitesimally
using Grassmann-Cayley algebra;
the shape of the rigidity matrix described in Section \ref{sec:infTheory}
explicitly reveals the distinct treatment of angular constraints.
Grassmann-Cayley, Clifford algebras and geometric algebras often
appears in the context of CAD or geometric theorem proving; see, e.g.,
\cite{LiWu03,serre09}.
Recent work of \cite{riviere09} expresses constraints from a 
similar perspective when providing a foundation for software 
to build a GCS.

Incidence constraints have been studied previously in 
connection with Geometric Theorem Proving \cite{LiWu03,MichelucciS06} 
for projective incidence theorems.
Sitharam et al. \cite{DBLP:journals/ijcga/Sitharam06,sitharam2010JSC,sitharam2010IJCGA} 
formalize the question of obtaining a well-formed
and optimal system of algebraic equations to resolve a collection
 of incident rigid bodies. 
\cite{DBLP:journals/ijcga/Sitharam06} studies ``well-formedness,'' a condition necessary  to avoid
dependent equations, and
a new, underlying matroid whose independent sets capture this.
A combinatorial measure of
algebraic complexity of the system of equations is described in \cite{sitharam2010JSC}, and another underlying
matroid is used to optimize this measure. In \cite{sitharam2010IJCGA}, it is shown how
to reconcile the independent sets of the prvious two matroids to obtain
an optimal, well-formed system.

\smallskip\noindent{\bf Structure.}
Section \ref{sec:prelims} gives a brief overview of the required mathematical
background. Section \ref{sec:infTheory} develops the foundations
for the infinitesimal rigidity theory, providing the basic building
blocks used for each new constraint. 
Each of the full set of constraints is then expressed using these building blocks in 
Section \ref{sec:allConstraints}, resulting in the complete derivation
of the rigidity matrix.
Section \ref{sec:combinatorics} identifies a new combinatorial
property resulting from the structure of the rigidity matrix; this
{\em nested sparsity} condition, while necessary, is shown
not to be sufficient with a counterexample. Section \ref{sec:algs}
presents algorithms for nested sparsity using pebble games as oracles. 
Finally, Section \ref{sec:conclusions} discusses extensions, applications
and future directions.

\section{Preliminaries}
\label{sec:prelims}

Our results rely on the same mathematical background as 
 the work on body-and-bar rigidity by 
Tay \cite{tay:rigidityMultigraphs-I:1984} and 
White and Whiteley \cite{white:whiteley:algebraicGeometryFrameworks:1987}.
We use Grassmann-Cayley algebra, Pl\"{u}cker coordinates and instantaneous screw
theory (see, e.g., \cite{white94grassmanncayley,white97geometric} and \cite{selig:Robotics:1996}). 
For self-containment, we briefly introduce notation and basic concepts
from the Grassmann-Cayley algebra and its correspondence with instantaneous screws.

\subsection{Terminology and notation.}
\label{sec:termNot}
We restrict ourselves to dimension 3 in this paper; 
{\em 2-tensors} in the Grassmann-Cayley algebra (see, e.g., \cite{white94grassmanncayley,white97geometric})
are identified with vectors in $\Re^6$. 
The Grassmann-Cayley {\em join} operator is represented with $\vee$. 
The join $\vec p \vee \vec q$ of two vectors $\vec p, \vec q \in \Re^4$
is the collection of all 6 minors of the matrix $M$ obtained with $\vec p$
and $\vec q$ as its rows. 
We fix a convention at this point to 
order the minors in a 6-vector as
$(|M_{14}|, |M_{24}|, |M_{34}|, |M_{23}|,-|M_{13}|, |M_{12}|)$\footnote{We remark that other papers
(e.g., \cite{tay:rigidityMultigraphs-I:1984} 
and \cite{white:whiteley:algebraicGeometryFrameworks:1987}), use a different convention
by fixing the order as $(|M_{12}|, |M_{13}|, |M_{14}|, |M_{23}|, |M_{24}|, |M_{34}|)$}.
The dot product of two vectors $u$ and $v$ is denoted $\langle u, v\rangle$.
The {\em star operator $^*$} 
swaps the first and last 3 coordinates of a 6-vector.
If $\vec p \in \Re^3$ and $c \in \Re$, we denote by $(\vec p: c)$ 
the vector of length 4 obtained by appending $c$ to $\vec p$.

\smallskip\noindent{\bf Rigid body motions.} 
The theory of screws was introduced by Ball \cite{BallTheoryScrews00}
as a way of expressing rigid body motion.
Rigid body transformations are associated with
elements of the special Euclidean group $\SE(3)$. 
By Chasles' Theorem from 1830 (see \cite{selig:Robotics:1996}), 
they can also be expressed as {\em screw motions} (see Figure \ref{fig.screw}). It follows
that every instantaneous rigid body motion can be expressed as an {\em instantaneous screw motion}
(see Figure \ref{fig.instantaneousScrew}); for further
details, we refer the reader to a standard text, e.g., page 24 of \cite{selig:Robotics:1996}.
\begin{figure}[h]
\centering \subfloat[Every rigid body motion can be expressed
as a screw motion that includes rotation and translation along
the screw axis.] {\label{fig.screw}
\begin{minipage}[b]{0.25\linewidth}
\centering
\includegraphics[scale=.25]{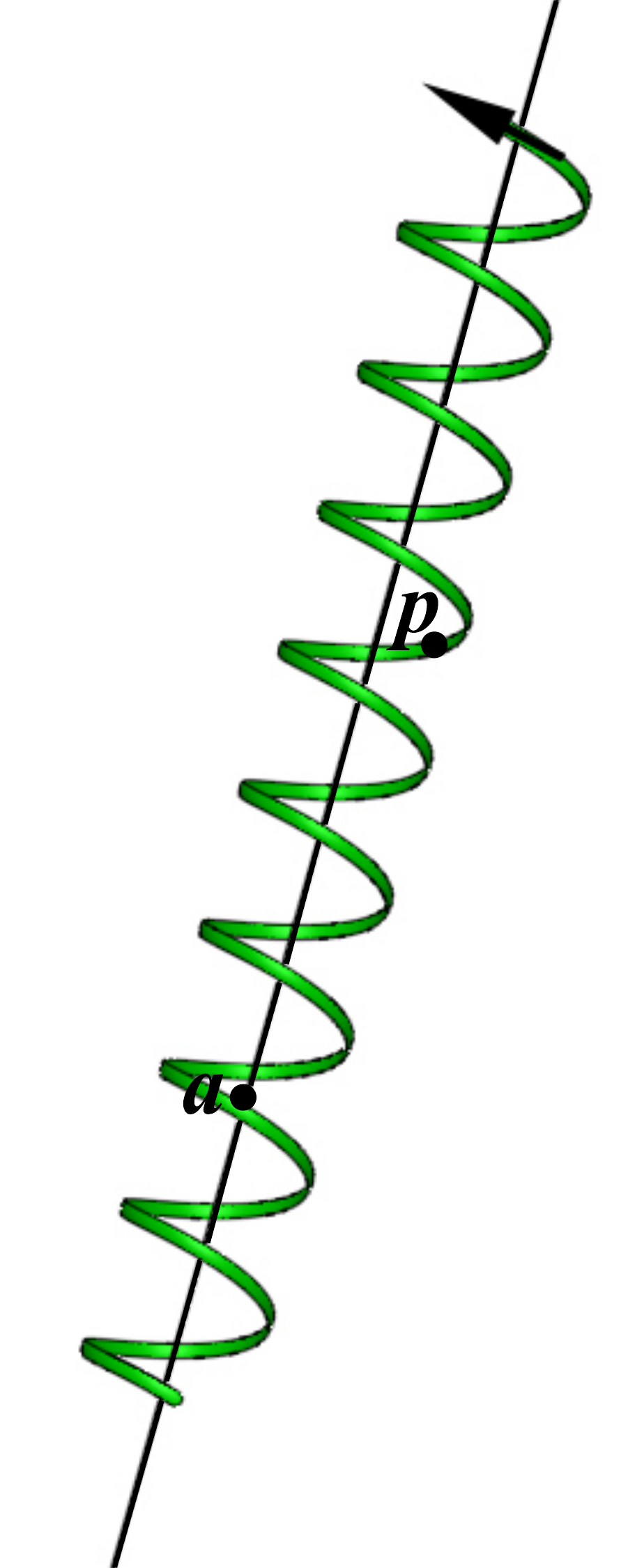}
\end{minipage}}%
\hspace{15mm}
\centering \subfloat[An instantaneous screw has both rotational
and translational components.]{\label{fig.instantaneousScrew}
\begin{minipage}[b]{0.25\linewidth}
\centering\includegraphics[scale=.25]{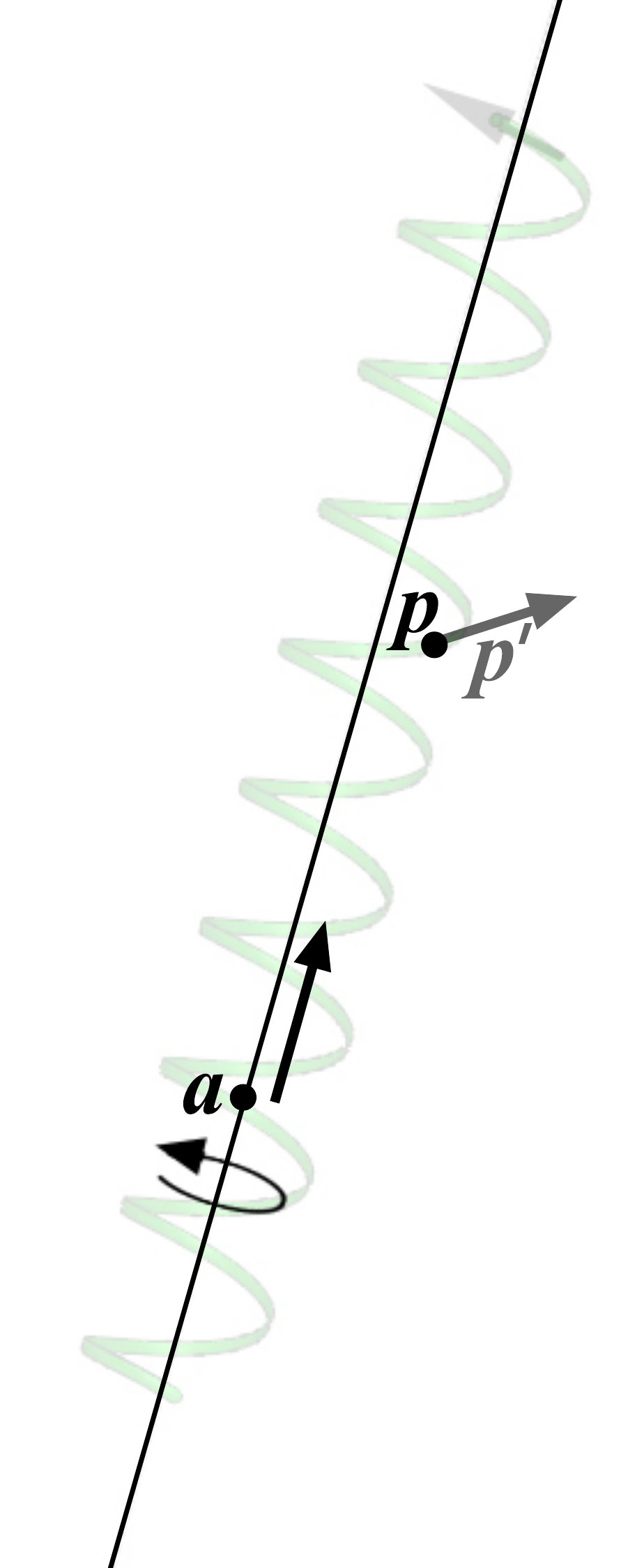}
\end{minipage}}
\caption{A screw motion and its associated instantaneous screw motion. }
\label{fig.screwMotion}
\end{figure}
Both screw motions and instantaneous screw motions are defined with respect to a 
{\em screw axis} along with a rotation about the axis and a translation along it.

In this paper, we are concerned only with instantaneous screw motions, 
which, for brevity, will be referred to as {\em instantaneous screws}.
An instantaneous screw is represented by a 6-vector
$\vec s = (-\vc \omega,  \vec v)$, where $\vc \omega, \vec v \in \Re^3$;
the minus sign in front of $\vc \omega$ is a convenient, technical convention. 
The first component $\vc \omega$ encodes the angular velocity; as a vector,
$\vc \omega$ gives the direction of the screw axis, and its magnitude encodes
the angular speed. The translational velocity can be
computed from $\vc \omega$ and $\vec v$, but we skip the details as they are
not relevant for the rest of the paper. Note that the star operator applied to a screw
 $\vec s = (-\vc \omega, \vec v)$ gives $\vec s^* = (\vec v, -\vc \omega)$.
There is an exact correspondence between 2-tensors
and instantaneous screws. This correspondence is the key to describing
the rigidity matrix of the body-and-cad structures.

\subsection{Body-and-cad structures}
A body-and-cad structure in 3D is composed of $n$ bodies
interconnected by pairwise constraints. Each body $i$ is represented
by a frame of reference, specified by a transformation matrix $T_i$ from
the special Euclidean group $SE(3)$. Each body $i$ additionally has
a set of {\em geometric elements} (points, lines or planes) identified
as attachments for the constraints. 

\smallskip
\noindent{\bf Representation of geometric elements.} 
Each geometric element is rigidly
affixed to a body $i$ and is described with coordinates that
are local with respect to the frame of reference for body $i$.
For ease of analysis, we represent a plane in point-normal form
as the pair $(\vec p, \vec d)$, with $\vec p, \vec d \in \Re^3$, where
$\vec p$ is a point on the plane and $\vec d$ is the normal to the plane. We represent
a line in parametric form, given by the pair $(\vec p, \vec d)$, 
with $\vec p, \vec d \in \Re^3$, where
$\vec p$ is a point on the line and $\vec d$ is its direction.

\subsubsection{Cad graphs}
\label{section:cadgraphs}
We now introduce the {\em cad graph}, our main combinatorial object for body-and-cad rigidity.

To illustrate this concept, consider the following example, depicted in Figure \ref{fig.dice}.
Let $A$ and $B$ be two dice rigidly stacked with
the following constraints: 
(i) ({\bf plane-plane parallel})
 $A$'s Face 1 is parallel to $B$'s Face 1,
(ii) ({\bf plane-plane perpendicular})
 $A$'s Face 2 is perpendicular to $B$'s Face 3,
(iii) ({\bf line-plane distance})
The distance between $A$'s Line 12 (intersection of Faces 1 and 2)
and $B$'s Face 1 is 1, and
(iv) ({\bf point-point coincidence})
$A$'s Corner 236 (the point defined by Faces 2, 3 and 6) is
coincident to $B$'s Corner 123.
\begin{figure}[htb]
  \begin{center}
    \includegraphics[width=.25\linewidth]{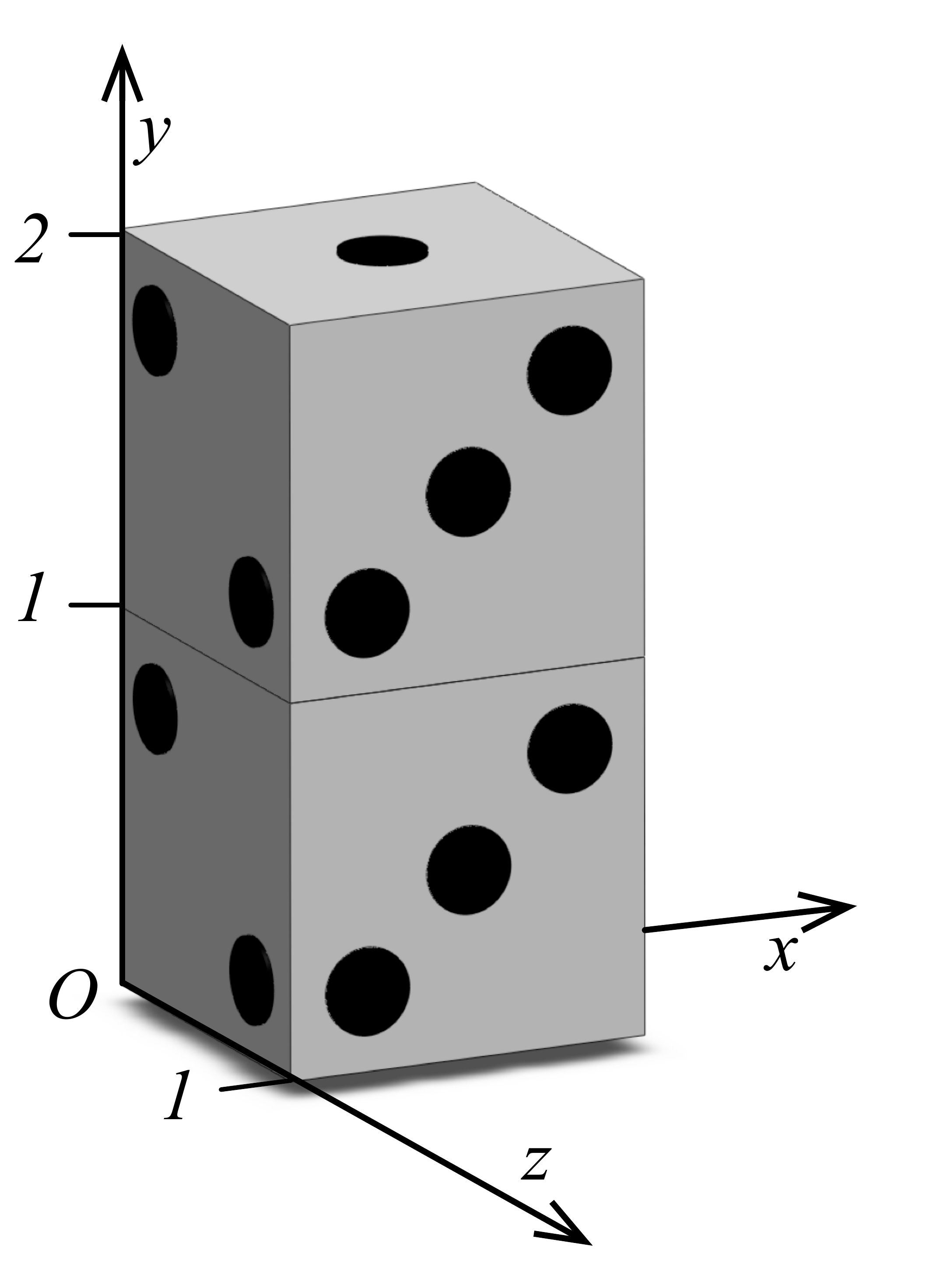}
  \end{center}
  \caption{Two dice rigidly stacked; die $A$ is above $B$. Faces
are labeled by the number of dots, and
face 6 lies at the bottom (opposite 1). The length of an edge is 1.}
  \label{fig.dice}
\end{figure}
These constraints are captured by a graph with two nodes connected by
annotated edges called the {\em cad graph}; see Figure \ref{fig.diceCadGraph}.
\begin{figure}[htb]
  \begin{center}
    \includegraphics[width=.75\linewidth]{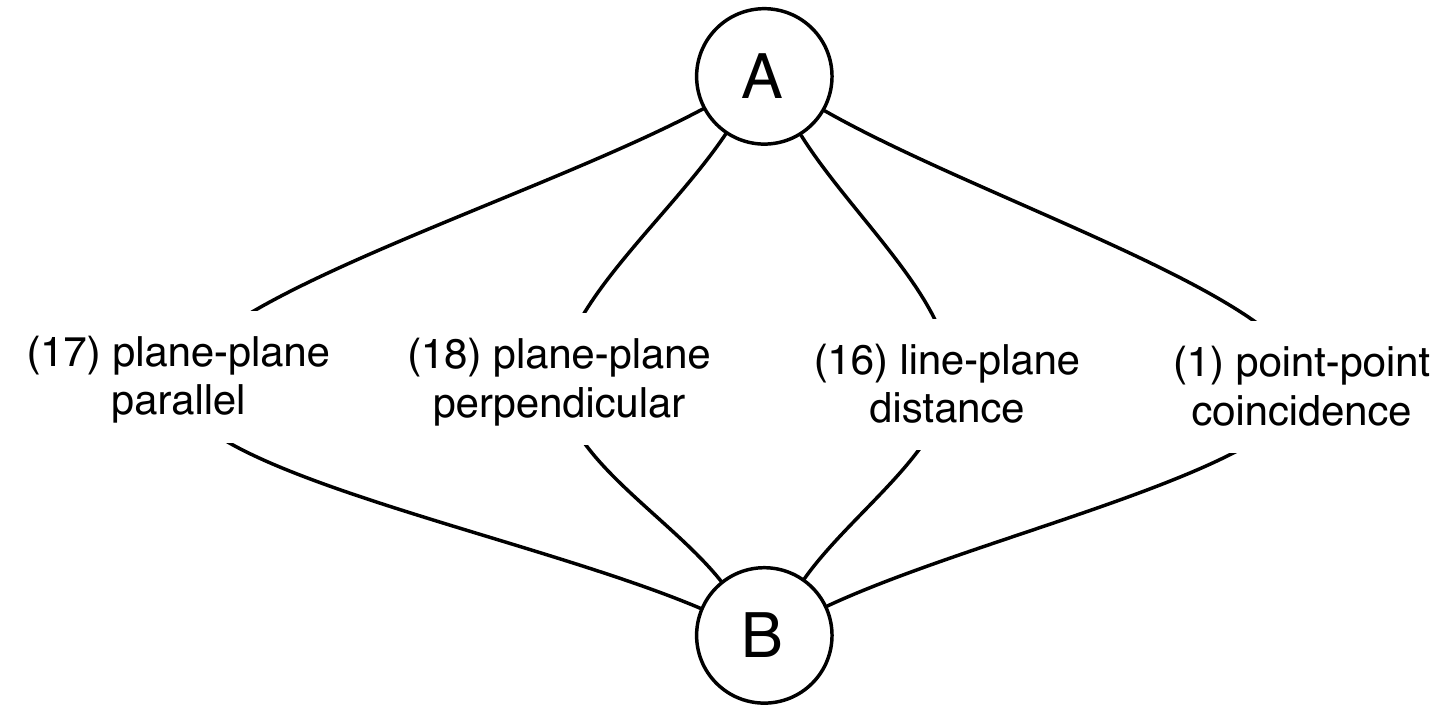}
  \end{center}
  \caption{The cad graph for the example depicted in Figure \ref{fig.dice}.}
  \label{fig.diceCadGraph}
\end{figure}

Formally, a {\em cad graph} $(G,c)$ is a multigraph $G = (V,E)$ together with
an edge coloring function $c: E \rightarrow C$, where $C = \{c_1, c_2, \ldots, c_{21}\}$ consists
of 21 colors corresponding to the full set of cad constraints:
\begin{enumerate}
	\item point-point coincidence
	\item point-point distance
	
	\item point-line coincidence
	\item point-line distance
	
	\item point-plane coincidence
	\item point-plane distance
	
	\item line-line parallel
	\item line-line perpendicular
	\item line-line fixed angular
	\item line-line coincidence
	\item line-line distance
	
	\item line-plane parallel
	\item line-plane perpendicular
	\item line-plane fixed angular
	\item line-plane coincidence
	\item line-plane distance
	
	\item plane-plane parallel
	\item plane-plane perpendicular
	\item plane-plane fixed angular
	\item plane-plane coincidence
	\item plane-plane distance
	
\end{enumerate}
The geometric meaning of these colors will be described in the next section;
the cad graph only captures the type of constraint imposed.

\subsubsection{Body-and-cad frameworks}
A {\em body-and-cad framework} $(G,c,L_1,\ldots, L_{21})$ is a cad graph $(G,c)$ along with 
a family of functions $L_1, \ldots, L_{21}$ describing the geometry of the structure, where
the function $L_i$ captures the constraints corresponding to edges with the $i$th color. 
Let $E_i = \{e \in E | c(e) = c_i\}$ be the set of $c_i$-colored edges.

For example, the function for plane-plane fixed angular constraints 
$L_{19}:E_{19} \rightarrow (\Re^3 \times \Re^3) \times (\Re^3 \times \Re^3) \times \Re$
maps an edge $e=ij$ to a triple $((\vec p_i, \vec d_i), (\vec p_j, \vec d_j), \alpha)$ so that
the planes $(\vec p_i, \vec d_i)$ and $(\vec p_j, \vec d_j)$ affixed to bodies $i$, respectively $j$, 
are constrained to have the angle $\alpha$ between them.
The function for point-point coincidence constraints $L_1:E_1 \rightarrow \Re^3 \times \Re^3$
maps an edge $e=ij$ to a pair of points $(\vec p_i, \vec p_j)$ affixed to bodies 
$i$, respectively $j$, that are constrained
to be coincident.
We will define the complete family of functions when analyzing them individually in Section \ref{sec:allConstraints}.

A {\em realization} $G({\cal T})$ of a body-and-cad framework $(G,c,L_1, \ldots,L_{21})$ 
assigns a tuple of frames ${\cal T} = (T_1, \ldots, T_n)$ for each vertex, satisfying the specified
constraints.
In this paper, we are {\em not} concerned with realization questions.
We will always assume that a body-and-cad framework is given by a concrete realization,
from which the family of functions $L_1,\ldots,L_{21}$ is computed.

\smallskip
\noindent{\bf Body-and-cad rigidity.}
Intuitively, a body-and-cad framework is {\em rigid} if the only 
motions 
respecting the constraints are the {\em trivial} 3D motions
(rotations and translations); otherwise, it is {\em flexible}.
We omit the technical definition, as it falls outside the scope of this paper.

\smallskip
\noindent{\bf Body-and-cad minimal rigidity.}
For classical distance constraints, the concept of {\em minimal rigidity} is defined 
as follows: a structure is minimally rigid if the removal of any constraint
results in a flexible structure. However, in our case, geometric constraints 
may correspond to more than one ``primitive'' constraint. 
Intuitively, a {\em primitive} constraint yields only one row in the
rigidity matrix (formally defined in Section \ref{sec:infTheory}), 
while the body-and-cad constraints may yield several rows.
In our setting, we define minimal rigidity as above, but referring
to the removal of primitive constraints only:  a rigid body-and-cad
structure is {\em minimally rigid} if the removal of any primitive constraint
results in a flexible structure.

We return to the example from Figure \ref{fig.dice} 
to illustrate the subtleties of this concept.
The structure depicted is {\em rigid}. We say the structure
is {\em overconstrained} since it remains rigid even after the removal of
constraint (iii).
The resulting structure is now minimally rigid.
As we will see in Section \ref{sec:infTheory}, constraints (i), (ii) and
(iv) correspond to 6 primitive constraints. Thus, the removal of any
primitive constraint results in a flexible structure.

Now consider stacking
the dice with the following two
constraints: 
(i) ({\bf line-line coincidence}) 
$A$'s Line 26 is coincident to $B$'s Line 12 and
(ii) ({\bf line-line coincidence})
$A$'s Line 36 is coincident to $B$'s Line 13.
This structure is still rigid. While it
becomes flexible after the removal
of either constraint (i) or (ii), it is {\em not} minimally rigid.
As we will see in Section \ref{sec:infTheory}, a line-line coincidence
constraint corresponds to 4 primitive constraints. Thus, this
structure has 8 primitive constraints and is overconstrained.
To give some intuition, note that
a structure composed of 2 rigid bodies has 12 degrees of freedom. Of these, 6 
are trivial, so we may fix body $A$ to factor them out. 
Now consider constraint (i); 
the structure is left with 2 degrees of freedom, as $B$ may slide
along the line and rotate about it. This line-line
coincidence constraint is ``eliminating'' 4 degrees of freedom,
formalized by the 4 rows of the rigidity matrix
developed in Section \ref{sec:infTheory} for the line-line coincidence constraint.

\smallskip
\noindent{\bf Body-and-cad infinitesimal rigidity.}
Infinitesimal rigidity is the linearized version 
of rigidity and is the only type we study in this paper.
Let
 $\vec s =  (\vec s_1, \ldots, \vec s_n) \in (\Re^6)^n$
assign an instantaneous screw $\vec s_i$ to each body $i$ and
let $\vec s^* = (\vec s_1^*, \ldots, \vec s_n^*)$.
The vector $\vec s$ is an {\em infinitesimal motion} of a body-and-cad structure if
it infinitesimally respects the constraints. This can be expressed with the
help of the {\em rigidity matrix}, fully described in 
Section \ref{sec:infTheory}.
An infinitesimal motion is
a vector in the kernel of the rigidity matrix.
The kernel always contains the {\em trivial infinitesimal motions}, defined
as those $\vec s$ with $\vec s_i = \vec s_j$ for all $i$ and $j$.

A body-and-cad framework is {\em infinitesimally rigid} if the only infinitesimal
motions are trivial; otherwise, it is {\em infinitesimally flexible}.

\smallskip
\noindent{\bf Remarks.}
To develop the rigidity theory for a new model, three steps must be accomplished. 
\begin{enumerate}
\item 
{\em Algebraic theory.} 
Formulate the rigidity concept in algebraic terms,
resulting in an algebraic variety. 
\item 
{\em Infinitesimal theory.} 
Analyze the local behavior at some point on the
algebraic variety. This reduces to the study of a {\em rigidity matrix}.
\item 
{\em Combinatorial rigidity.} Seek a combinatorial characterization 
of minimal rigidity in terms of properties of an underlying graph structure. This is usually
derived from properties of the rigidity matrix at a {\em generic} point on the algebraic variety.
\end{enumerate}
In this paper, we 
directly formulate the
infinitesimal rigidity theory for body-and-cad structures
and identify combinatorial properties for the generic case.

To summarize, a {\em cad graph} is an edge-colored multigraph that captures the 
body-and-cad combinatorics, and a {\em body-and-cad framework} captures 
the geometry of the structure.
As we develop the analysis of these concepts, an 
additional combinatorial object called the {\em primitive cad graph} will
be associated to the cad graph. This is a multigraph with red or black
edges, which captures certain combinatorial properties of infinitesimal
body-and-cad rigidity.
\section{Foundations of infinitesimal theory}
\label{sec:infTheory}
The example from the previous section exposes some of the subtleties
encountered with body-and-cad constraints that are not found
when considering classical distance constraints.
We introduce two new concepts to simplify the analysis: {\em primitive angular}
and {\em blind} constraints. We then define, as building blocks, 4 {\em basic}
angular and blind constraints and develop their infinitesimal theory.
All 21 body-and-cad constraints can be studied using
these {\bf building blocks}, leading to the body-and-cad rigidity matrix.

\subsection{Primitive constraints}
A {\em primitive}
constraint is one that may restrict at most one degree of freedom.
For example, a {\bf point-point distance} (bar) constraint is a primitive constraint,
while the {\bf line-line coincidence} constraint from the example in 
the preceding section is not.
We classify primitive constraints into two types:
{\em angular} and {\em blind}; 
as the theory is developed, it will become more clear why these classifications
are appropriate, as they correspond to constraints demonstrating different algebraic
behaviors.
 
A rigid body in 3D has 6 degrees of freedom,
3 of which are rotational and 3 of which are translational. A
 {\em primitive angular}
constraint may restrict only a rotational degree of freedom, whereas a 
{\em primitive blind}
constraint may restrict either a rotational or a translational degree of freedom.
For instance, a {\bf line-line perpendicular} constraint is a primitive
angular constraint as it may restrict at most one rotational degree of freedom.
A {\bf point-point distance} (bar) constraint is a primitive blind constraint
as it may restrict at most one rotational or translational degree of freedom.
We will associate a set of primitive angular and a set of primitive blind
constraints with each body-and-cad constraint.

\subsection{Rigidity matrix}
The rigidity matrix $R$ for a body-and-cad structure
has 6 columns for each body $i$, corresponding to 
the components of the instantaneous screw $\vec s_i$, as was
done for the original body-and-bar rigidity matrix\footnote{The starred version $\vec s_i^*$ 
(see Section \ref{sec:termNot}) will be used to conveniently order
the columns of the rigidity matrix.}.
There
is a row for each primitive constraint associated to the
original body-and-cad structure. A primitive angular constraint results in 
a row containing
zero entries in the first 3 columns for each body, while
a primitive blind constraint may 
have non-zero entries in any of the 6 columns for each body.
In the schematic below, gray cells indicate potentially non-zero entries, and
red cells highlight the zero entries for angular constraints.

Since the trivial motions corresponding
to the 3D rigid motions are necessarily
in the kernel of $R$, the maximum rank of $R$ is $6n-6$. By definition,
a structure is {\em infinitesimally rigid} if its 
rigidity matrix has rank exactly $6n-6$.

\begin{center}
\begin{tabular}{ccccccccc}
 & \multicolumn{2}{c}{$\vec s_1^*$} & & \multicolumn{2}{c}{$\vec s_i^*$} &  &
\multicolumn{2}{c}{$\vec s_n^*$}\\
\cline{2-3} \cline{5-6} \cline{8-9}
 & \multicolumn{1}{c}{$\vec v_1$} &\multicolumn{1}{c}{-$\vc \omega_1$} &  $\cdots$ 
 & \multicolumn{1}{c}{$\vec v_i$} &\multicolumn{1}{c}{-$\vc \omega_i$} & $\cdots$ 
 & \multicolumn{1}{c}{$\vec v_n$} &\multicolumn{1}{c}{-$\vc \omega_n$} \\
\cline{2-9}
	& \multicolumn{1}{|c|}{\cellcolor{red}$\vec 0$} & 
	\multicolumn{1}{c|}{\cellcolor{lightgray}} &  $\cdots$ 
	& \multicolumn{1}{|c|}{\cellcolor{red}$\vec 0$} & 
	\multicolumn{1}{c|}{\cellcolor{lightgray}} & $\cdots$ 
	& \multicolumn{1}{|c|}{\cellcolor{red}$\vec 0$} & 
	\multicolumn{1}{c|}{\cellcolor{lightgray}} \\
Angular
	& \multicolumn{1}{|c|}{\cellcolor{red}$\vdots$} & 
	\multicolumn{1}{c|}{\cellcolor{lightgray}$\vdots$} &  $\cdots$ 
	& \multicolumn{1}{|c|}{\cellcolor{red}$\vdots$} & 
	\multicolumn{1}{c|}{\cellcolor{lightgray}$\vdots$} & $\cdots$ 
	& \multicolumn{1}{|c|}{\cellcolor{red}$\vdots$} & 
	\multicolumn{1}{c|}{\cellcolor{lightgray}$\vdots$} \\
constraints  
	& \multicolumn{1}{|c|}{\cellcolor{red}$\vec 0$} & 
	\multicolumn{1}{c|}{\cellcolor{lightgray}} &  $\cdots$ 
	& \multicolumn{1}{|c|}{\cellcolor{red}$\vec 0$} & 
	\multicolumn{1}{c|}{\cellcolor{lightgray}} & $\cdots$ 
	& \multicolumn{1}{|c|}{\cellcolor{red}$\vec 0$} & 
	\multicolumn{1}{c|}{\cellcolor{lightgray}} \\
	& \multicolumn{1}{|c|}{\cellcolor{red}$\vdots$} & 
	\multicolumn{1}{c|}{\cellcolor{lightgray}$\vdots$} &  $\cdots$ 
	& \multicolumn{1}{|c|}{\cellcolor{red}$\vdots$} & 
	\multicolumn{1}{c|}{\cellcolor{lightgray}$\vdots$} & $\cdots$ 
	& \multicolumn{1}{|c|}{\cellcolor{red}$\vdots$} & 
	\multicolumn{1}{c|}{\cellcolor{lightgray}$\vdots$} \\
	& \multicolumn{1}{|c|}{\cellcolor{red}$\vec 0$} & 
	\multicolumn{1}{c|}{\cellcolor{lightgray}} &  $\cdots$ 
	& \multicolumn{1}{|c|}{\cellcolor{red}$\vec 0$} & 
	\multicolumn{1}{c|}{\cellcolor{lightgray}} & $\cdots$ 
	& \multicolumn{1}{|c|}{\cellcolor{red}$\vec 0$} & 
	\multicolumn{1}{c|}{\cellcolor{lightgray}} \\
	
\cline{2-9}

	& \multicolumn{1}{|c|}{\cellcolor{lightgray}} &
	 	\multicolumn{1}{c|}{\cellcolor{lightgray}} & 
	& \multicolumn{1}{|c|}{\cellcolor{lightgray}} & 
		\multicolumn{1}{c|}{\cellcolor{lightgray}} & 
	& \multicolumn{1}{|c|}{\cellcolor{lightgray}} & 
		\multicolumn{1}{c|}{\cellcolor{lightgray}} \\
		& \multicolumn{1}{|c|}{\cellcolor{lightgray}} &
		 	\multicolumn{1}{c|}{\cellcolor{lightgray}} & 
		& \multicolumn{1}{|c|}{\cellcolor{lightgray}} & 
			\multicolumn{1}{c|}{\cellcolor{lightgray}} & 
		& \multicolumn{1}{|c|}{\cellcolor{lightgray}} & 
			\multicolumn{1}{c|}{\cellcolor{lightgray}} \\
Blind  
	& \multicolumn{1}{|c|}{\cellcolor{lightgray}$\vdots$} &
	 	\multicolumn{1}{c|}{\cellcolor{lightgray}$\vdots$} & 
	$\cdots$ 
	& \multicolumn{1}{|c|}{\cellcolor{lightgray}$\vdots$} & 
		\multicolumn{1}{c|}{\cellcolor{lightgray}$\vdots$} &
		$\cdots$ 
	& \multicolumn{1}{|c|}{\cellcolor{lightgray}$\vdots$} & 
		\multicolumn{1}{c|}{\cellcolor{lightgray}$\vdots$} \\
constraints
		& \multicolumn{1}{|c|}{\cellcolor{lightgray}} &
		 	\multicolumn{1}{c|}{\cellcolor{lightgray}} &  
		& \multicolumn{1}{|c|}{\cellcolor{lightgray}} & 
			\multicolumn{1}{c|}{\cellcolor{lightgray}} & 
		& \multicolumn{1}{|c|}{\cellcolor{lightgray}} & 
			\multicolumn{1}{c|}{\cellcolor{lightgray}} \\

		& \multicolumn{1}{|c|}{\cellcolor{lightgray}} &
		 	\multicolumn{1}{c|}{\cellcolor{lightgray}} &  
		& \multicolumn{1}{|c|}{\cellcolor{lightgray}} & 
			\multicolumn{1}{c|}{\cellcolor{lightgray}} &  
		& \multicolumn{1}{|c|}{\cellcolor{lightgray}} & 
			\multicolumn{1}{c|}{\cellcolor{lightgray}} \\

\cline{2-9}
\end{tabular}
\end{center}

\subsection{Building blocks}
\label{sec:buildingBlocks}
We now define 4 very specific {\bf basic} angular and blind constraints 
(2 of each) and develop the infinitesimal theory for them. 
Everything in
Section \ref{sec:allConstraints}
is derived from 
these basic {\bf building blocks}. The material
presented here is the most technical part of our paper. 

\noindent{\bf Angular building blocks} \\
All body-and-cad angular constraints can be 
reduced to the following basic constraints 
between pairs of lines: \\
\noindent {\em (i) basic line-line non-parallel fixed angular}, and \\ 
\noindent {\em (ii) basic line-line parallel}.

\smallskip

\noindent{\em (i) Basic line-line non-parallel fixed angular.} 
A line-line non-parallel angular constraint between bodies $i$ and $j$ is
defined by identifying a pair of non-parallel lines, each rigidly affixed to one body,
and fixing the angle between them.
Let $\vec d_i$ 
and $\vec d_j$ 
be the directions of the lines affixed to bodies
$i$ and $j$, respectively. Then the constraint is
infinitesimally maintained if the axis of the relative screw $\vec s_i - \vec s_j$
is in a direction lying in the plane determined by $\vec d_i$ and $\vec d_j$, i.e., 
\begin{equation*}
	\left\langle (\vc \omega_i - \vc \omega_j), \vec d_i \times\vec  d_j \right\rangle = 0
\end{equation*}
Since $-\vc \omega_i$ is composed of the last three coordinates of $\vec s_i^*$, this is equivalent to
\begin{equation}
	\label{eq:nonParallelAngleInf}
	\left\langle (\vec s_i^* - \vec s_j^*), ((0,0,0), \vec d_j \times\vec  d_i) \right\rangle = 0
\end{equation}
This corresponds to one row in the rigidity matrix:
\begin{center}
\begin{tabular}{ccccccc}
 & \multicolumn{2}{c}{$\vec s_i^*$} & & \multicolumn{2}{c}{$\vec s_j^*$} & \\
\cline{2-3} \cline{5-6}
$\cdots$ & $\vec v_i$& $-\vc\omega_i$ &
$\cdots$ & $\vec v_j$& $-\vc\omega_j$ & $\cdots$ \\
\hline
\multicolumn{1}{|c|}{$\rmfill$} & 
	\multicolumn{1}{c|}{\cellcolor{red}$\vec 0$} & 
	\multicolumn{1}{c|}{\cellcolor{lightgray}$\vec d_j \times \vec d_i$}&
	\multicolumn{1}{c|}{$\rmfill$} &
	\multicolumn{1}{c|}{\cellcolor{red}$\vec 0$} & 
	\multicolumn{1}{c|}{\cellcolor{lightgray}$\vec d_i \times \vec d_j$}&	
	\multicolumn{1}{c|}{$\rmfill$} \\
\hline
\end{tabular}
\end{center}

\noindent{\em (ii) Basic line-line parallel constraint.}  
A line-line parallel constraint between bodies $i$ and $j$ is
defined by identifying a pair of parallel lines, each rigidly affixed to one body,
and restricting them to remain parallel.
Let $\vec d = (a, b, c)$ be 
the direction of the parallel lines. Then the constraint
is infinitesimally maintained if the axis of the relative screw $\vec s_i - \vec s_j$ 
is in the same direction as $\vec d$, i.e., $(\vc \omega_i - \vc \omega_j) = \alpha \vec d$, 
for some scalar $\alpha$. This can be expressed by the following two
linear equations, where $\vc \omega = \vc \omega_i - \vc \omega_j = (\omega^x, \omega^y, \omega^z)$:
\begin{eqnarray*}
	\omega^x b - \omega^y a &=&0 \\
	\omega^y c - \omega^z b &=&0
\end{eqnarray*}
Since $-\vc \omega_i$ is composed of the last three coordinates of $\vec s_i^*$, these are equivalent to
\begin{eqnarray}
	\label{eq:parallelAngleInf1}
	\left\langle (\vec s_i^* - \vec s_j^*), (0,0,0,-b, a, 0)\right\rangle \\
	\label{eq:parallelAngleInf2}
	\left\langle (\vec s_i^* - \vec s_j^*), (0,0,0,0,-c,b)\right\rangle
\end{eqnarray}
and correspond to {\bf two rows} in the rigidity matrix:
\begin{center}
\begin{tabular}{ccccccc}
 & \multicolumn{2}{c}{$\vec s_i^*$} & & \multicolumn{2}{c}{$\vec s_j^*$} & \\
\cline{2-3} \cline{5-6}
$\cdots$ & $\vec v_i$& $-\vc\omega_i$ &
$\cdots$ & $\vec v_j$& $-\vc\omega_j$ & $\cdots$ \\
\hline
\multicolumn{1}{|c|}{$\rmfill$} & 
\multicolumn{1}{c|}{\cellcolor{red}$\vec 0$} & 
\multicolumn{1}{c|}{\cellcolor{lightgray}$(-b, a, 0)$}&
	\multicolumn{1}{c|}{$\rmfill$} &
	\multicolumn{1}{c|}{\cellcolor{red}$\vec 0$} & 
	\multicolumn{1}{c|}{\cellcolor{lightgray}$(b, -a, 0)$}&	
	\multicolumn{1}{c|}{$\rmfill$} \\
\hline
\multicolumn{1}{|c|}{$\rmfill$} & 
\multicolumn{1}{c|}{\cellcolor{red}$\vec 0$} & 
	\multicolumn{1}{c|}{\cellcolor{lightgray}$(0, -c, b)$}&
	\multicolumn{1}{c|}{$\rmfill$} &
	\multicolumn{1}{c|}{\cellcolor{red}$\vec 0$} & 
	\multicolumn{1}{c|}{\cellcolor{lightgray}$(0, c, -b)$}&	
	\multicolumn{1}{c|}{$\rmfill$} \\
\hline
\end{tabular}
\end{center}

\noindent{\bf Blind building blocks} \\
Let $\vec p$ be a point and $\vec p'$ its instantaneous velocity resulting from the
instantaneous screw 
$\vec s \in \Re^6$. Let $\vec c \in \Re^3$ be 
an arbitrary direction vector.
We either constrain the
velocity $\vec p'$ to be orthogonal or parallel to $\vec c$.
This yields the remaining basic constraints:\\
{\em (iii) basic blind orthogonality} (see Figure \ref{fig.pOrthog}), and \\
{\em (iv) basic blind parallel} (see Figure \ref{fig.pSameDir}). \\
Expressing
both of them becomes straightforward using the following fact:
\begin{fact}
	\label{fact:screwJoinPoints}
	Let $\vec s \in \Re^6$ be an instantaneous screw,
	$\vec p \in \Re^3$ a point and 	
	$\vec p'$ the velocity of $\vec p$ under the screw motion $\vec s$.
	Then 	 
	\begin{equation}
		\vec s \vee (\vec p:1) = (\vec p', -\left\langle \vec p,\vec p'\right\rangle)
	\end{equation}
and, for any $\vec q \in \Re^3$ and $q^w \in \Re$,
	\begin{equation}
		\label{eq:screwJoinPoints}
		\vec s \vee (\vec p:1) \vee (\vec q : q^w)=
		 \langle\vec p', \vec q\rangle - q^w \left\langle \vec p,\vec p'\right\rangle
	\end{equation}
\end{fact}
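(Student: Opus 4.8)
The plan is to verify both identities by a direct coordinate computation, carefully tracking the sign conventions fixed in Section~\ref{sec:termNot}. First I would write the screw in Pl\"ucker coordinates: with $\vec s = (-\vc\omega, \vec v)$, $\vc\omega = (\omega_1,\omega_2,\omega_3)$ and $\vec v = (v_1,v_2,v_3)$, the chosen ordering $(|M_{14}|,|M_{24}|,|M_{34}|,|M_{23}|,-|M_{13}|,|M_{12}|)$ identifies the six entries of $\vec s$ with the $2\times 2$ minors $\ell_{ij}$ of the underlying $2$-tensor, namely $\ell_{14}=-\omega_1$, $\ell_{24}=-\omega_2$, $\ell_{34}=-\omega_3$, $\ell_{23}=v_1$, $\ell_{13}=-v_2$, $\ell_{12}=v_3$. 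I would also record the velocity of $\vec p$ under the screw from instantaneous screw theory, $\vec p' = \vec v + \vc\omega\times\vec p$. Since a screw of nonzero pitch is a $2$-tensor that need not be decomposable (i.e.\ need not be a pure line), I would define the join $\vec s\vee(\vec p:1)$ through the bilinear extension of the Grassmann--Cayley join rather than assuming $\vec s = \vec a\vee\vec b$; the resulting coordinate formulas are the same bilinear expressions in the $\ell_{ij}$ one obtains from a $3\times 3$ minor expansion, so no generality is lost.

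Next I would compute $\vec s\vee(\vec p:1)$ one coordinate at a time. The join of a $2$-tensor with the point $(\vec p:1)=(p_1,p_2,p_3,1)$ is a $3$-extensor, i.e.\ a plane, whose four homogeneous coordinates are the signed combinations of the products $\ell_{ij}\,r_k$ dictated by the wedge, with alternating signs $(-1)^{k+1}$ on the coordinate obtained by omitting index $k$. Substituting the values of the $\ell_{ij}$ above, the first three coordinates assemble exactly into $\vec v + \vc\omega\times\vec p$, which is $\vec p'$, while the fourth collapses to $-\langle\vec p,\vec v\rangle$. Finally, using the scalar triple product identity $\langle\vec p,\vc\omega\times\vec p\rangle = 0$ gives $-\langle\vec p,\vec v\rangle = -\langle\vec p,\vec v + \vc\omega\times\vec p\rangle = -\langle\vec p,\vec p'\rangle$, so that $\vec s\vee(\vec p:1) = (\vec p',\,-\langle\vec p,\vec p'\rangle)$, establishing the first identity.

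For the second identity I would invoke associativity of the join together with the first identity. The quantity $\vec s\vee(\vec p:1)\vee(\vec q:q^w)$ equals $\big(\vec s\vee(\vec p:1)\big)\vee(\vec q:q^w)$, which is the join of the plane $(\vec p',-\langle\vec p,\vec p'\rangle)$ with the point $(\vec q:q^w)$. Joining a $3$-extensor with a $1$-extensor yields a top-dimensional bracket, which under the identifications of Section~\ref{sec:termNot} is simply the inner product of the two $4$-vectors. Hence $\vec s\vee(\vec p:1)\vee(\vec q:q^w) = \langle(\vec p',-\langle\vec p,\vec p'\rangle),(\vec q:q^w)\rangle = \langle\vec p',\vec q\rangle - q^w\langle\vec p,\vec p'\rangle$, which is the claimed equation~\eqref{eq:screwJoinPoints}.

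I expect the only real difficulty to be bookkeeping: three independent sign conventions must be reconciled so that the coordinates emerge as $\vec p'$ rather than a sign-flipped or permuted version --- the technical minus sign in $\vec s = (-\vc\omega,\vec v)$, the nonstandard Pl\"ucker ordering with its lone minus on $-|M_{13}|$, and the alternating cofactor signs in the join. A smaller subtlety, worth stating explicitly, is that the join formula is being applied to a possibly non-decomposable $2$-tensor; appealing to bilinearity of the wedge product (rather than to a line decomposition of the screw) is what makes the computation valid for screws of arbitrary pitch.
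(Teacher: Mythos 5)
Your proposal is correct, and it follows the same basic computational strategy as the paper (expanding the join in terms of the $2\times 2$ Pl\"ucker minors $\ell_{ij}$ with the paper's sign conventions), but it is more self-contained in one respect: the paper does not actually compute the first identity $\vec s \vee (\vec p:1) = (\vec p', -\langle \vec p,\vec p'\rangle)$ --- it cites Crapo and Whiteley for that formula and only verifies the second identity, via a Laplace expansion of the $4\times 4$ determinant along its last row. You instead derive the first identity directly, checking coordinate by coordinate that the first three entries of the join assemble into $\vec v + \vc\omega\times\vec p$ and that the fourth is $-\langle\vec p,\vec v\rangle = -\langle\vec p,\vec p'\rangle$ via $\langle\vec p,\vc\omega\times\vec p\rangle=0$, and then obtain the second identity from the first by associativity of the join (joining the resulting $3$-extensor with the point $(\vec q:q^w)$ gives the bracket, i.e.\ the signed inner product). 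The two routes to the second identity are equivalent --- your associativity step is exactly what the paper's Laplace expansion establishes --- but your version makes the velocity formula $\vec p' = \vec v + \vc\omega\times\vec p$ and the sign bookkeeping explicit rather than imported, which is a small gain in transparency. Your handling of non-decomposable screws by bilinearity up front is also cleaner than the paper's closing remark that the indecomposable case is ``a simple extension,'' though the content is the same.
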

%

\begin{proof}
In the following, 	
superscripts $x, y, z, w$ denote the 
components of a vector in $\Re^4$.
The 
minor of a $3 \times 4$ matrix $A$ 
determined by columns $i$, $j$ and $k$ is denoted $|A_{ijk}|$.

If $\vec s$ is a decomposable 2-tensor (a 2-extensor), then
its components are the minors of a $2 \times 4$ matrix $M$; see, e.g.,
\cite{white94grassmanncayley,white97geometric} for a standard review of 
2-tensors in Grassmann-Cayley algebra.
%
Let $A$ be the $3 \times 4$ matrix obtained by appending $(\vec p:1)$
to the bottom of $M$. The join $\vec s \vee (\vec p : 1)$ is the collection of the four minors of
$A$. We fix the convention that 
$\vec s \vee (\vec p : 1) = (|A_{234}|, -|A_{134}|, |A_{124}|, -|A_{123}|)$.
Then 
$$\vec s \vee (\vec p:1) \vee (\vec q:q^w) = \left|\begin{array}{cccc}
\multicolumn{4}{c}{M} \\
p^x & p^y & p^z & 1 \\
q^x & q^y & q^z & q^w
\end{array}\right|$$
Performing a Laplace expansion along the 4th 
row of the matrix yields
$q^x |A_{234}| - q^y|A_{134}| + q^z|A_{124}| - q^w|A_{123}|
= q^x (\vec s \vee (\vec p:1))^x + q^y (\vec s \vee (\vec p:1))^y 
	+ q^z(\vec s \vee (\vec p:1))^z + q^w (\vec s \vee (\vec p:1))^w$.
	
Crapo and Whiteley \cite{CrapoWhiteley82} derived that
$\vec s \vee (\vec p:1) = (\vec p': -\left\langle \vec p,\vec p'\right\rangle)$.
Applying it, we obtain our desired result.
The derivation when $\vec s$ is indecomposable (the sum of two 2-extensors)
is a simple extension obtained by working with the two 2-extensors simultaneously. 
\end{proof}

%
\begin{figure}[bth]
\centering \subfloat[{\em Orthogonality:} 
constrains the instantaneous velocity $\vec p'$ of the point $\vec p$ to
be orthogonal to a direction $\vec c$. Then $\vec p'$ 
must lie in the plane $(\vec p, \vec c)$.] {\label{fig.pOrthog}
\begin{minipage}[b]{0.45\linewidth}
\centering
\includegraphics[scale=.3]{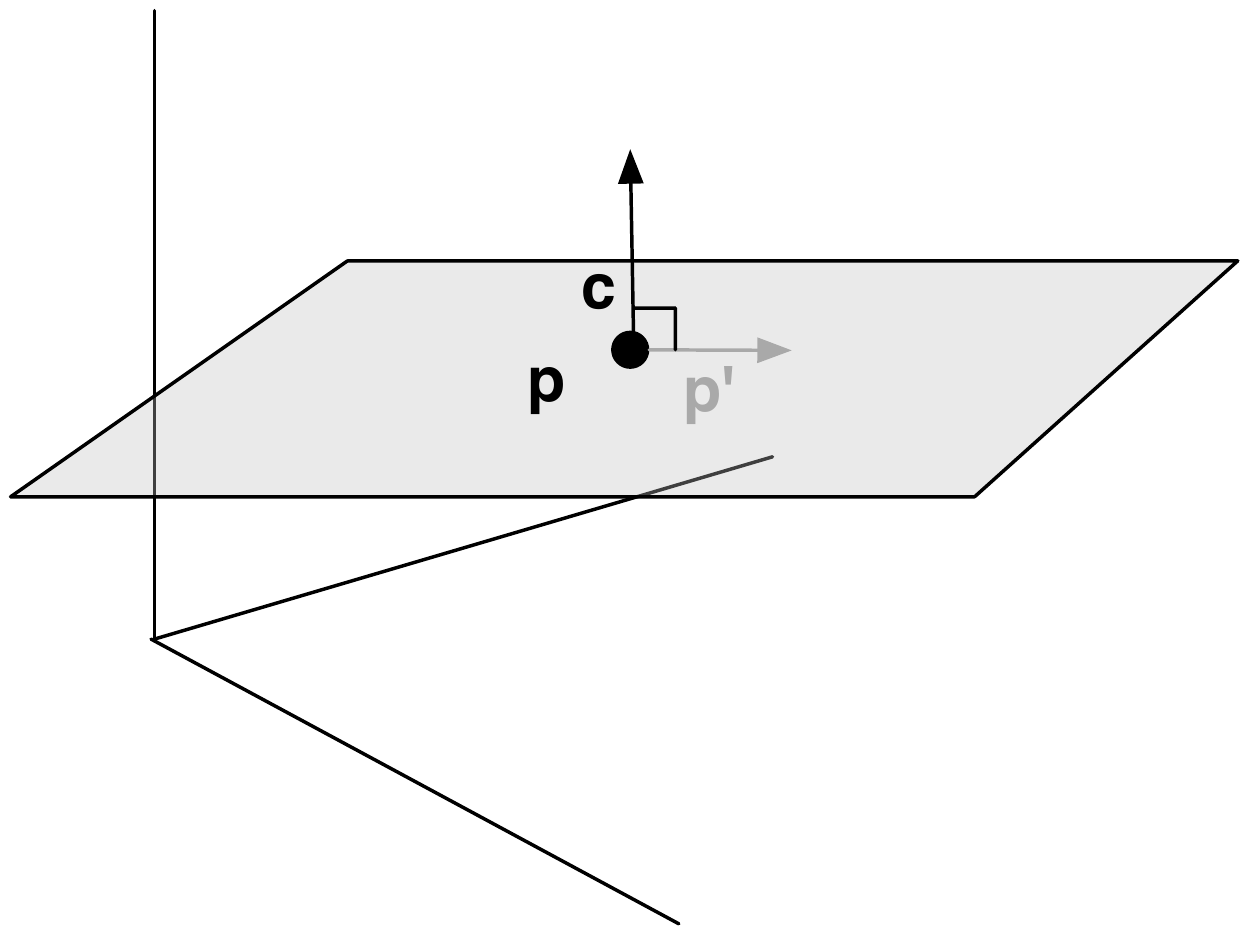}
\end{minipage}}%
\hspace{2mm}
\centering \subfloat[{\em Parallel:} constrains
the instantaneous velocity $\vec p'$ of a point $\vec p$ to
lie in the same direction as a vector $\vec c$.]{\label{fig.pSameDir}
\begin{minipage}[b]{0.45\linewidth}
\centering\includegraphics[scale=.3]{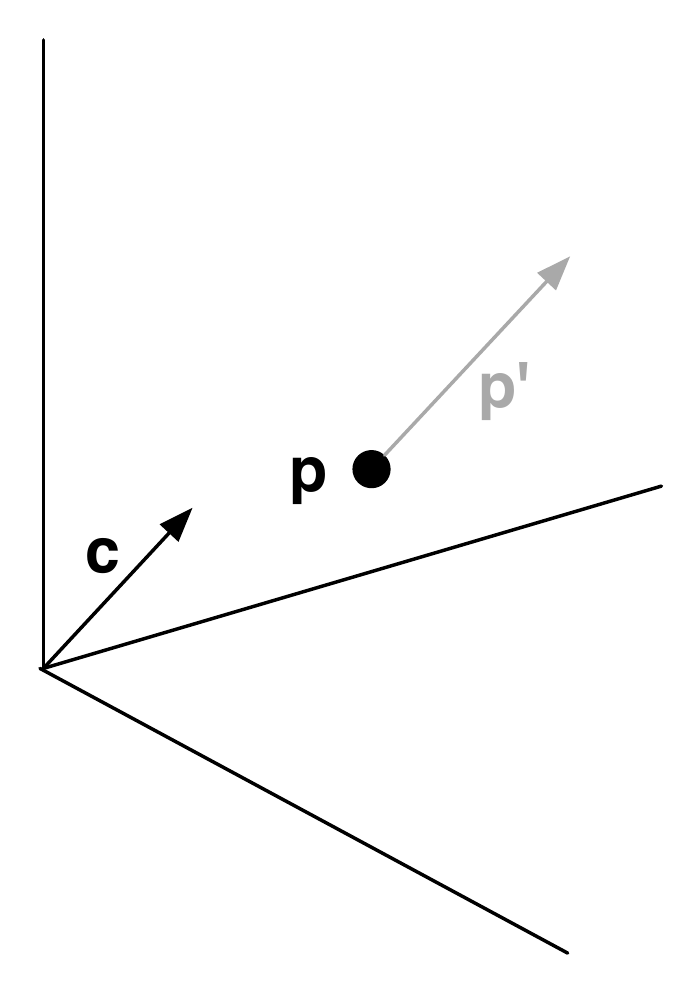}
\end{minipage}}
\caption{Basic blind geometric constraints. }
\label{fig.basicGeomCon}
\end{figure}

\noindent{\em (iii) Basic blind orthogonality constraint} \\
This constrains the instantaneous velocity $\vec p'$ of the point $\vec p$ to
be orthogonal to a direction $\vec c$.
To express this
(see Figure \ref{fig.pOrthog}), we simply substitute $\vec q = \vec c$ and
$q^w = 0$ into Equation \ref{eq:screwJoinPoints}. 
Then $\left\langle \vec p',\vec c\right\rangle = 0$ if and only if
\begin{equation*}
	\vec s \vee (\vec p:1) \vee (\vec c:0) = 0
\end{equation*}
if and only if
\begin{equation}
	\label{eq:velOrthogToVec}
	\left\langle \vec s^*, (\vec p:1) \vee (\vec c:0)\right\rangle = 0
\end{equation}

\noindent{\em (iv) Basic blind parallel constraint} \\
This constrains the instantaneous velocity $\vec p'$ of the point $\vec p$ to lie
in the same direction as a direction $\vec c$.
To express this
(see Figure \ref{fig.pSameDir}), we apply Equation \ref{eq:screwJoinPoints} twice by
substituting
$\vec q = (c^y, -c^x, 0)$ and $q^w = 0$ first, then $\vec q = (0, c^z, -c^y)$
and $q^w=0$.
We obtain that $\vec p' = \alpha \vec c$ for some $\alpha \in \Re$ if and only if
\begin{eqnarray*}
	\vec s \vee (\vec p:1) \vee (c^y, -c^x, 0, 0) &=&0\\
	\vec s \vee (\vec p:1) \vee (0, c^z, -c^y, 0) &=&0
\end{eqnarray*}
if and only if
\begin{eqnarray}
		\label{eq:velSameDirAsVec1}
		\left\langle \vec s^*, (\vec p:1) \vee (c^y, -c^x, 0, 0)\right\rangle &=&0\\
		\label{eq:velSameDirAsVec2}
		\left\langle \vec s^*, (\vec p:1) \vee (0, c^z, -c^y, 0)\right\rangle &=&0
\end{eqnarray}

\section{Infinitesimal theory for body-and-cad constraints}
\label{sec:allConstraints}
We use the four basic building blocks just presented
to complete the development of the infinitesimal theory.
In this section, we present the rows of the
rigidity matrix associated with each of the 21 body-and-cad constraints.
In all figures, body $i$ is represented by the green tetrahedron and body
$j$ by the purple cube.

\subsection{Angular constraints}
Angular constraints may be parallel, perpendicular or arbitrary fixed angular
constraints; see Figures \ref{fig.lineLineAngle}--\ref{fig.planePlaneAngle} 
for depictions of line-line, line-plane, and plane-plane angular constraints.
\begin{figure}[h]
\centering \subfloat[Parallel.] {\label{fig.lineLineParallel}
\begin{minipage}[b]{0.3\linewidth}
\centering
\includegraphics[height=1.75in]{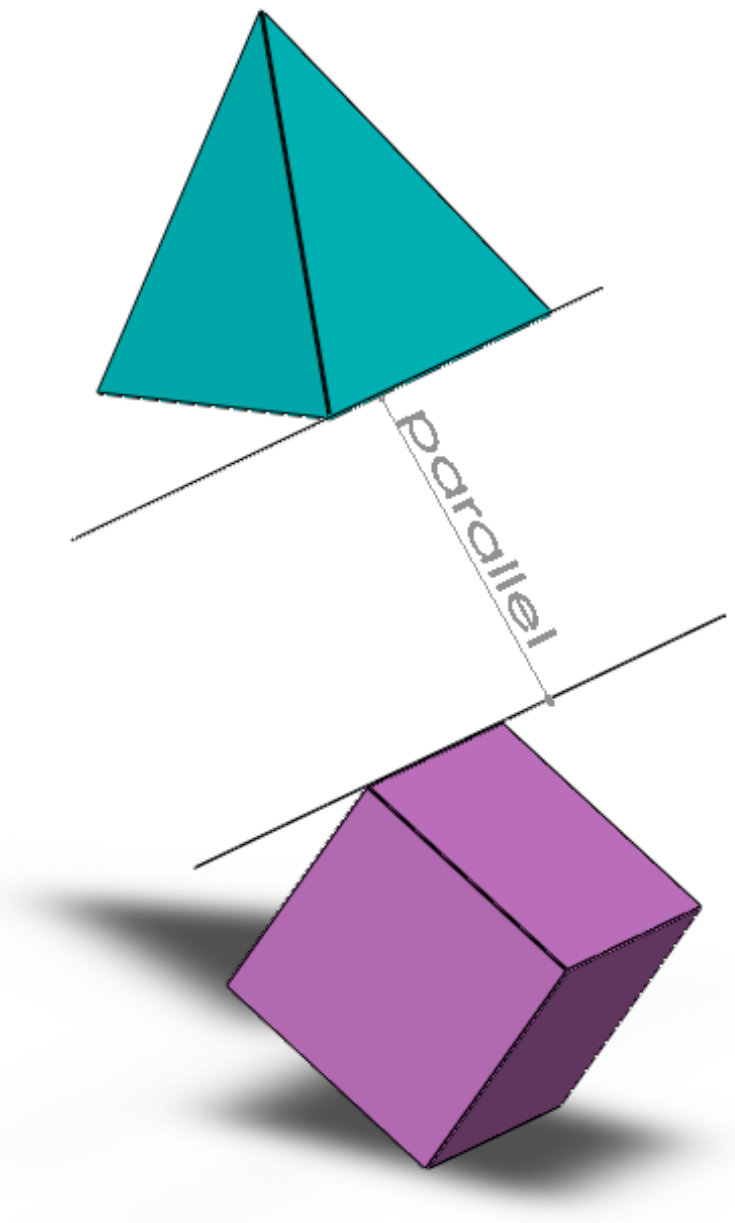}
\end{minipage}}%
\hspace{2mm}
\centering \subfloat[Perpendicular.]{\label{fig.lineLinePerp}
\begin{minipage}[b]{0.3\linewidth}
	\centering\includegraphics[height=1.25in]{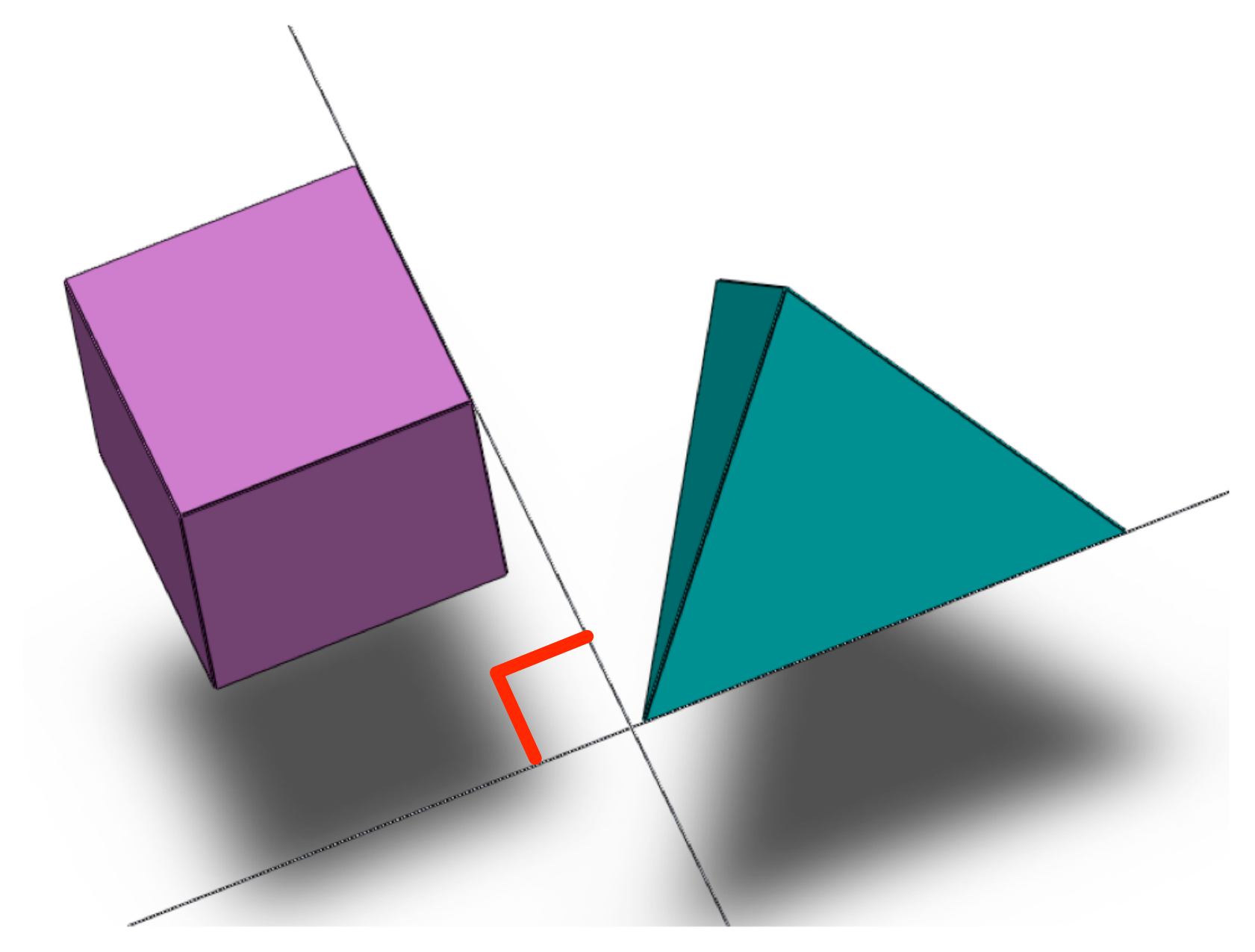}
\end{minipage}}
\hspace{2mm}
\centering \subfloat[Arbitrary fixed angular.]{\label{fig.lineLineArbAngle}
\begin{minipage}[b]{0.3\linewidth}
	\centering\includegraphics[height=1.75in]{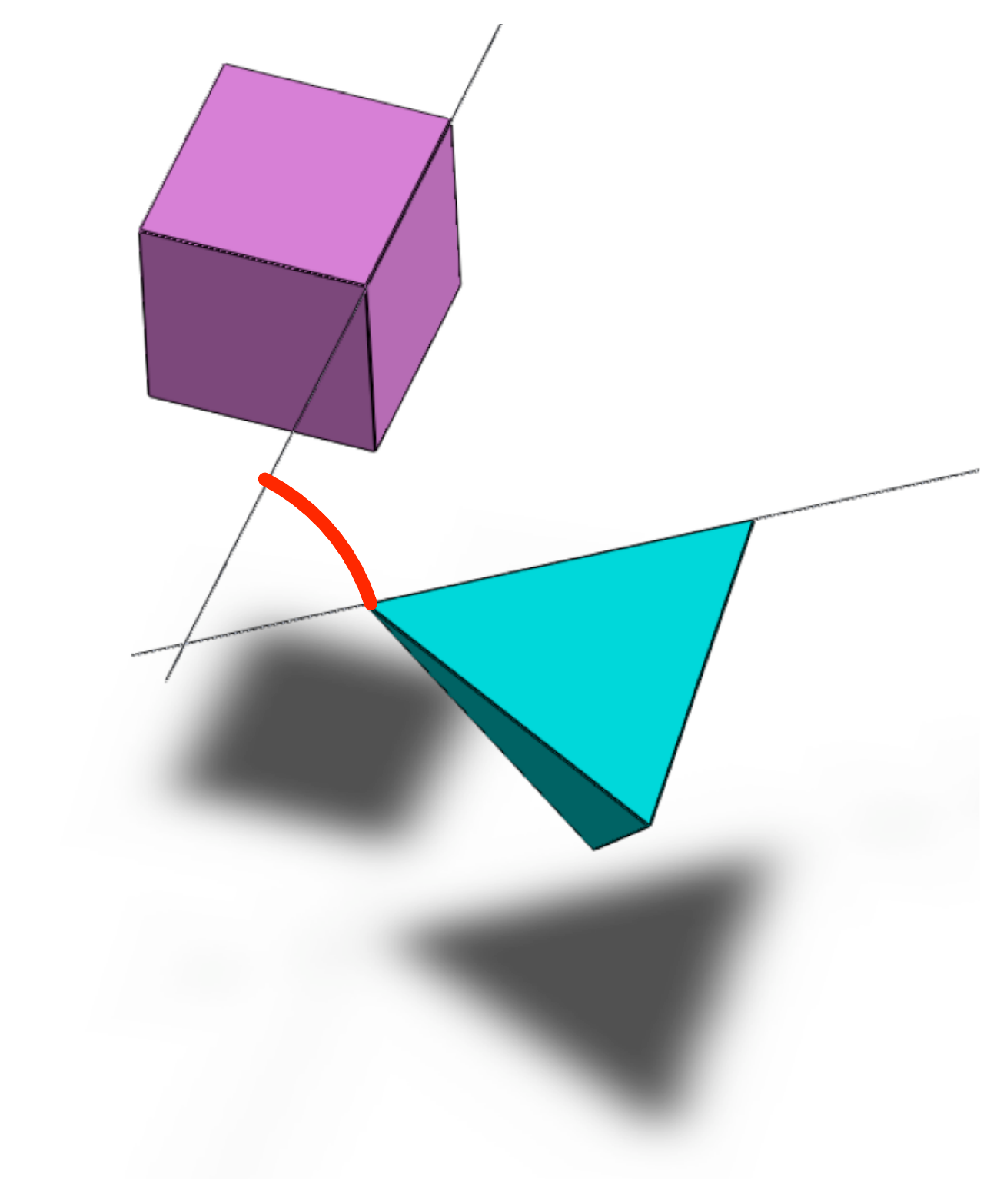}
\end{minipage}}
\caption{Line-line angular constraints. }
\label{fig.lineLineAngle}
\end{figure}
\begin{figure}[h]
\centering \subfloat[Parallel.] {\label{fig.linePlaneParallel}
\begin{minipage}[b]{0.3\linewidth}
\centering
\includegraphics[width=1.2\linewidth]{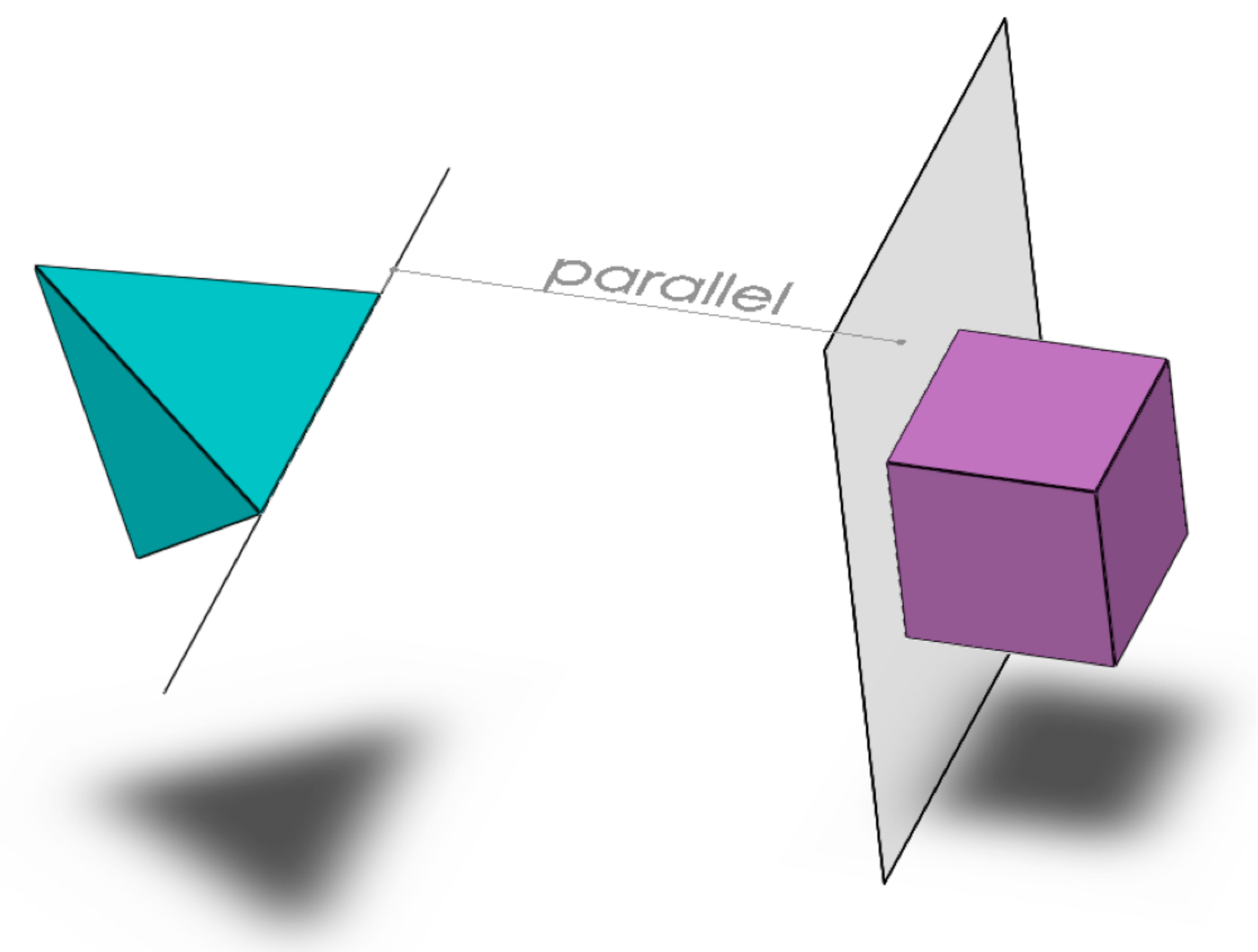}
\end{minipage}}%
\hspace{2mm}
\centering \subfloat[Perpendicular.]{\label{fig.linePlanePerp}
\begin{minipage}[b]{0.3\linewidth}
\centering\includegraphics[width=\linewidth]{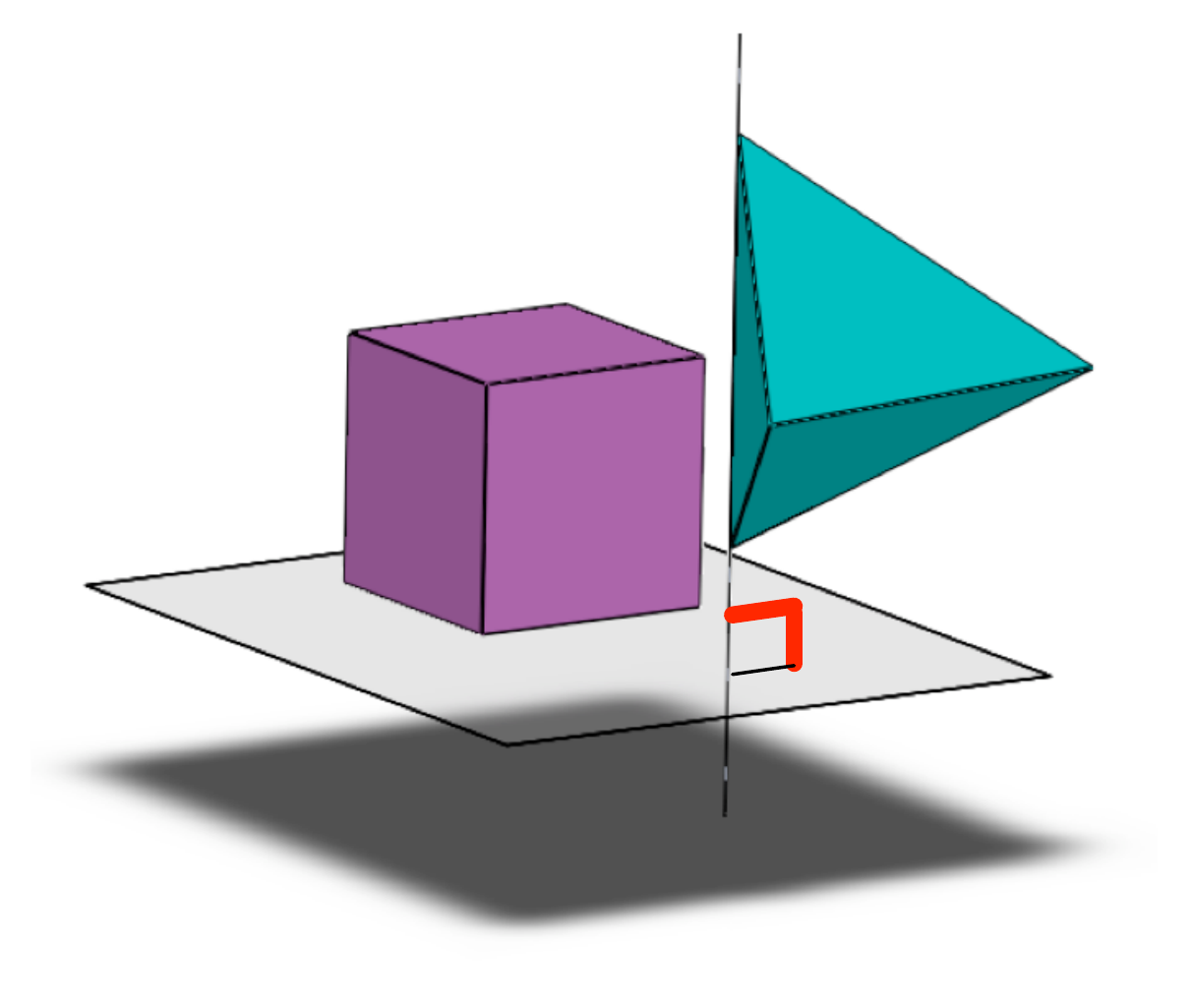}
\end{minipage}}
\centering \subfloat[Arbitrary fixed angular.] {\label{fig.linePlaneArbAngle}
\begin{minipage}[b]{0.3\linewidth}
\centering
\includegraphics[width=\linewidth]{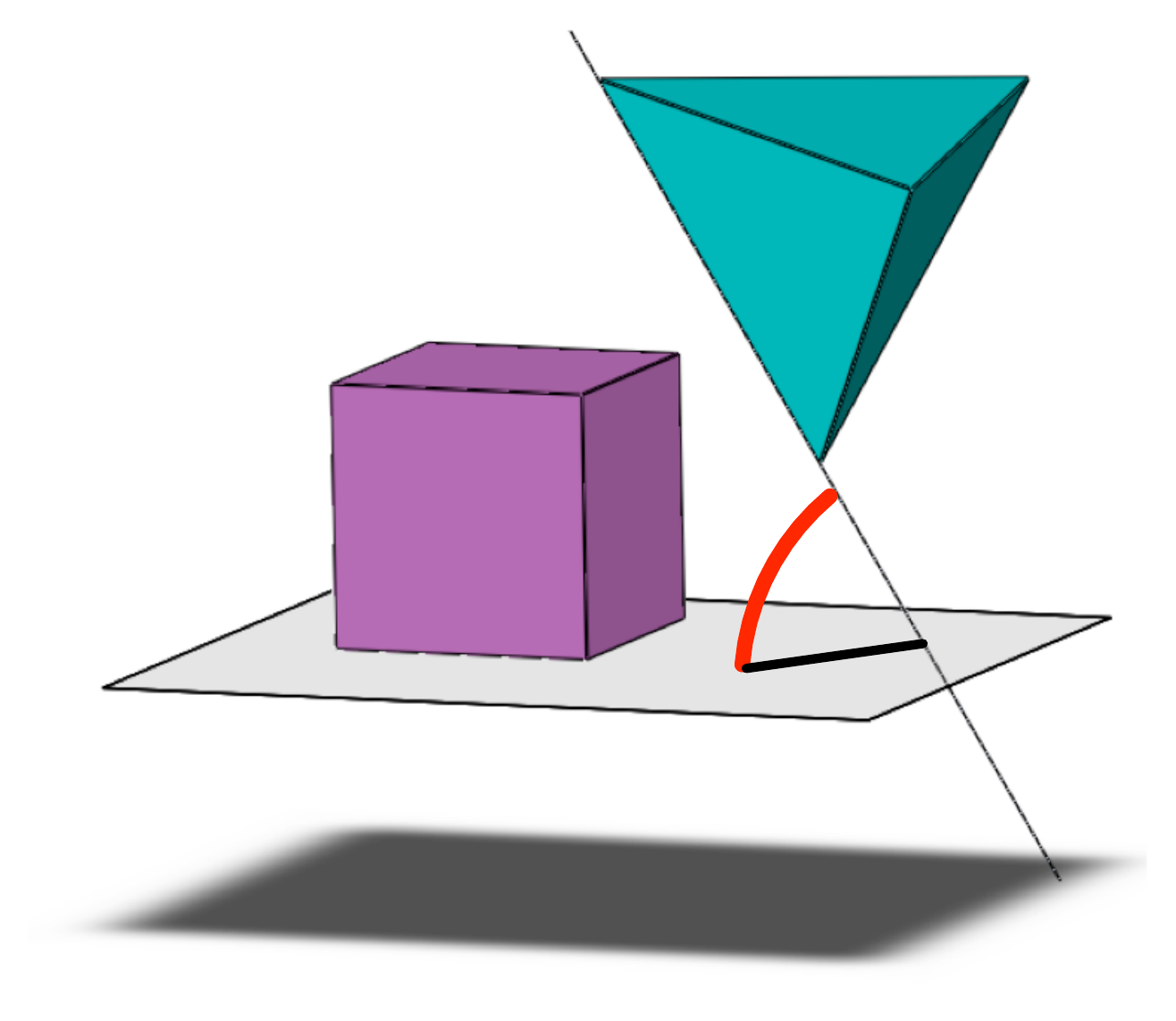}
\end{minipage}}%
\hspace{2mm}

\caption{Line-plane angular constraints. }
\label{fig.linePlaneAngle}
\end{figure}

\begin{figure}[h]
\centering \subfloat[Parallel.] {\label{fig.planePlaneParallel}
\begin{minipage}[b]{0.3\linewidth}
\centering
\includegraphics[width=1.2\linewidth]{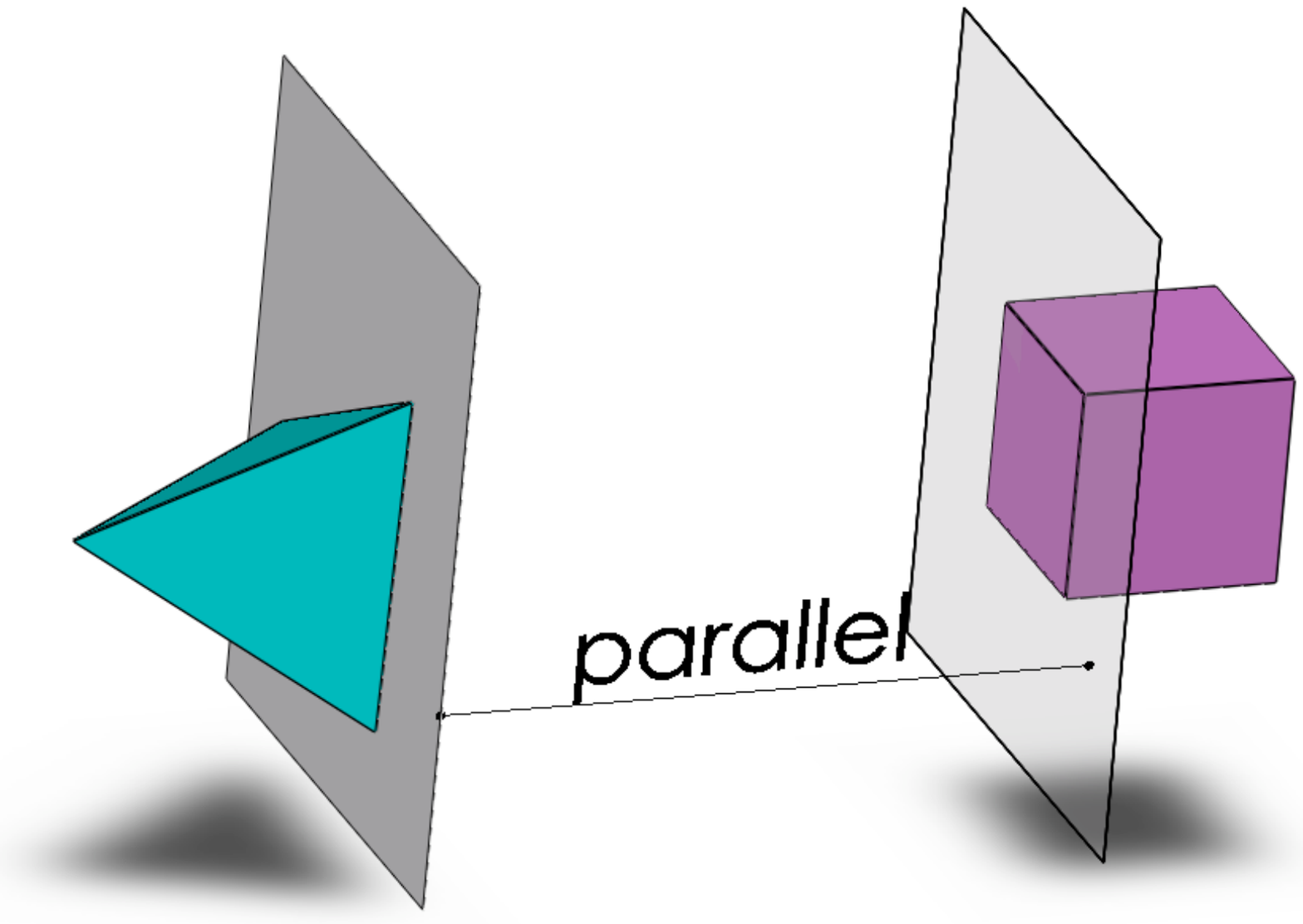}
\end{minipage}}%
\hspace{2mm}
\centering \subfloat[Perpendicular.]{\label{fig.planePlanePerp}
\begin{minipage}[b]{0.3\linewidth}
\centering\includegraphics[width=\linewidth]{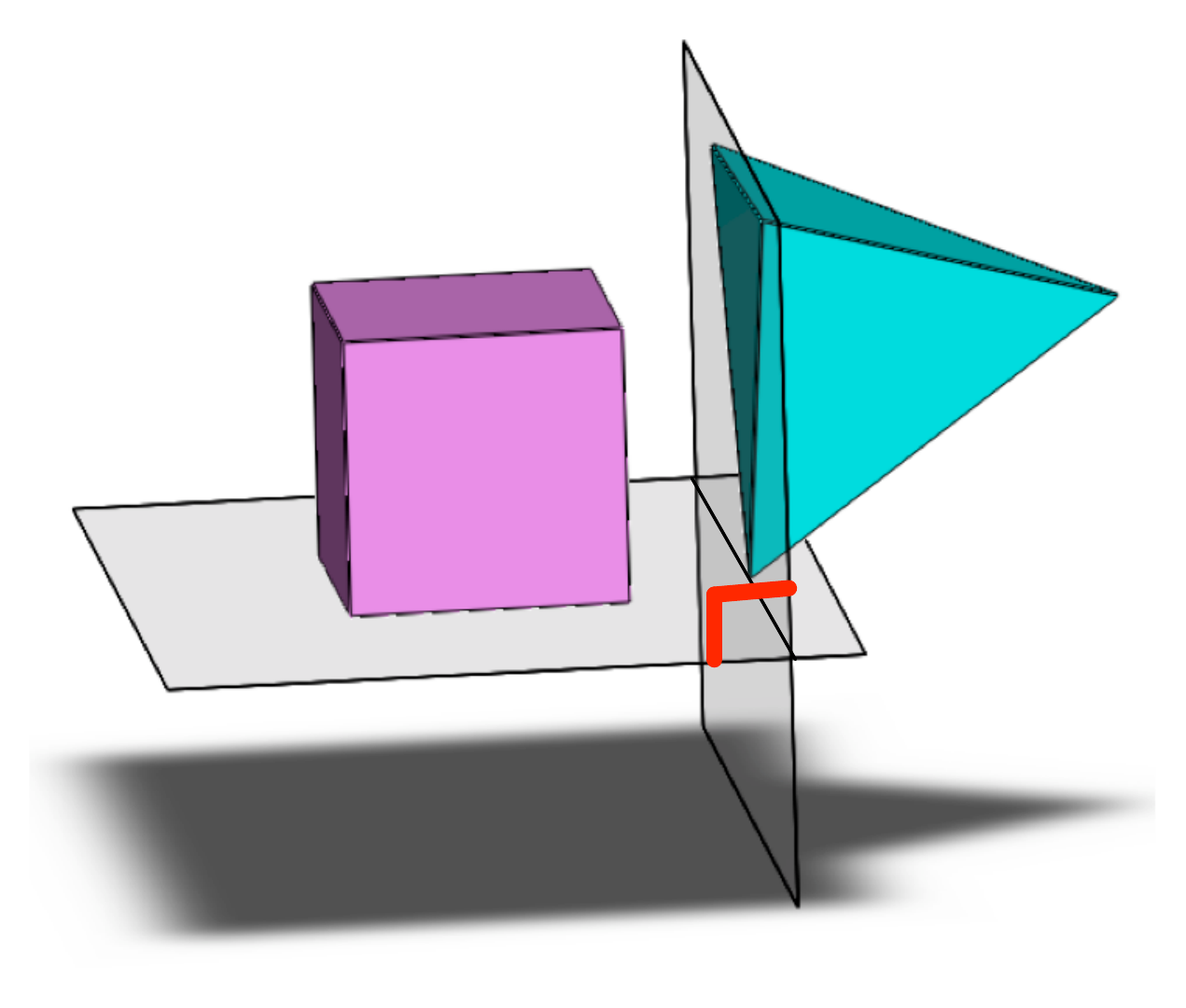}
\end{minipage}}
\hspace{2mm}
\centering \subfloat[Arbitrary fixed angular.]{\label{fig.planePlaneArbAngle}
\begin{minipage}[b]{0.3\linewidth}
\centering\includegraphics[width=\linewidth]{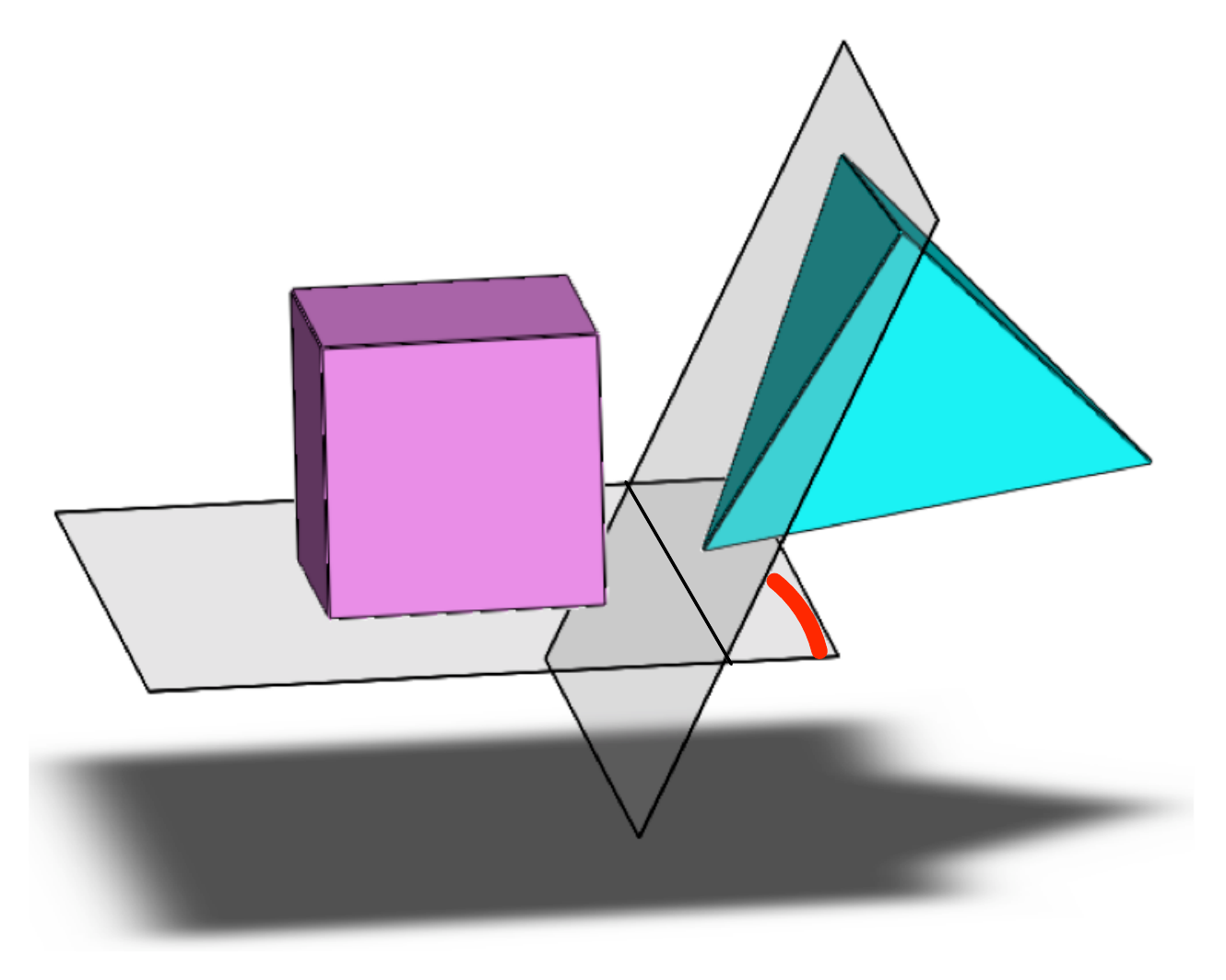}
\end{minipage}}
\caption{Plane-plane angular constraints. }
\label{fig.planePlaneAngle}
\end{figure}

We now formally define the functions for describing angular constraints. 
\begin{itemize}
	\item {\bf Line-line parallel:} 
	$L_{7}:E_{7} \rightarrow \Re^3 \times \Re^3 \times \Re^3$
	maps an edge $e=ij$ to a triple $(\vec p_i, \vec p_j, \vec d)$ so that
	the lines $(\vec p_i, \vec d)$ and $(\vec p_j, \vec d)$ affixed to bodies $i$ and $j$, respectively,
	are constrained to remain parallel to each other.
	
	\item {\bf Line-line perpendicular:} 
	$L_{8}:E_{8} \rightarrow (\Re^3 \times \Re^3) \times (\Re^3 \times \Re^3)$
	maps an edge $e=ij$ to a pair $((\vec p_i, \vec d_i), (\vec p_j, \vec d_j))$ so that
	the lines $(\vec p_i, \vec d_i)$ and $(\vec p_j, \vec d_j)$ affixed to bodies $i$ and $j$, respectively,
	are constrained to remain perpendicular to each other.
	
	\item {\bf Line-line fixed angular:} 
	$L_{9}:E_{9} \rightarrow (\Re^3 \times \Re^3) \times (\Re^3 \times \Re^3) \times \Re$
	maps an edge $e=ij$ to a triple $((\vec p_i, \vec d_i), (\vec p_j, \vec d_j), \alpha)$ so that
	the lines $(\vec p_i, \vec d_i)$ and $(\vec p_j, \vec d_j)$ affixed to bodies $i$ and $j$, respectively,
	are constrained to maintain the angle $\alpha$ between them.
	
	\item {\bf Line-plane parallel:} 
	$L_{12}:E_{12} \rightarrow (\Re^3 \times \Re^3) \times (\Re^3 \times \Re^3)$
	maps an edge $e=ij$ to a pair $((\vec p_i, \vec d_i), (\vec p_j, \vec d_j))$ so that
	the line $(\vec p_i, \vec d_i)$ and plane $(\vec p_j, \vec d_j)$ affixed to bodies $i$ and $j$, respectively,
	are constrained to remain parallel to each other.
	
	\item {\bf Line-plane perpendicular:} 
	$L_{13}:E_{13} \rightarrow (\Re^3 \times \Re^3) \times (\Re^3 \times \Re^3)$
	maps an edge $e=ij$ to a pair $((\vec p_i, \vec d_i), (\vec p_j, \vec d_j))$ so that
	the line $(\vec p_i, \vec d_i)$ and plane $(\vec p_j, \vec d_j)$ affixed to bodies $i$ and $j$, respectively,
	are constrained to remain perpendicular to each other.
	\comment{should this just have one $\vec d$?}

	\item {\bf Line-plane fixed angular:} 
	$L_{14}:E_{14} \rightarrow (\Re^3 \times \Re^3) \times (\Re^3 \times \Re^3) \times \Re$
	maps an edge $e=ij$ to a triple $((\vec p_i, \vec d_i), (\vec p_j, \vec d_j), \alpha)$ so that
	the line $(\vec p_i, \vec d_i)$ and plane $(\vec p_j, \vec d_j)$ affixed to bodies $i$ and $j$, respectively,
	are constrained to maintain the angle $\alpha$ between them.
	
	\item {\bf Plane-plane parallel:} 
	$L_{17}:E_{17} \rightarrow \Re^3 \times \Re^3 \times \Re^3$
	maps an edge $e=ij$ to a triple $(\vec p_i, \vec p_j, \vec d)$ so that
	the planes $(\vec p_i, \vec d)$ and $(\vec p_j, \vec d)$ affixed to bodies $i$ and $j$, respectively,
	are constrained to remain parallel to each other.
	
	\item {\bf Plane-plane perpendicular:} 
	$L_{18}:E_{18} \rightarrow (\Re^3 \times \Re^3) \times (\Re^3 \times \Re^3)$
	maps an edge $e=ij$ to a pair $((\vec p_i, \vec d_i), (\vec p_j, \vec d_j))$ so that
	the planes $(\vec p_i, \vec d_i)$ and $(\vec p_j, \vec d_j)$ affixed to bodies $i$ and $j$, respectively,
	are constrained to remain perpendicular to each other.
	
	\item {\bf Plane-plane fixed angular:} 
	$L_{19}:E_{19} \rightarrow (\Re^3 \times \Re^3) \times (\Re^3 \times \Re^3) \times \Re$
	maps an edge $e=ij$ to a triple $((\vec p_i, \vec d_i), (\vec p_j, \vec d_j), \alpha)$ so that
	the planes $(\vec p_i, \vec d_i)$ and $(\vec p_j, \vec d_j)$ affixed to bodies $i$ and $j$, respectively,
	are constrained to maintain the angle $\alpha$ between them.

\end{itemize}

It is straightforward that the line-line angular constraints (perpendicular, fixed angular
and parallel) are expressed by the two basic angular building blocks.
We observe that the line-plane and plane-plane angular constraints 
reduce to them as follows. 
\label{sec:extensionAngle}
\begin{itemize}
	\item {\bf Line-plane parallel:} Reduces to {\bf line-line non-parallel
	fixed angular} using
	the normal to the plane. 
	\item {\bf Line-plane perpendicular:} Reduces to {\bf line-line parallel} using
	the normal to the plane.
	\item {\bf Line-plane fixed angular:} Reduces to {\bf line-line non-parallel
	fixed angular} using
	the normal to the plane. 
	\item {\bf Plane-plane parallel:} Reduces to {\bf line-line parallel} using
	the planes' normal.
	\item {\bf Plane-plane perpendicular:} Reduces to {\bf line-line non-parallel
	fixed angular} using
	the planes' normals.
	\item {\bf Plane-plane fixed angular:} Reduces to {\bf line-line non-parallel
	fixed angular} using
	the planes' normals.
\end{itemize}

\subsection{Blind constraints}
The remaining coincidence and distance constraints reduce to some combination of basic angular  
and basic blind constraints from Section \ref{sec:buildingBlocks}.
We consider them in the following order:
{\bf point-point}, {\bf point-line}, {\bf point-plane}, {\bf line-line}, 
{\bf line-plane} and {\bf plane-plane}.
Since a {\bf point-point distance} constraint 
{\em (Figure \ref{fig.pointPointDistance})}) 
is a bar (see \cite{tay:rigidityMultigraphs-I:1984,white:whiteley:algebraicGeometryFrameworks:1987}), we consider only the {\bf point-point coincidence} constraint.

%
\begin{figure}[h]
\centering \subfloat[{\em Coincidence:} bodies $i$ and $j$ must
coincide on the specified point $\vec p$.] {\label{fig.pointPointCoincident}
\begin{minipage}[b]{0.45\linewidth}
\centering
\includegraphics[width=.75\linewidth]{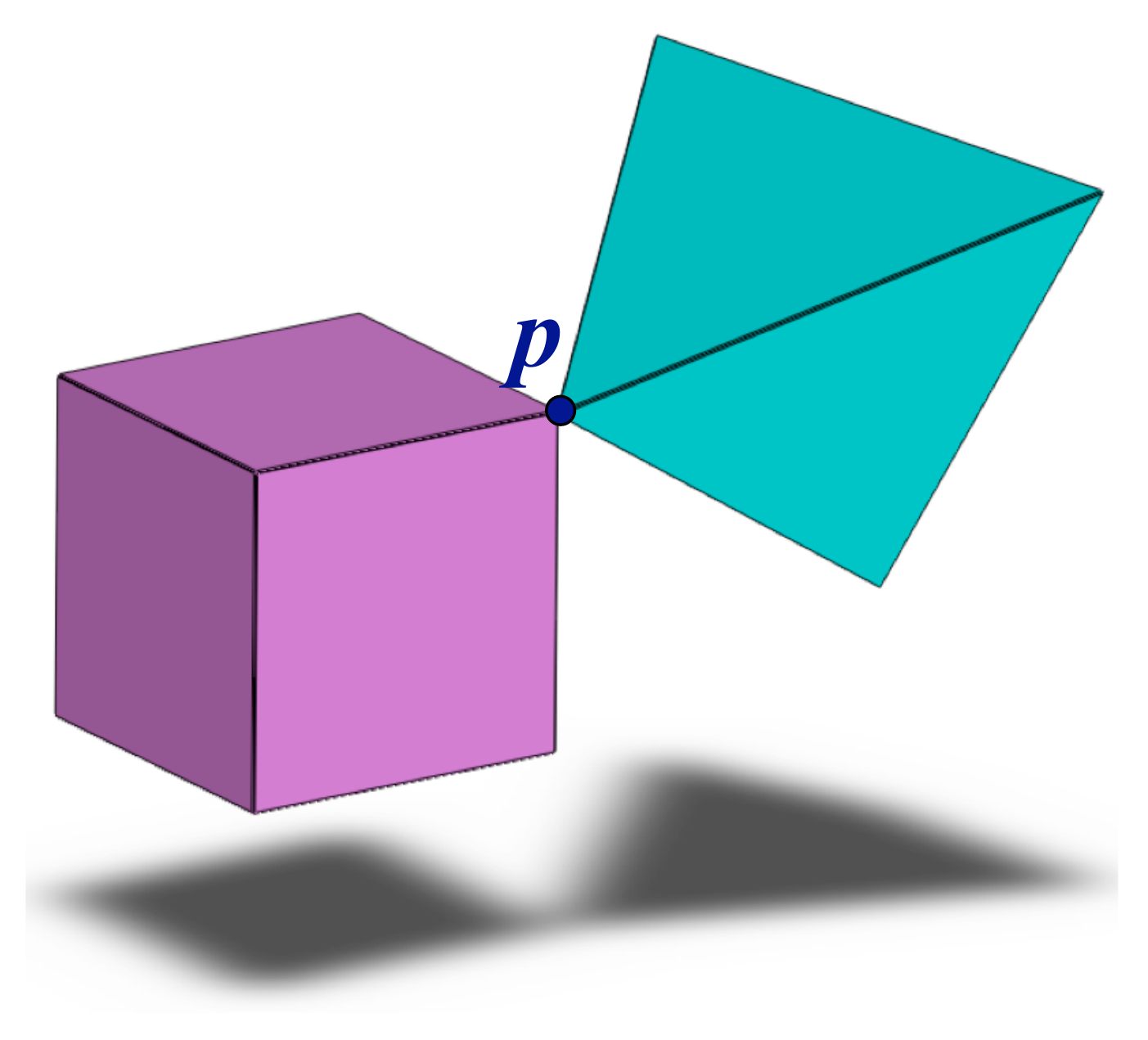}
\end{minipage}}%
\hspace{2mm}
\centering \subfloat[{\em Distance (bar)}: a point affixed to body $i$ 
must be a fixed distance from a point affixed to body $j$.]{\label{fig.pointPointDistance}
\begin{minipage}[b]{0.45\linewidth}
\centering\includegraphics[width=\linewidth]{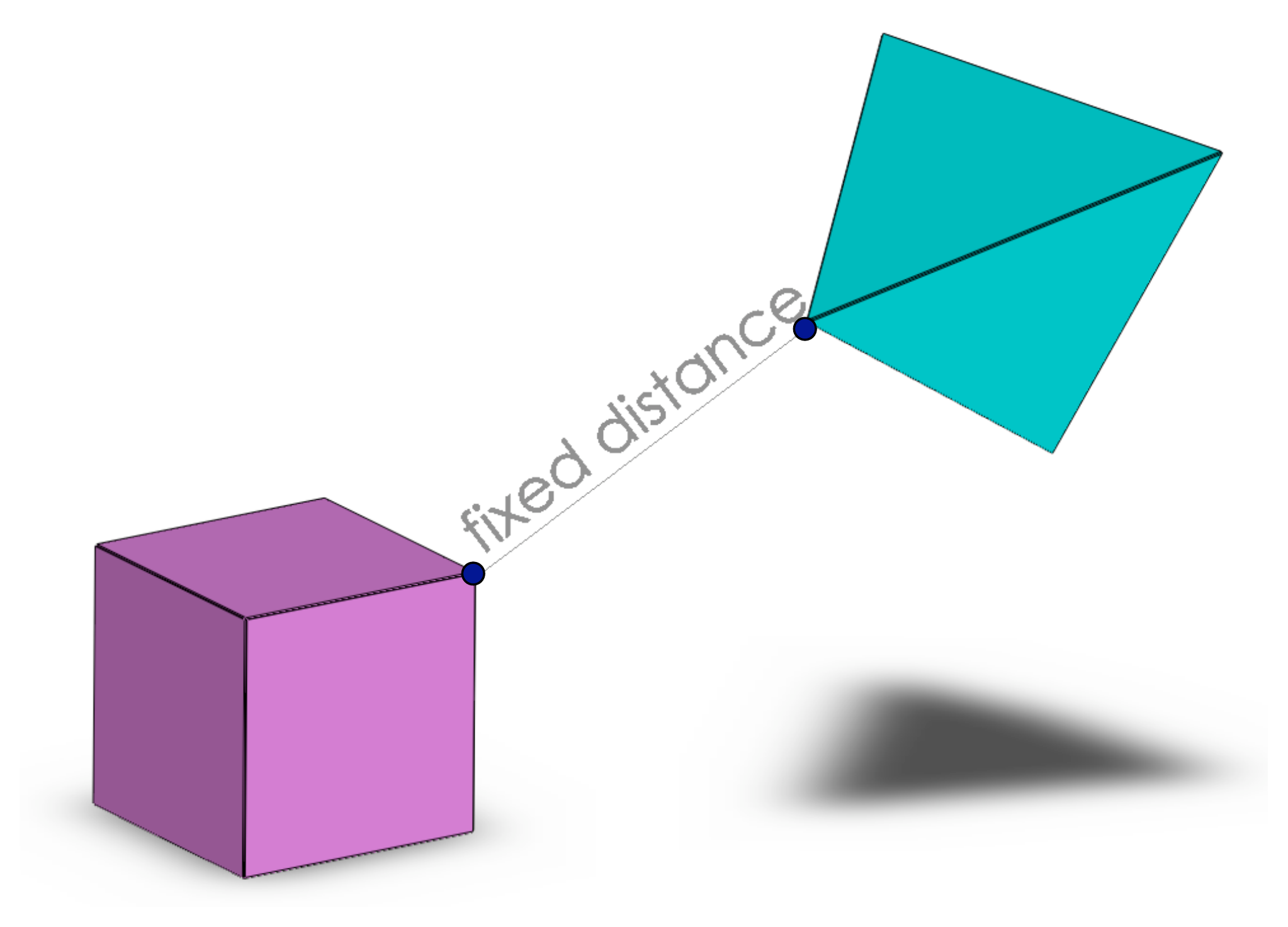}
\end{minipage}}
\caption{Point-point constraints. }
\label{fig.pointPoint}
\end{figure}

The function $L_1:E_1 \rightarrow \Re^3$
maps an edge $e=ij$ to a point $\vec p$ so that it is constrained
to lie on bodies $i$ and $j$ simultaneously.

Then the {\bf point-point coincidence} constraint 
{\em (Figure \ref{fig.pointPointCoincident})}
is infinitesimally maintained if the relative velocity of $\vec p$ is exactly 0. Since 
$\vec s_i = (-\vc \omega_i, \vec v_i)$ and $\vec s_j = (-\vc \omega_j, \vec v_j)$, 
then the relative screw is
defined by $(-\vc \omega,\vec  v)$, where $\vc \omega = (\vc \omega_i - \vc \omega_j)$ 
and $\vec v = (\vec v_i - \vec v_j)$.
Therefore, the constraint is infinitesimally maintained
if and only if $\vec p$'s infinitesimal velocity
$\vec p' = \vc \omega \times \vec p + \vec v = 0$, i.e.,
\begin{eqnarray*}
	\omega^y p^z - \omega^z p^y + v^x &=&0\\
	\omega^z p^x - \omega^x p^z + v^y &=&0\\
	\omega^x p^y - \omega^y p^x + v^z &=&0
\end{eqnarray*}
if and only if
\begin{eqnarray}
	\left\langle \vec (s_i^* - s_j^*),(1, 0, 0, 0, -p^z, p^y)\right\rangle  &=&0\\
	\left\langle \vec (s_i^* - s_j^*),(0, 1, 0, p^z, 0, -p^x)\right\rangle &=&0\\
	\left\langle \vec (s_i^* - s_j^*),(0, 0, 1, -p^y, p^x, 0)\right\rangle &=&0
\end{eqnarray}
Thus, a point-point coincidence constraint corresponds to 3 rows in the rigidity matrix:
\begin{center}
\begin{tabular}{ccccccc}
 & \multicolumn{2}{c}{$\vec s_i^*$} & & \multicolumn{2}{c}{$\vec s_j^*$} & \\
\cline{2-3} \cline{5-6}
$\cdots$ & $\vec v_i$& $-\vc\omega_i$ &
$\cdots$ & $\vec v_j$& $-\vc\omega_j$ & $\cdots$ \\
\hline
\multicolumn{1}{|c|}{$\rmfill$} & 
	\multicolumn{2}{c|}{\cellcolor{lightgray}$(1, 0, 0, 0, -p^z, p^y)$}&
	\multicolumn{1}{c|}{$\rmfill$} &
	\multicolumn{2}{c|}{\cellcolor{lightgray}$(-1, 0, 0, 0, p^z, -p^y)$}&	
	\multicolumn{1}{c|}{$\rmfill$} \\
\hline
\multicolumn{1}{|c|}{$\rmfill$} & 
	\multicolumn{2}{c|}{\cellcolor{lightgray}$(0, 1, 0, p^z, 0, -p^x)$}&
	\multicolumn{1}{c|}{$\rmfill$} &
	\multicolumn{2}{c|}{\cellcolor{lightgray}$(0, -1, 0, -p^z, 0, p^x)$}&	
	\multicolumn{1}{c|}{$\rmfill$} \\
\hline
\multicolumn{1}{|c|}{$\rmfill$} & 
	\multicolumn{2}{c|}{\cellcolor{lightgray}$(0, 0, 1, -p^y, p^x, 0)$}&
	\multicolumn{1}{c|}{$\rmfill$} &
	\multicolumn{2}{c|}{\cellcolor{lightgray}$(0, 0, -1, p^y, -p^x, 0)$}&	
	\multicolumn{1}{c|}{$\rmfill$} \\
\hline
\end{tabular}
\end{center}

%
\begin{figure}[h]
\centering \subfloat[{\em Coincidence:} a point $\vec p_i$ affixed to
body $i$ must lie on a line with direction $\vec d$ affixed to body $j$.
Then the instantaneous velocity $\vec p_i'$ of 
$\vec p_i$ must lie in the same direction as $\vec d$.] {\label{fig.linePointCoincident}
\begin{minipage}[b]{0.45\linewidth}
\centering
\includegraphics[width=.75\linewidth]{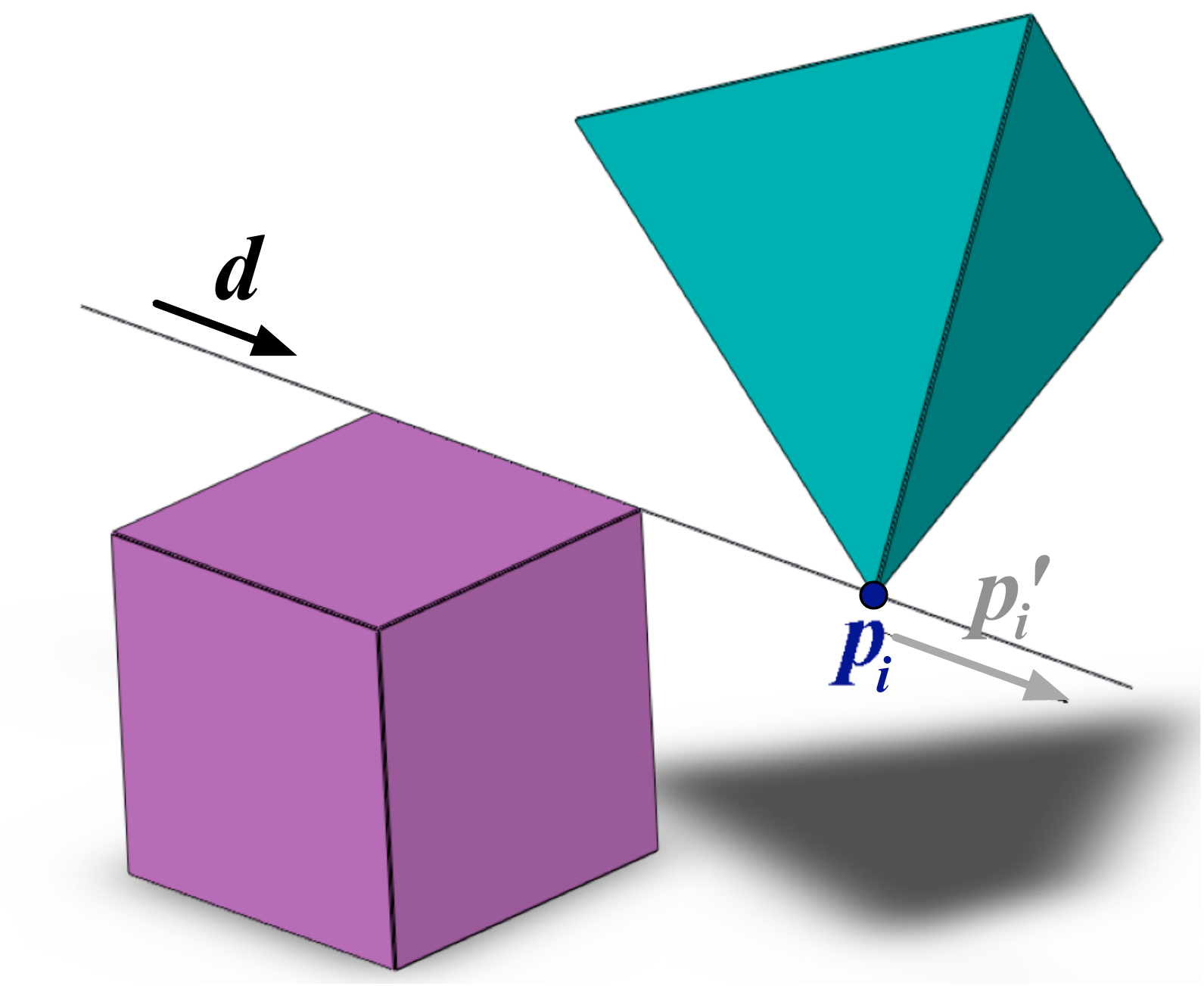}
\end{minipage}}%
\hspace{2mm}
\centering \subfloat[{\em Distance:} a point $\vec p_i$ affixed to
body $i$ must be a fixed distance from a line affixed to body $j$.
If $\hat{\vec d}$ is the perpendicular direction from the line  
to $\vec p_i$, then the instantaneous velocity $\vec p_i'$ of 
$\vec p_i$ must be orthogonal to $\hat{\vec d}$. I.e., $\vec p_i'$ must
lie in the plane $(\vec p_i, \hat{\vec d})$.]{\label{fig.linePointDistance}
\begin{minipage}[b]{0.45\linewidth}
\centering\includegraphics[width=.75\linewidth]{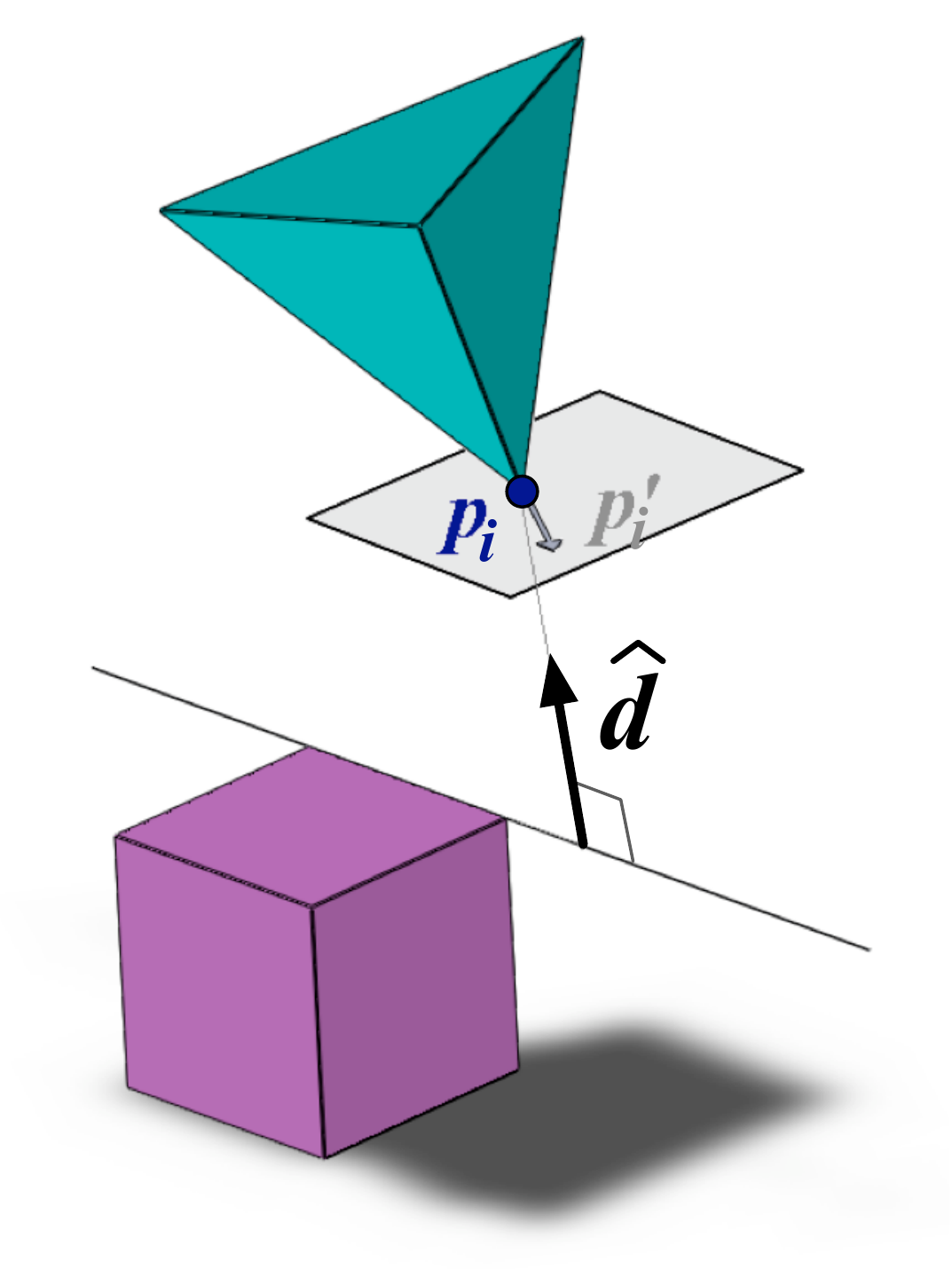}
\end{minipage}}
\caption{Point-line constraints. }
\label{fig.linePoint}
\end{figure}
\noindent{\bf Point-line coincidence} {\em (Figure \ref{fig.linePointCoincident})}:
The function $L_3:E_3 \rightarrow \Re^3 \times (\Re^3 \times \Re^3)$
maps an edge $e=ij$ to a pair $(\vec p_i, (\vec p_j, \vec d))$ so that
point $\vec p_i$ affixed to body $i$
is constrained to lie on the line $(\vec p_j, \vec d)$ affixed to body $j$.

The {\bf point-line coincidence} constraint is infinitesimally
maintained by using 2 primitive {\em blind} constraints from Equations \ref{eq:velSameDirAsVec1}
and \ref{eq:velSameDirAsVec2} to express that
the relative velocity of $\vec p_i$ lies in the same direction as $\vec d$:
\begin{eqnarray}
	\left\langle (\vec s_i - \vec s_j)^*, (\vec p_i:1) \vee (d^y, -d^x, 0, 0)\right\rangle &=&0\\
	\left\langle (\vec s_i - \vec s_j)^*, (\vec p_i:1) \vee (0, d^z, -d^y, 0)\right\rangle &=&0
\end{eqnarray}
Thus, a point-line coincidence constraint corresponds to 2 rows in the rigidity matrix:
\begin{center}
\begin{tabular}{ccccccc}
 & \multicolumn{2}{c}{$\vec s_i^*$} & & \multicolumn{2}{c}{$\vec s_j^*$} & \\
\cline{2-3} \cline{5-6}
$\cdots$ & $\vec v_i$& $-\vc\omega_i$ &
$\cdots$ & $\vec v_j$& $-\vc\omega_j$ & $\cdots$ \\
\hline
\multicolumn{1}{|c|}{$\rmfill$} & 
	\multicolumn{2}{c|}{\cellcolor{lightgray}$(\vec p_i:1) \vee (d^y, -d^x, 0, 0)$}&
	\multicolumn{1}{c|}{$\rmfill$} &
	\multicolumn{2}{c|}{\cellcolor{lightgray}$-((\vec p_i:1) \vee (d^y, -d^x, 0, 0))$}&	
	\multicolumn{1}{c|}{$\rmfill$} \\
\hline
\multicolumn{1}{|c|}{$\rmfill$} & 
	\multicolumn{2}{c|}{\cellcolor{lightgray}$(\vec p_i:1) \vee (0, d^z, -d^y, 0)$}&
	\multicolumn{1}{c|}{$\rmfill$} &
	\multicolumn{2}{c|}{\cellcolor{lightgray}$-((\vec p_i:1) \vee (0, d^z, -d^y, 0))$}&	
	\multicolumn{1}{c|}{$\rmfill$} \\
\hline
\end{tabular}
\end{center}

\noindent{\bf Point-line distance} {\em (Figure \ref{fig.linePointDistance})}:
The function $L_4:E_4 \rightarrow \Re^3 \times (\Re^3 \times \Re^3) \times \Re$
maps an edge $e=ij$ to a triple $(\vec p_i, (\vec p_j, \vec d), a)$ so that
point $\vec p_i$ affixed to body $i$
is constrained to lie a distance $a$ from the line $(\vec p_j, \vec d)$ affixed to body $j$.

Let $\hat{\vec d}$ be the perpendicular direction from the line $(\vec p_j, \vec d)$ 
to $\vec p_i$. Then the 
{\bf point-line distance} constraint is infinitesimally maintained using
1 primitive {\em blind} constraint from
Equation \ref{eq:velOrthogToVec} to express that the 
relative velocity of $\vec p_i$ is orthogonal to $\hat{\vec d}$:
\begin{equation}
	\left\langle (\vec s_i -\vec s_j)^*, (\vec p_i:1) \vee (\hat{\vec d}:0)\right\rangle = 0
\end{equation}
Thus, a point-line distance constraint corresponds to one row in the rigidity matrix:
\begin{center}
\begin{tabular}{ccccccc}
 & \multicolumn{2}{c}{$\vec s_i^*$} & & \multicolumn{2}{c}{$\vec s_j^*$} & \\
\cline{2-3} \cline{5-6}
$\cdots$ & $\vec v_i$& $-\vc\omega_i$ &
$\cdots$ & $\vec v_j$& $-\vc\omega_j$ & $\cdots$ \\
\hline
\multicolumn{1}{|c|}{$\rmfill$} & 
	\multicolumn{2}{c|}{\cellcolor{lightgray}$(\vec p_i:1) \vee (\hat{\vec d}:0)$}&
	\multicolumn{1}{c|}{$\rmfill$} &
	\multicolumn{2}{c|}{\cellcolor{lightgray}$-((\vec p_i:1) \vee (\hat{\vec d}:0))$}&	
	\multicolumn{1}{c|}{$\rmfill$} \\
\hline
\end{tabular}
\end{center}

%
\begin{figure}[h]
\centering \subfloat[{\em Coincidence:} a point $\vec p_i$ affixed to 
body $i$ must lie in a plane with normal $\vec d$ affixed to 
body $j$. Then the instantaneous velocity $\vec p_i'$ of 
$\vec p_i$ must remain in the plane $(\vec p_i, \vec d)$.] {\label{fig.planePointCoincident}
\begin{minipage}[b]{0.45\linewidth}
\centering
\includegraphics[width=.85\linewidth]{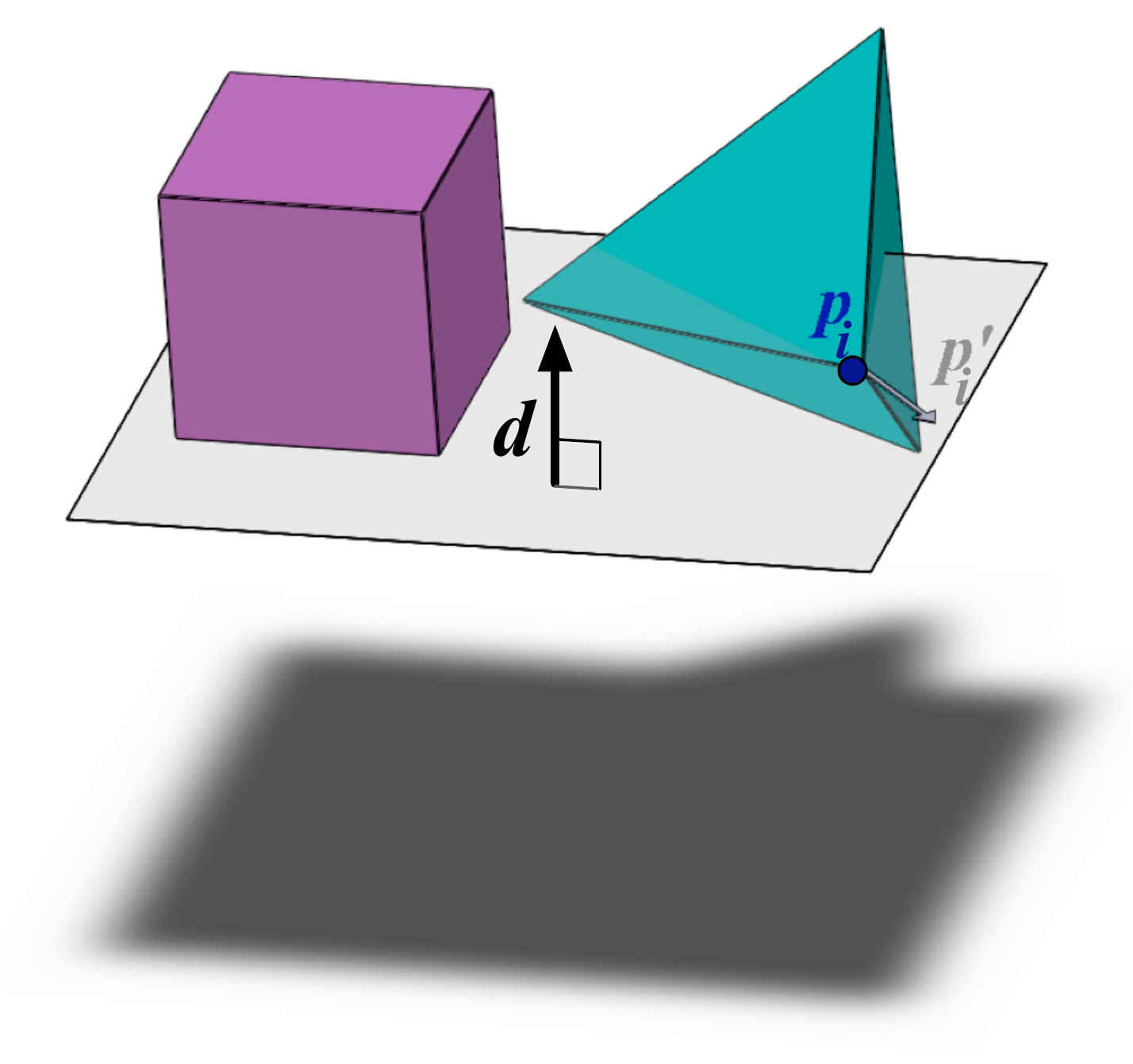}
\end{minipage}}%
\hspace{2mm}
\centering \subfloat[{\em Distance:} a point $\vec p_i$ affixed to 
body $i$ must be a fixed distance from a plane with normal $\vec d$ affixed to 
body $j$. Then the instantaneous velocity $\vec p_i'$ of 
$\vec p_i$ must remain in the plane $(\vec p_i, \vec d)$.]{\label{fig.planePointDistance}
\begin{minipage}[b]{0.45\linewidth}
\centering\includegraphics[width=.85\linewidth]{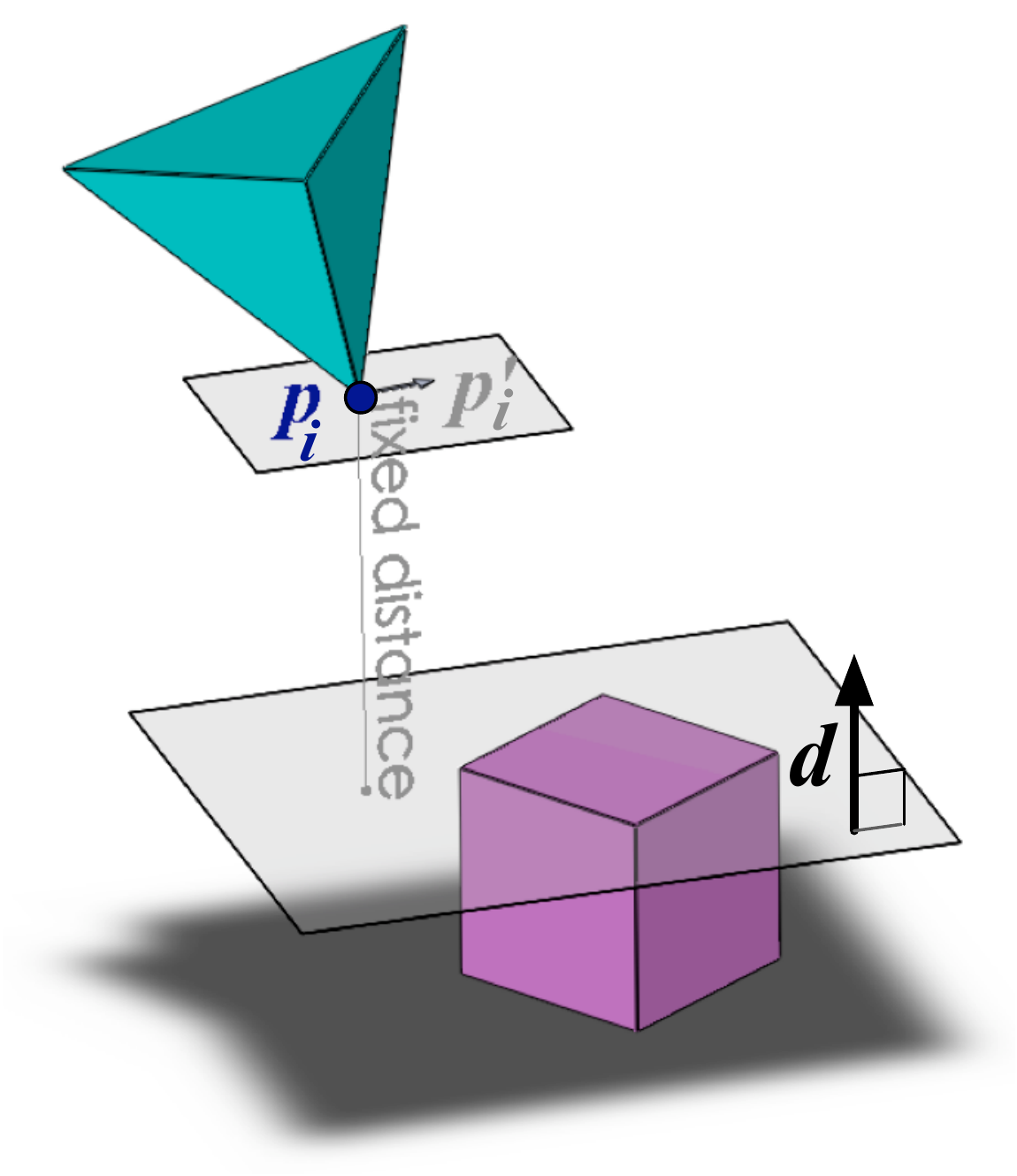}
\end{minipage}}
\caption{Point-plane constraints. }
\label{fig.planePoint}
\end{figure}
\noindent{\bf Point-plane coincidence} {\em (Figure \ref{fig.planePointCoincident})}:
The function $L_5:E_5 \rightarrow \Re^3 \times (\Re^3 \times \Re^3)$
maps an edge $e=ij$ to a pair $(\vec p_i, (\vec p_j, \vec d))$ so that
the point $\vec p_i$ affixed to body $i$ is constrained to lie
in the plane $(\vec p_j, \vec d)$ affixed to body $j$.

The {\bf point-plane coincidence} constraint is
infinitesimally maintained using Equation \ref{eq:velOrthogToVec} to 
express that the relative velocity of $\vec p_i$ remains in the plane:
\begin{equation}
	\left\langle (\vec s_i - \vec s_j)^*, (\vec p_i:1) \vee (\vec d:0)\right\rangle = 0
\end{equation}
Thus, a point-plane coincidence constraint corresponds to one row in the rigidity matrix:
\begin{center}
\begin{tabular}{ccccccc}
 & \multicolumn{2}{c}{$\vec s_i^*$} & & \multicolumn{2}{c}{$\vec s_j^*$} & \\
\cline{2-3} \cline{5-6}
$\cdots$ & $\vec v_i$& $-\vc\omega_i$ &
$\cdots$ & $\vec v_j$& $-\vc\omega_j$ & $\cdots$ \\
\hline
\multicolumn{1}{|c|}{$\rmfill$} & 
	\multicolumn{2}{c|}{\cellcolor{lightgray}$ (\vec p_i:1) \vee (\vec d:0)$}&
	\multicolumn{1}{c|}{$\rmfill$} &
	\multicolumn{2}{c|}{\cellcolor{lightgray}$- ((\vec p_i:1) \vee (\vec d:0))$}&	
	\multicolumn{1}{c|}{$\rmfill$} \\
\hline
\end{tabular}
\end{center}

\noindent{\bf Point-plane distance} {\em (Figure \ref{fig.planePointDistance})}:
The function $L_6:E_6 \rightarrow \Re^3 \times (\Re^3 \times \Re^3) \times \Re$
maps an edge $e=ij$ to a triple $(\vec p_i, (\vec p_j, \vec d), a)$ so that
the point $\vec p_i$ affixed to body $i$ is constrained to lie a distance $a$
from the plane $(\vec p_j, \vec d)$ affixed to body $j$.

The {\bf point-plane distance} constraint 
is infinitesimally maintained by using 
Equation \ref{eq:velOrthogToVec} to 
express that the relative velocity of $\vec p_i$ remains parallel to the plane:
\begin{equation}
	\left\langle (\vec s_i - \vec s_j)^*, (\vec p_i:1) \vee (\vec d:0)\right\rangle = 0
\end{equation}
Thus, a point-plane coincidence constraint corresponds to one row in the rigidity matrix:
\begin{center}
\begin{tabular}{ccccccc}
 & \multicolumn{2}{c}{$\vec s_i^*$} & & \multicolumn{2}{c}{$\vec s_j^*$} & \\
\cline{2-3} \cline{5-6}
$\cdots$ & $\vec v_i$& $-\vc\omega_i$ &
$\cdots$ & $\vec v_j$& $-\vc\omega_j$ & $\cdots$ \\
\hline
\multicolumn{1}{|c|}{$\rmfill$} & 
	\multicolumn{2}{c|}{\cellcolor{lightgray}$ (\vec p_i:1) \vee (\vec d:0)$}&
	\multicolumn{1}{c|}{$\rmfill$} &
	\multicolumn{2}{c|}{\cellcolor{lightgray}$-((\vec p_i:1) \vee (\vec d:0))$}&	
	\multicolumn{1}{c|}{$\rmfill$} \\
\hline
\end{tabular}
\end{center}

%
\begin{figure}[h]
\centering \subfloat[{\em Coincidence:} bodies $i$ and $j$ must
coincide on the specified line $(\vec p, \vec d)$. Then, in addition
to a line-line parallel constraint, the instantaneous velocity $\vec p'$
must lie in the same direction as $\vec d$.] {\label{fig.lineLineCoincident}
\begin{minipage}[b]{0.45\linewidth}
\centering
\includegraphics[width=.7\linewidth]{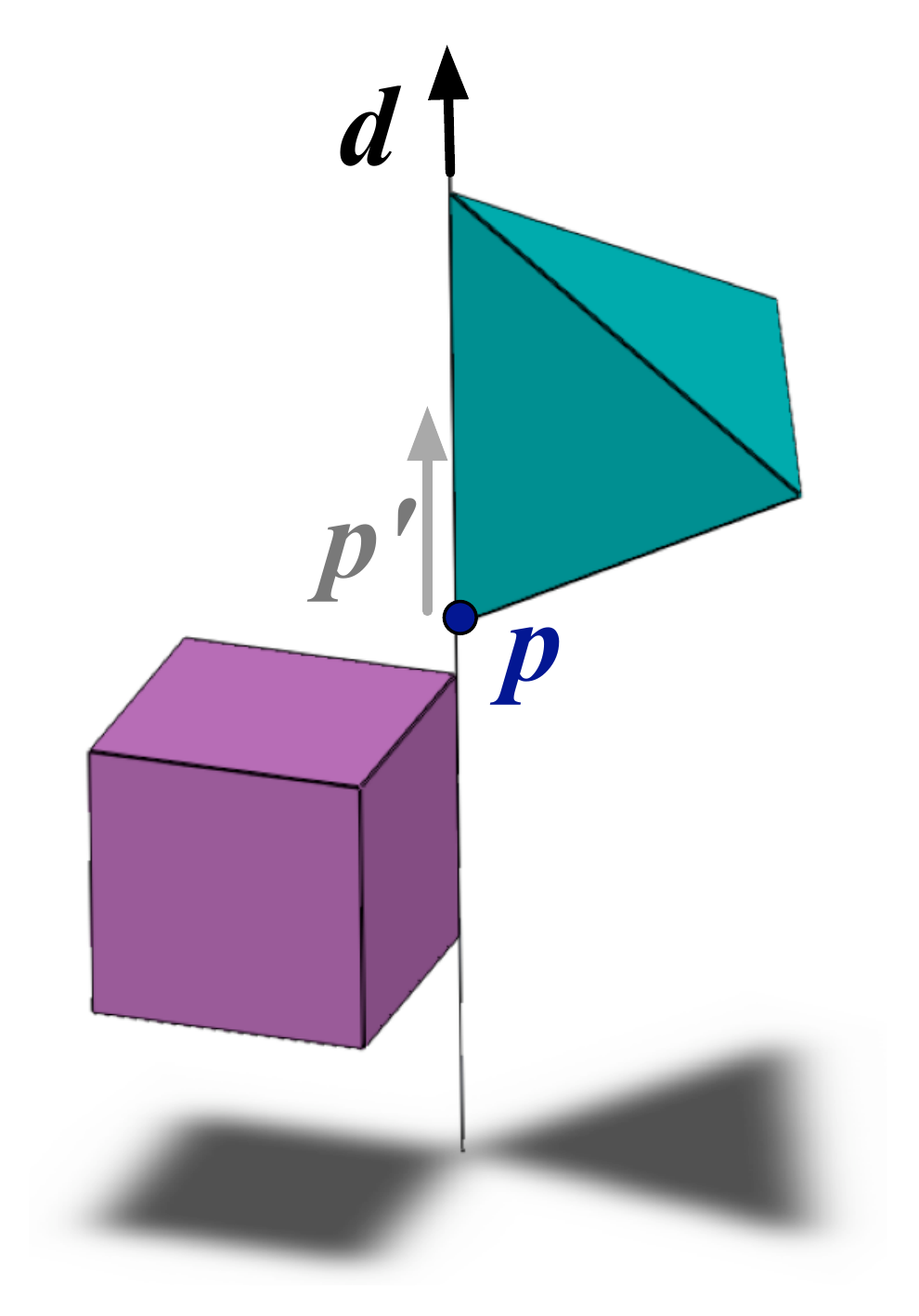}
\end{minipage}}%
\hspace{2mm}
\centering \subfloat[{\em Distance:} a line $(\vec p_i, \vec d_i)$ affixed to 
body $i$ must be a fixed distance from a line $(\vec p_j, \vec d_j)$  
affixed to body $j$. If $\vec p$ is the point on the line $(\vec p_i, \vec d_i)$
that is closest to the line $(\vec p_j, \vec d_j)$, then the instantaneous velocity 
$\vec p'$ of $\vec p$ must be orthogonal to $\vec d_i \times \vec d_j$, the
perpendicular to both lines. I.e., $\vec p'$ must lie in the plane
$(\vec p, \vec d_i \times \vec d_j)$]{\label{fig.lineLineDistance}
\begin{minipage}[b]{0.45\linewidth}
\centering\includegraphics[width=1.15\linewidth]{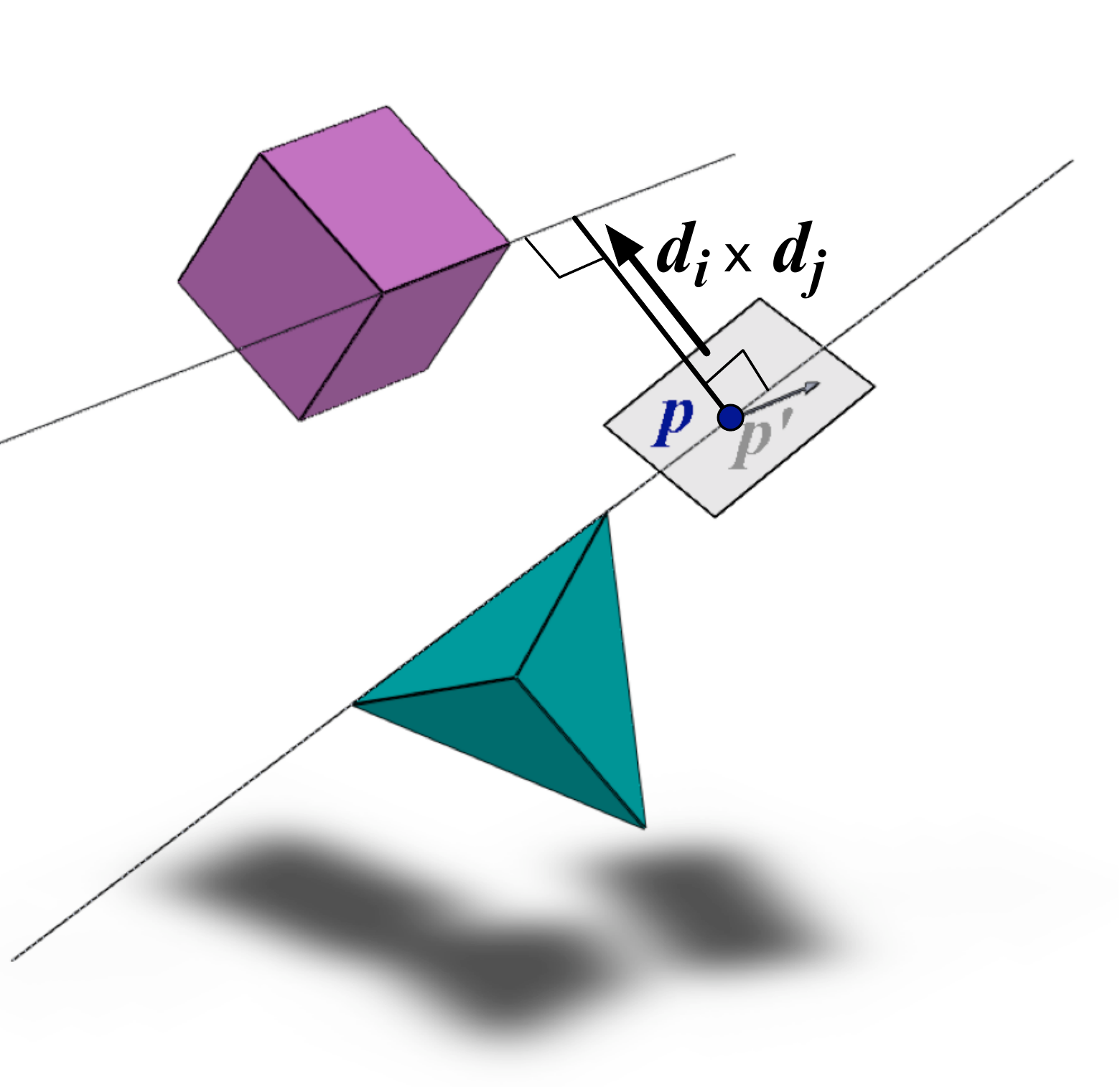}
\end{minipage}}
\caption{Line-line constraints. }
\label{fig.lineLine}
\end{figure}
\noindent{\bf Line-line coincidence} {\em (Figure \ref{fig.lineLineCoincident})}:
The function $L_{10}:E_{10} \rightarrow \Re^3 \times \Re^3$
maps an edge $e=ij$ to a pair $(\vec p, \vec d)$ so that
the line $(\vec p, \vec d)$ is constrained to be affixed to bodies $i$ and $j$ 
simultaneously.

We place a {\bf line-line 
parallel} angular constraint, resulting in 2 primitive {\em angular} constraints from Equations
\ref{eq:parallelAngleInf1} and \ref{eq:parallelAngleInf2}:
\begin{eqnarray}
	\left\langle (\vec s_i^* - \vec s_j^*), (0,0,0,-d^y,d^x, 0)\right\rangle \\
	\left\langle (\vec s_i^* - \vec s_j^*), (0,0,0,0,-d^z,d^y)\right\rangle
\end{eqnarray}
Then, to maintain
coincidence, associate 2 primitive {\em blind} constraints from
Equations \ref{eq:velSameDirAsVec1} and \ref{eq:velSameDirAsVec2}
to force the relative velocity of $\vec p$ to lie along $\vec d$:
\begin{eqnarray}
	\left\langle (\vec s_i - \vec s_j)^*, (\vec p:1) \vee (d^y, -d^x, 0, 0)\right\rangle &=&0\\
	\left\langle (\vec s_i - \vec s_j)^*, (\vec p:1) \vee (0, d^z, -d^y, 0)\right\rangle &=&0
\end{eqnarray}

These 4 equations maintain the {\bf line-line coincidence} 
constraint infinitesimally and correspond to 4 rows in the 
 rigidity matrix:
\begin{center}
	
\begin{tabular}{ccccccc}
 & \multicolumn{2}{c}{$\vec s_i^*$} & & \multicolumn{2}{c}{$\vec s_j^*$} & \\
\cline{2-3} \cline{5-6}
$\cdots$ & $\vec v_i$& $-\vc\omega_i$ &
$\cdots$ & $\vec v_j$& $-\vc\omega_j$ & $\cdots$ \\
\hline
\multicolumn{1}{|c|}{$\rmfill$} & 
	\multicolumn{1}{c|}{\cellcolor{red}$\vec 0$} & 
	\multicolumn{1}{c|}{\cellcolor{lightgray}$(-d^y, d^x, 0)$}&
	\multicolumn{1}{c|}{$\rmfill$} &
	\multicolumn{1}{c|}{\cellcolor{red}$\vec 0$} & 
	\multicolumn{1}{c|}{\cellcolor{lightgray}$(d^y, -d^x, 0)$}&	
	\multicolumn{1}{c|}{$\rmfill$} \\
\hline
\multicolumn{1}{|c|}{$\rmfill$} & 
	\multicolumn{1}{c|}{\cellcolor{red}$\vec 0$} & 
	\multicolumn{1}{c|}{\cellcolor{lightgray}$(0, -d^z, d^y)$}&
	\multicolumn{1}{c|}{$\rmfill$} &
	\multicolumn{1}{c|}{\cellcolor{red}$\vec 0$} & 
	\multicolumn{1}{c|}{\cellcolor{lightgray}$(0, d^z, -d^y)$}&	
	\multicolumn{1}{c|}{$\rmfill$} \\
\hline
\multicolumn{1}{|c|}{$\rmfill$} & 
	\multicolumn{2}{c|}{\cellcolor{lightgray}$(\vec p:1) \vee (d^y, -d^x, 0, 0)$}&
	\multicolumn{1}{c|}{$\rmfill$} &
	\multicolumn{2}{c|}{\cellcolor{lightgray}$-((\vec p:1) \vee (d^y, -d^x, 0, 0))$}&	
	\multicolumn{1}{c|}{$\rmfill$} \\
\hline
\multicolumn{1}{|c|}{$\rmfill$} & 
	\multicolumn{2}{c|}{\cellcolor{lightgray}$(\vec p:1) \vee (0, d^z, -d^y, 0)$}&
	\multicolumn{1}{c|}{$\rmfill$} &
	\multicolumn{2}{c|}{\cellcolor{lightgray}$-((\vec p:1) \vee (0, d^z, -d^y, 0))$}&	
	\multicolumn{1}{c|}{$\rmfill$} \\
\hline
\end{tabular}
\end{center}

\noindent{\bf Line-line distance} {\em (Figure \ref{fig.lineLineDistance})}:
The function $L_{11}:E_{11} \rightarrow (\Re^3 \times \Re^3) \times (\Re^3 \times \Re^3) \times \Re$
maps an edge $e=ij$ to a triple $((\vec p_i, \vec d_i), (\vec p_j, \vec d_j), a)$ so that
the lines $(\vec p_i, \vec d_i)$ and $(\vec p_j, \vec d_j)$ affixed to bodies $i$ and $j$, 
respectively, are constrained to lie a distance $a$ from each other.

Let $\vec p \in \Re^3$ be the
point on the line $(\vec p_i, \vec d_i)$ closest to the line $(\vec p_j, \vec d_j)$. 
Then the {\bf line-line distance} constraint
is infinitesimally maintained if the relative velocity of $\vec p$ 
is orthogonal to the direction perpendicular to both lines.
In other words, $\vec p'$ must lie in
the plane defined by the point $\vec p$ and normal direction
$\vec d_i \times \vec d_j$. 
By substituting $\vec p$ and $\vec d_i \times \vec d_j$
into Equation \ref{eq:velOrthogToVec}, we obtain the linear equation
\begin{equation}
	\left\langle (\vec s_i - \vec s_j)^*, (\vec p:1) \vee ((\vec d_i \times \vec d_j):0)\right\rangle = 0
\end{equation}
associating one {\em blind} primitive constraint. This corresponds to
one row in the rigidity matrix:
\begin{center}
\begin{tabular}{ccccccc}
 & \multicolumn{2}{c}{$\vec s_i^*$} & & \multicolumn{2}{c}{$\vec s_j^*$} & \\
\cline{2-3} \cline{5-6}
$\cdots$ & $\vec v_i$& $-\vc\omega_i$ &
$\cdots$ & $\vec v_j$& $-\vc\omega_j$ & $\cdots$ \\
\hline
\multicolumn{1}{|c|}{$\rmfill$} & 
	\multicolumn{2}{c|}{\cellcolor{lightgray}$(\vec p:1) \vee ((\vec d_i \times \vec d_j):0)$}&
	\multicolumn{1}{c|}{$\rmfill$} &
	\multicolumn{2}{c|}{\cellcolor{lightgray}$-((\vec p:1) \vee ((\vec d_i \times \vec d_j):0))$}&	
	\multicolumn{1}{c|}{$\rmfill$} \\
\hline
\end{tabular}
\end{center}

%
\begin{figure}[h]
\centering \subfloat[{\em Coincidence:} a line $(\vec p_i, \vec d_i)$ affixed to body $i$
must lie in a plane with normal $\vec d_j$ affixed to body $j$. Then, in addition to
a line-plane parallel constraint, the instantaneous 
velocity $\vec p_i'$ of $\vec p_i$ must lie in the plane $(\vec p_i, \vec d_j)$.] {\label{fig.linePlaneCoincident}
\begin{minipage}[b]{0.45\linewidth}
\centering
\includegraphics[width=.75\linewidth]{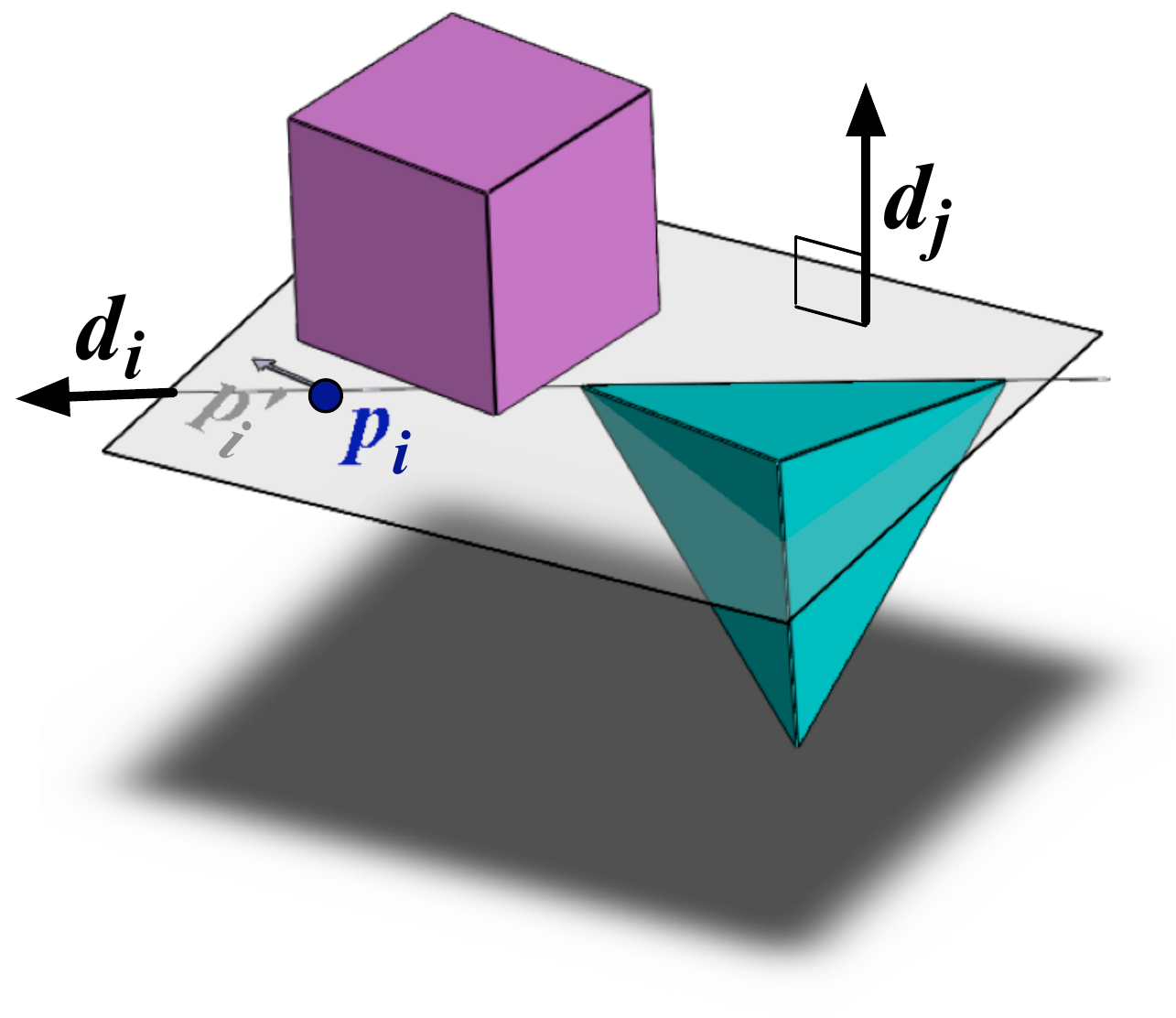}
\end{minipage}}%
\hspace{2mm}
\centering \subfloat[{\em Distance:} a line $(\vec p_i, \vec d_i)$ affixed to body $i$
must be a fixed distance from a plane with normal $\vec d_j$ affixed to body $j$. 
Then, in addition to a line-plane parallel constraint, the instantaneous 
velocity $\vec p_i'$ of $\vec p_i$ must lie in the plane 
$(\vec p_i, \vec d_j)$.]{\label{fig.linePlaneDistance}
\begin{minipage}[b]{0.45\linewidth}
\centering\includegraphics[width=.93\linewidth]{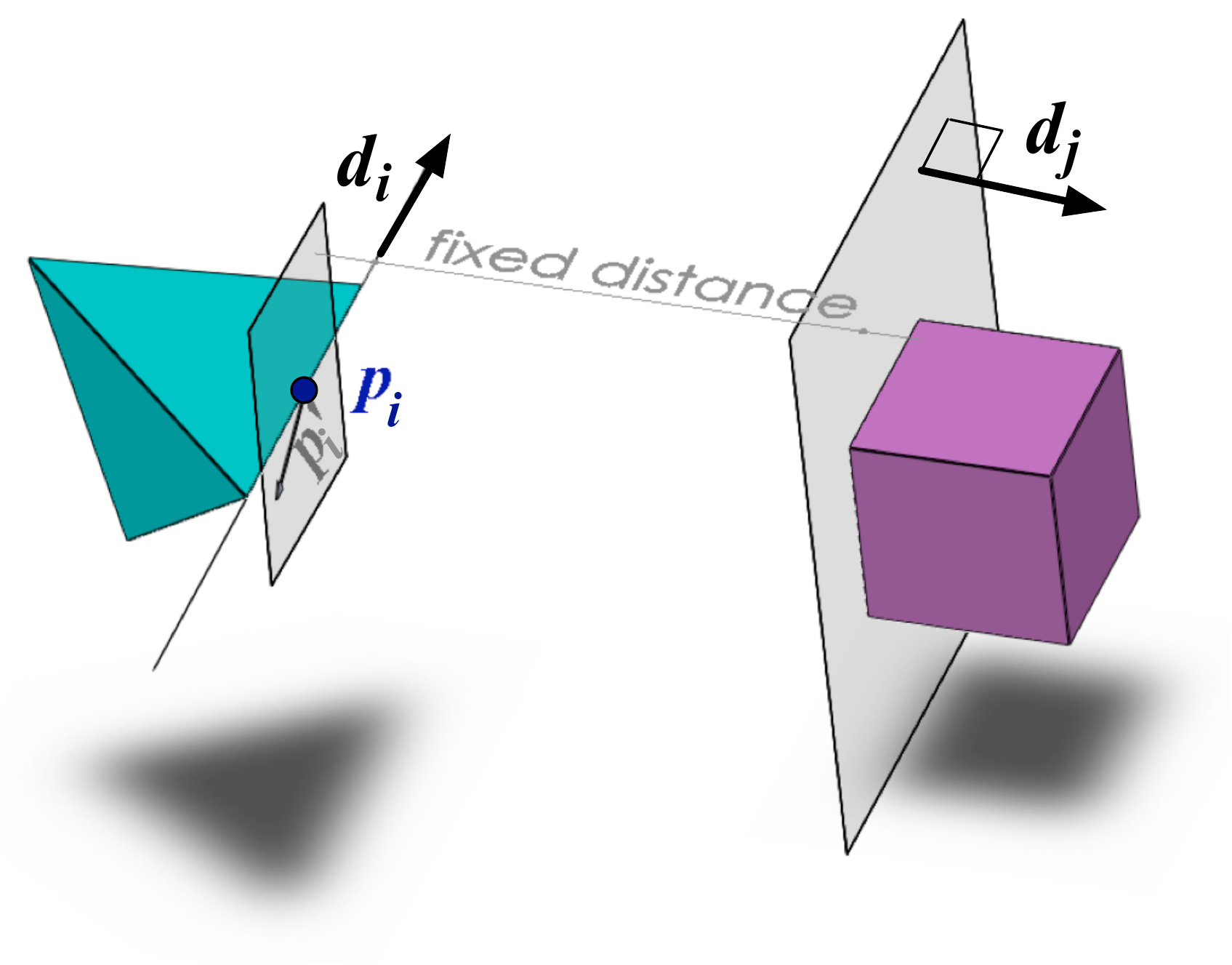}
\end{minipage}}
\caption{Line-plane constraints. }
\label{fig.linePlane}
\end{figure}
\noindent{\bf Line-plane coincidence} {\em (Figure \ref{fig.linePlaneCoincident})}:
The function $L_{15}:E_{15} \rightarrow (\Re^3 \times \Re^3) \times (\Re^3 \times \Re^3)$
maps an edge $e=ij$ to a pair $((\vec p_i, \vec d_i), (\vec p_j, \vec d_j))$ so that
the line $(\vec p_i, \vec d_i)$ affixed to body $i$ is constrained to lie
in the plane $(\vec p_j, \vec d_j)$ affixed to body $j$.

The {\bf line-plane coincidence}
constraint is infinitesimally maintained using a primitive
{\em angular} {\bf line-plane parallel} constraint from Equation \ref{eq:nonParallelAngleInf}:
\begin{equation}
	\label{eq:planeLineCoincAng}
	\left\langle (\vec s_i^* - \vec s_j^*), ((0,0,0), \vec d_j \times \vec d_i) \right\rangle = 0
\end{equation}
In addition, a primitive {\em blind} constraint from Equation \ref{eq:velOrthogToVec} 
forces the relative velocity of $\vec p_i$ to remain in the plane:
\begin{equation}
	\label{eq:planeLineCoincBlind}
	\left\langle (\vec s_i - \vec s_j)^*, (\vec p_i:1) \vee (\vec d_j:0)\right\rangle = 0
\end{equation}
Thus, a line-plane coincidence constraint corresponds to 2 rows in the rigidity matrix:
\begin{center}
\begin{tabular}{ccccccc}
 & \multicolumn{2}{c}{$\vec s_i^*$} & & \multicolumn{2}{c}{$\vec s_j^*$} & \\
\cline{2-3} \cline{5-6}
$\cdots$ & $\vec v_i$& $-\vc\omega_i$ &
$\cdots$ & $\vec v_j$& $-\vc\omega_j$ & $\cdots$ \\
\hline
\multicolumn{1}{|c|}{$\rmfill$} & 
	\multicolumn{1}{c|}{\cellcolor{red}$\vec 0$} & 
	\multicolumn{1}{c|}{\cellcolor{lightgray}$\vec d_j \times\vec  d_i$}&
	\multicolumn{1}{c|}{$\rmfill$} &
	\multicolumn{1}{c|}{\cellcolor{red}$\vec 0$} & 
	\multicolumn{1}{c|}{\cellcolor{lightgray}$\vec d_i \times\vec  d_j$}&	
	\multicolumn{1}{c|}{$\rmfill$} \\
\hline
\multicolumn{1}{|c|}{$\rmfill$} & 
	\multicolumn{2}{c|}{\cellcolor{lightgray}$ (\vec p_i:1) \vee (\vec d_j:0)$}&
	\multicolumn{1}{c|}{$\rmfill$} &
	\multicolumn{2}{c|}{\cellcolor{lightgray}$- ((\vec p_i:1) \vee (\vec d_j:0))$}&	
	\multicolumn{1}{c|}{$\rmfill$} \\
\hline
\end{tabular}
\end{center}

\noindent{\bf Line-plane distance} {\em (Figure \ref{fig.linePlaneDistance})}:
The function $L_{16}:E_{16} \rightarrow (\Re^3 \times \Re^3) \times (\Re^3 \times \Re^3) \times \Re$
maps an edge $e=ij$ to a triple $((\vec p_i, \vec d_i), (\vec p_j, \vec d_j), a)$ so that
the line $(\vec p_i, \vec d_i)$ affixed to body $i$ is constrained to lie a distance $a$
from the plane $(\vec p_j, \vec d_j)$ affixed to body $j$.

The {\bf line-plane distance}
constraint is maintained infinitesimally by using the same equations as for
the {\bf line-plane coincidence} constraint: a primitive
{\em angular} {\bf line-plane parallel} constraint from Equation \ref{eq:nonParallelAngleInf}:

\begin{equation}
	\left\langle (\vec s_i^* - \vec s_j^*), ((0,0,0), \vec d_j \times\vec  d_i) \right\rangle = 0
\end{equation}
In addition, a primitive {\em blind} constraint from Equation \ref{eq:velOrthogToVec} 
forces the relative velocity of $\vec p_i$ to remain parallel to the plane:
\begin{equation}
	\left\langle (\vec s_i - \vec s_j)^*, (\vec p_i:1) \vee (\vec d_j:0)\right\rangle = 0
\end{equation}	
Thus, a line-plane distance constraint corresponds to 2 rows in the rigidity matrix:
	\begin{center}
	\begin{tabular}{ccccccc}
	 & \multicolumn{2}{c}{$\vec s_i^*$} & & \multicolumn{2}{c}{$\vec s_j^*$} & \\
	\cline{2-3} \cline{5-6}
	$\cdots$ & $\vec v_i$& $-\vc\omega_i$ &
	$\cdots$ & $\vec v_j$& $-\vc\omega_j$ & $\cdots$ \\
	\hline
	\multicolumn{1}{|c|}{$\rmfill$} & 
		\multicolumn{1}{c|}{\cellcolor{red}$\vec 0$} & 
		\multicolumn{1}{c|}{\cellcolor{lightgray}$\vec d_j \times\vec  d_i$}&
		\multicolumn{1}{c|}{$\rmfill$} &
		\multicolumn{1}{c|}{\cellcolor{red}$\vec 0$} & 
		\multicolumn{1}{c|}{\cellcolor{lightgray}$\vec d_i \times\vec  d_j$}&	
		\multicolumn{1}{c|}{$\rmfill$} \\
	\hline
	\multicolumn{1}{|c|}{$\rmfill$} & 
		\multicolumn{2}{c|}{\cellcolor{lightgray}$ (\vec p_i:1) \vee (\vec d_j:0)$}&
		\multicolumn{1}{c|}{$\rmfill$} &
		\multicolumn{2}{c|}{\cellcolor{lightgray}$- ((\vec p_i:1) \vee (\vec d_j:0))$}&	
		\multicolumn{1}{c|}{$\rmfill$} \\
	\hline
	\end{tabular}
	\end{center}

\begin{figure}[h]
\centering \subfloat[{\em Coincidence:}	bodies $i$ and $j$ must
coincide on the specified plane $(\vec p, \vec d)$. Then, in addition
to a plane-plane parallel constraint, the instantaneous velocity $\vec p'$ of $\vec p$
must remain in the plane.] {\label{fig.planePlaneCoincident}
\begin{minipage}[b]{0.45\linewidth}
\centering
\includegraphics[width=.8\linewidth]{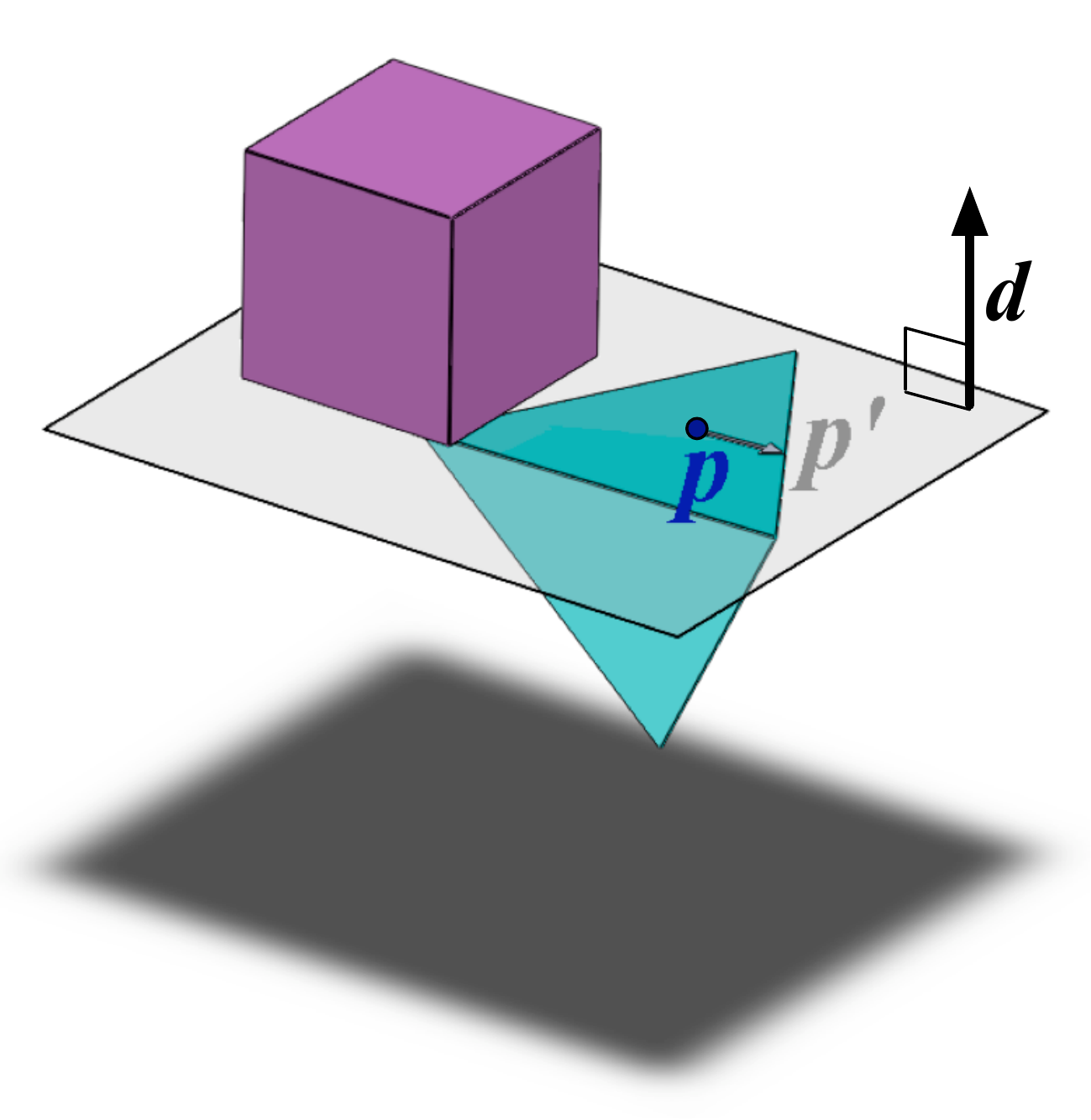}
\end{minipage}}%
\hspace{2mm}
\centering \subfloat[{\em Distance:} a plane $(\vec p_i, \vec d)$ affixed
to body $i$ must be a fixed distance from a plane $(\vec p_j, \vec d)$
affixed to body $j$. Then, in addition to a plane-plane parallel constraint,
the instantaneous velocity $\vec p_i'$ of $\vec p_i$ must remain in the plane
$(\vec p_i, \vec d)$.]{\label{fig.planePlaneDistance}
\begin{minipage}[b]{0.45\linewidth}
\centering\includegraphics[width=\linewidth]{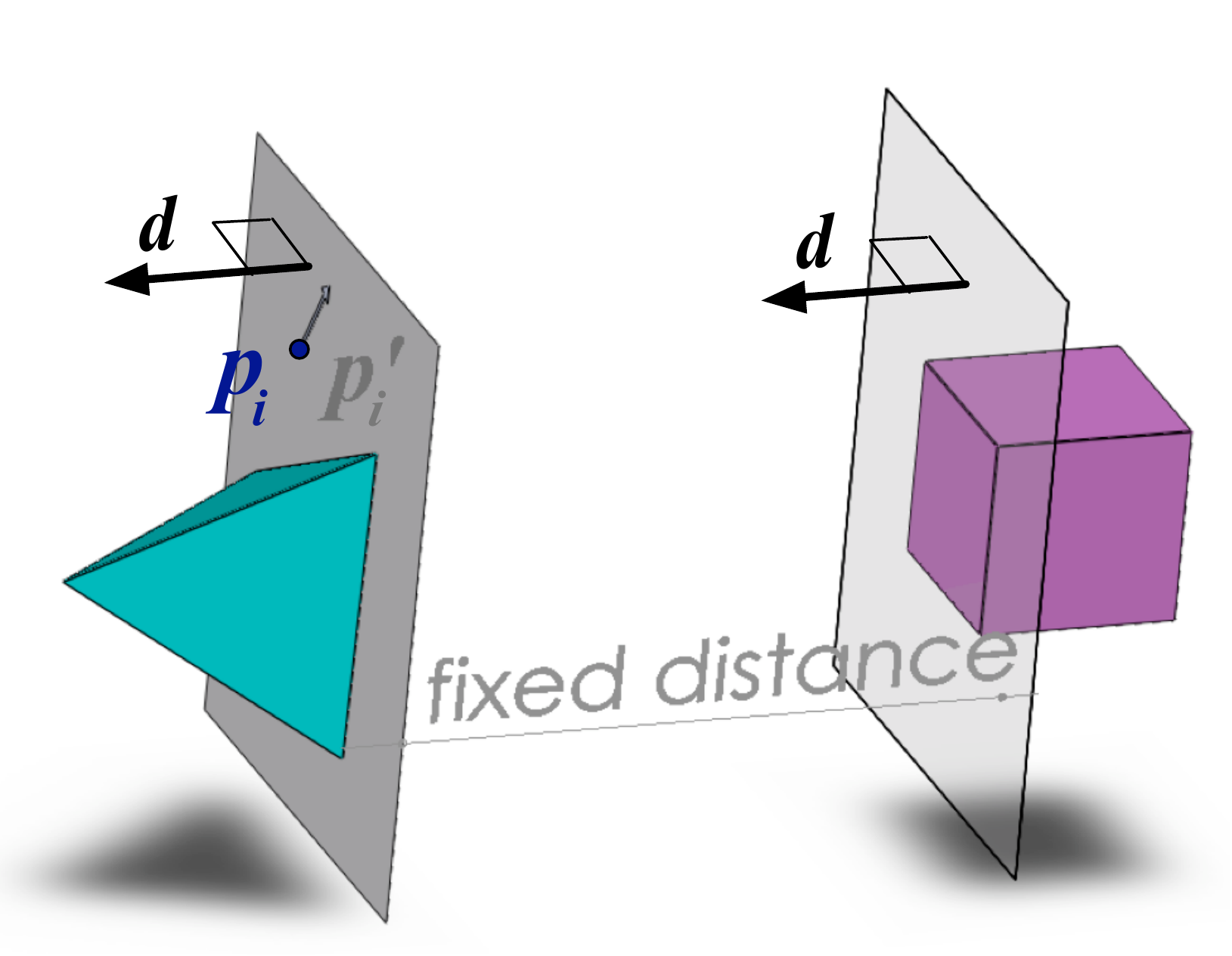}
\end{minipage}}
\caption{Plane-plane constraints. }
\label{fig.planePlane}
\end{figure}
\noindent{\bf Plane-plane coincidence} {\em (Figure \ref{fig.planePlaneCoincident})}:
The function $L_{20}:E_{20} \rightarrow \Re^3 \times \Re^3 $
maps an edge $e=ij$ to the pair $(\vec p, \vec d)$ so that
the plane $(\vec p, \vec d)$ 
is constrained to be affixed to both bodies $i$ and $j$ simultaneously.

We place a {\bf plane-plane parallel}
angular constraint, resulting in 2 primitive {\em angular} constraints
from Equations
\ref{eq:parallelAngleInf1} and \ref{eq:parallelAngleInf2}:
\begin{eqnarray}
	\left\langle (\vec s_i^* - \vec s_j^*), (0,0,0,-d^y,d^x, 0)\right\rangle \\
	\left\langle (\vec s_i^* - \vec s_j^*), (0,0,0,0,-d^z,d^y)\right\rangle
\end{eqnarray}
Then, to 
maintain coincidence, place a primitive {\em blind} constraint
by using Equation \ref{eq:velOrthogToVec} to force the
relative velocity of $\vec p$ to remain in the plane:
\begin{equation}
	\left\langle (\vec s_i - \vec s_j)^*, (\vec p:1) \vee (\vec d:0)\right\rangle = 0
\end{equation}
Thus, a plane-plane coincidence corresponds to 3 rows in the rigidity matrix:
\begin{center}
\begin{tabular}{ccccccc}
 & \multicolumn{2}{c}{$\vec s_i^*$} & & \multicolumn{2}{c}{$\vec s_j^*$} & \\
\cline{2-3} \cline{5-6}
$\cdots$ & $\vec v_i$& $-\vc\omega_i$ &
$\cdots$ & $\vec v_j$& $-\vc\omega_j$ & $\cdots$ \\
\hline
\multicolumn{1}{|c|}{$\rmfill$} & 
	\multicolumn{1}{c|}{\cellcolor{red}$\vec 0$} & 
	\multicolumn{1}{c|}{\cellcolor{lightgray}$(-d^y, d^x, 0)$}&
	\multicolumn{1}{c|}{$\rmfill$} &
	\multicolumn{1}{c|}{\cellcolor{red}$\vec 0$} & 
	\multicolumn{1}{c|}{\cellcolor{lightgray}$(d^y, -d^x, 0)$}&	
	\multicolumn{1}{c|}{$\rmfill$} \\
\hline
\multicolumn{1}{|c|}{$\rmfill$} & 
	\multicolumn{1}{c|}{\cellcolor{red}$\vec 0$} & 
	\multicolumn{1}{c|}{\cellcolor{lightgray}$(0, -d^z, d^y)$}&
	\multicolumn{1}{c|}{$\rmfill$} &
	\multicolumn{1}{c|}{\cellcolor{red}$\vec 0$} & 
	\multicolumn{1}{c|}{\cellcolor{lightgray}$(0, d^z, -d^y)$}&	
	\multicolumn{1}{c|}{$\rmfill$} \\
\hline
\multicolumn{1}{|c|}{$\rmfill$} & 
	\multicolumn{2}{c|}{\cellcolor{lightgray}$(\vec p:1) \vee (\vec d:0)$}&
	\multicolumn{1}{c|}{$\rmfill$} &
	\multicolumn{2}{c|}{\cellcolor{lightgray}$-((\vec p:1) \vee (\vec d:0))$}&	
	\multicolumn{1}{c|}{$\rmfill$} \\
\hline
\end{tabular}
\end{center}

\noindent{\bf Plane-plane distance} {\em (Figure \ref{fig.planePlaneDistance})}:
The function $L_{21}:E_{21} \rightarrow \Re^3 \times \Re^3 \times \Re^3 \times \Re$
maps an edge $e=ij$ to a quadruple $(\vec p_i, \vec p_j, \vec d, a)$ so that
the planes $(\vec p_i, \vec d)$ and $(\vec p_j, \vec d)$ affixed to bodies $i$ and $j$, respectively,
are constrained to have the distance $a$ between them.

Maintaining the plane-plane distance infinitesimally reduces to the 
same linear equations as for plane-plane coincidence. 
We place a {\bf plane-plane parallel}
angular constraint, resulting in 2 primitive
{\em angular} constraints from Equations \ref{eq:parallelAngleInf1} and \ref{eq:parallelAngleInf2}
along with a primitive {\em blind} constraint using Equation \ref{eq:velOrthogToVec} to force
the relative velocity of $\vec p_i$ to remain parallel to the plane:
\begin{eqnarray}
	\left\langle (\vec s_i^* - \vec s_j^*), (0,0,0,-d^y,d^x, 0)\right\rangle \\
	\left\langle (\vec s_i^* - \vec s_j^*), (0,0,0,0,-d^z,d^y)\right\rangle
\end{eqnarray}
\begin{equation}
	\left\langle (\vec s_i - \vec s_j)^*, (\vec p_i:1) \vee (\vec d:0)\right\rangle = 0
\end{equation}
Thus, a plane-plane distance constraint corresponds to 3 rows in the rigidity matrix:
\begin{center}
\begin{tabular}{ccccccc}
 & \multicolumn{2}{c}{$\vec s_i^*$} & & \multicolumn{2}{c}{$\vec s_j^*$} & \\
\cline{2-3} \cline{5-6}
$\cdots$ & $\vec v_i$& $-\vc\omega_i$ &
$\cdots$ & $\vec v_j$& $-\vc\omega_j$ & $\cdots$ \\
\hline
\multicolumn{1}{|c|}{$\rmfill$} & 
	\multicolumn{1}{c|}{\cellcolor{red}$\vec 0$} & 
	\multicolumn{1}{c|}{\cellcolor{lightgray}$(-d^y, d^x, 0)$}&
	\multicolumn{1}{c|}{$\rmfill$} &
	\multicolumn{1}{c|}{\cellcolor{red}$\vec 0$} & 
	\multicolumn{1}{c|}{\cellcolor{lightgray}$(d^y, -d^x, 0)$}&	
	\multicolumn{1}{c|}{$\rmfill$} \\
\hline
\multicolumn{1}{|c|}{$\rmfill$} & 
	\multicolumn{1}{c|}{\cellcolor{red}$\vec 0$} & 
	\multicolumn{1}{c|}{\cellcolor{lightgray}$(0, -d^z, d^y)$}&
	\multicolumn{1}{c|}{$\rmfill$} &
	\multicolumn{1}{c|}{\cellcolor{red}$\vec 0$} & 
	\multicolumn{1}{c|}{\cellcolor{lightgray}$(0, d^z, -d^y)$}&	
	\multicolumn{1}{c|}{$\rmfill$} \\
\hline
\multicolumn{1}{|c|}{$\rmfill$} & 
	\multicolumn{2}{c|}{\cellcolor{lightgray}$(\vec p_i:1) \vee (\vec d:0)$}&
	\multicolumn{1}{c|}{$\rmfill$} &
	\multicolumn{2}{c|}{\cellcolor{lightgray}$-((\vec p_i:1) \vee (\vec d:0))$}&	
	\multicolumn{1}{c|}{$\rmfill$} \\
\hline
\end{tabular}
\end{center}

\subsubsection{Example}
To help the reader, we complete the formalization of the dice example depicted in 
Figures \ref{fig.dice} and \ref{fig.diceCadGraph}. We assume that the $z$-axis lies
along the base of the dice in the direction of Face $2$, with the $xy$-plane parallel to Face $3$.

Then the framework is described by the functions:
\begin{itemize}
\item $L_{17}(e_{(i)}) = ((0, 2, 0), (0,1,0), (0,1,0))$ 

\item $L_{18}(e_{(ii)}) = (((0, 2, 0), (1,0,0)), ((0,0,1), (0,0,1) ))$ 

\item $L_{16}(e_{(iii)}) = (((0, 2, 0), (0,0,1)), ((0,1,0),(0,1,0)),1)$ 

\item $L_1(e_{(iv)}) = (0, 1, 1)$ 
\end{itemize}
Since the example only uses four 
types of constraints, we omit the description of the remaining $L_i$ functions.

For each edge, we construct the associated rows in the rigidity matrix, resulting
in the following:

\begin{center}
\begin{tabular}{rcccccccccccc}
& \multicolumn{6}{c}{$\vec s_A^*$} & \multicolumn{6}{c}{$\vec s_B^*$} \\
&\multicolumn{3}{c}{$\vec v_A$} & \multicolumn{3}{c}{$-\vc\omega_A$} &
\multicolumn{3}{c}{$\vec v_B$}& \multicolumn{3}{c}{$-\vc\omega_B$} \\
\multirow{2}{*}{$e_{(i)}\left\{ \begin{array}{c}\hbox{}\\\hbox{}\end{array} \right.$} &
\multicolumn{1}{|c|}{\cellcolor{red}$0$} & 
\multicolumn{1}{c|}{\cellcolor{red}$0$} & 
\multicolumn{1}{c|}{\cellcolor{red}$0$} & 
\multicolumn{1}{c|}{\cellcolor{lightgray}$-1$}&
\multicolumn{1}{c|}{\cellcolor{lightgray}$0$}&
\multicolumn{1}{c|}{\cellcolor{lightgray}$0$}&
\multicolumn{1}{c|}{\cellcolor{red}$0$} & 
\multicolumn{1}{c|}{\cellcolor{red}$0$} & 
\multicolumn{1}{c|}{\cellcolor{red}$0$} & 
\multicolumn{1}{c|}{\cellcolor{lightgray}$1$}&
\multicolumn{1}{c|}{\cellcolor{lightgray}$0$}&
\multicolumn{1}{c|}{\cellcolor{lightgray}$0$}\\
\cline{2-13}
&\multicolumn{1}{|c|}{\cellcolor{red}$0$} & 
\multicolumn{1}{c|}{\cellcolor{red}$0$} & 
\multicolumn{1}{c|}{\cellcolor{red}$0$} & 
\multicolumn{1}{c|}{\cellcolor{lightgray}$0$}&
\multicolumn{1}{c|}{\cellcolor{lightgray}$0$}&
\multicolumn{1}{c|}{\cellcolor{lightgray}$1$}&
	\multicolumn{1}{c|}{\cellcolor{red}$0$} & 
	\multicolumn{1}{c|}{\cellcolor{red}$0$} & 
	\multicolumn{1}{c|}{\cellcolor{red}$0$} & 
	\multicolumn{1}{c|}{\cellcolor{lightgray}$0$}&
	\multicolumn{1}{c|}{\cellcolor{lightgray}$0$}&
	\multicolumn{1}{c|}{\cellcolor{lightgray}$-1$}\\
\cline{2-13}
$e_{(ii)}\left\{ \begin{array}{c}\hbox{}\end{array} \right.$ &
\multicolumn{1}{|c|}{\cellcolor{red}$0$} & 
\multicolumn{1}{c|}{\cellcolor{red}$0$} & 
\multicolumn{1}{c|}{\cellcolor{red}$0$} & 
\multicolumn{1}{c|}{\cellcolor{lightgray}$0$}&
\multicolumn{1}{c|}{\cellcolor{lightgray}$1$}&
\multicolumn{1}{c|}{\cellcolor{lightgray}$0$}&
	\multicolumn{1}{c|}{\cellcolor{red}$0$} & 
	\multicolumn{1}{c|}{\cellcolor{red}$0$} & 
	\multicolumn{1}{c|}{\cellcolor{red}$0$} & 
	\multicolumn{1}{c|}{\cellcolor{lightgray}$0$}&
	\multicolumn{1}{c|}{\cellcolor{lightgray}$-1$}&
	\multicolumn{1}{c|}{\cellcolor{lightgray}$0$}\\
\cline{2-13}
\multirow{2}{*}{$e_{(iii)}\left\{ \begin{array}{c}\hbox{}\\\hbox{}\end{array} \right.$} &
\multicolumn{1}{|c|}{\cellcolor{red}$0$} & 
\multicolumn{1}{c|}{\cellcolor{red}$0$} & 
\multicolumn{1}{c|}{\cellcolor{red}$0$} & 
\multicolumn{1}{c|}{\cellcolor{lightgray}$1$}&
\multicolumn{1}{c|}{\cellcolor{lightgray}$0$}&
\multicolumn{1}{c|}{\cellcolor{lightgray}$0$}&
	\multicolumn{1}{c|}{\cellcolor{red}$0$} & 
	\multicolumn{1}{c|}{\cellcolor{red}$0$} & 
	\multicolumn{1}{c|}{\cellcolor{red}$0$} & 
	\multicolumn{1}{c|}{\cellcolor{lightgray}$-1$}&
	\multicolumn{1}{c|}{\cellcolor{lightgray}$0$}&
	\multicolumn{1}{c|}{\cellcolor{lightgray}$0$}\\
\cline{2-13}
&\multicolumn{1}{|c|}{\cellcolor{lightgray}$0$}&
\multicolumn{1}{c|}{\cellcolor{lightgray}$-1$}&
\multicolumn{1}{c|}{\cellcolor{lightgray}$0$}&
\multicolumn{1}{c|}{\cellcolor{lightgray}$0$}&
\multicolumn{1}{c|}{\cellcolor{lightgray}$0$}&
\multicolumn{1}{c|}{\cellcolor{lightgray}$0$}&
\multicolumn{1}{c|}{\cellcolor{lightgray}$0$}&
\multicolumn{1}{c|}{\cellcolor{lightgray}$1$}&
\multicolumn{1}{c|}{\cellcolor{lightgray}$0$}&
\multicolumn{1}{c|}{\cellcolor{lightgray}$0$}&
\multicolumn{1}{c|}{\cellcolor{lightgray}$0$}&
\multicolumn{1}{c|}{\cellcolor{lightgray}$0$}\\
\cline{2-13}
\multirow{3}{*}{$e_{(iv)}\left\{ \begin{array}{c}\hbox{}\\\hbox{}\\\hbox{}\end{array} \right.$} &
\multicolumn{1}{|c|}{\cellcolor{lightgray}$1$}&
\multicolumn{1}{c|}{\cellcolor{lightgray}$0$}&
\multicolumn{1}{c|}{\cellcolor{lightgray}$0$}&
\multicolumn{1}{c|}{\cellcolor{lightgray}$0$}&
\multicolumn{1}{c|}{\cellcolor{lightgray}$-1$}&
\multicolumn{1}{c|}{\cellcolor{lightgray}$1$}&

\multicolumn{1}{c|}{\cellcolor{lightgray}$-1$}&
\multicolumn{1}{c|}{\cellcolor{lightgray}$0$}&
\multicolumn{1}{c|}{\cellcolor{lightgray}$0$}&
\multicolumn{1}{c|}{\cellcolor{lightgray}$0$}&
\multicolumn{1}{c|}{\cellcolor{lightgray}$1$}&
\multicolumn{1}{c|}{\cellcolor{lightgray}$-1$}\\
\cline{2-13}
&\multicolumn{1}{|c|}{\cellcolor{lightgray}$0$}&
\multicolumn{1}{c|}{\cellcolor{lightgray}$1$}&
\multicolumn{1}{c|}{\cellcolor{lightgray}$0$}&
\multicolumn{1}{c|}{\cellcolor{lightgray}$1$}&
\multicolumn{1}{c|}{\cellcolor{lightgray}$0$}&
\multicolumn{1}{c|}{\cellcolor{lightgray}$0$}&

\multicolumn{1}{c|}{\cellcolor{lightgray}$0$}&
\multicolumn{1}{c|}{\cellcolor{lightgray}$-1$}&
\multicolumn{1}{c|}{\cellcolor{lightgray}$0$}&
\multicolumn{1}{c|}{\cellcolor{lightgray}$-1$}&
\multicolumn{1}{c|}{\cellcolor{lightgray}$0$}&
\multicolumn{1}{c|}{\cellcolor{lightgray}$0$}\\
\cline{2-13}
&\multicolumn{1}{|c|}{\cellcolor{lightgray}$0$}&
\multicolumn{1}{c|}{\cellcolor{lightgray}$0$}&
\multicolumn{1}{c|}{\cellcolor{lightgray}$1$}&
\multicolumn{1}{c|}{\cellcolor{lightgray}$-1$}&
\multicolumn{1}{c|}{\cellcolor{lightgray}$0$}&
\multicolumn{1}{c|}{\cellcolor{lightgray}$0$}&

\multicolumn{1}{c|}{\cellcolor{lightgray}$0$}&
\multicolumn{1}{c|}{\cellcolor{lightgray}$0$}&
\multicolumn{1}{c|}{\cellcolor{lightgray}$-1$}&
\multicolumn{1}{c|}{\cellcolor{lightgray}$1$}&
\multicolumn{1}{c|}{\cellcolor{lightgray}$0$}&
\multicolumn{1}{c|}{\cellcolor{lightgray}$0$}\\
\cline{2-13}
\end{tabular}
\end{center}

\subsection{Summary of infinitesimal theory}
We have now completed the development of the infinitesimal theory 
for body-and-cad rigidity. 
Table \ref{table:constraintAssoc} summarizes the associations for each constraint 
to the number of primitive {\bf angular} and {\bf blind} constraints. 
As an example of how to read
the table, the 
last two columns (corresponding to {\bf plane}) of row 3 (corresponding to {\bf coincidence}    
under {\bf line}) indicate
that a {\bf line-plane
coincidence} constraint reduces to 1 angular 
and 1 blind primitive constraint.
In the next section, we identify a combinatorial property based on the
shape of the rigidity matrix.

\begin{table}[bht]
\begin{center}
	\begin{tabular}{|l||c|c|c|c|c|c|}
\hline  & \multicolumn{2}{c|}{\bf point} & \multicolumn{2}{c|}{\bf
line} &
 \multicolumn{2}{c|}{\bf plane} \\
\hline
 & angular & blind & angular & blind & angular & blind \\
\hline  \multicolumn{7}{|l|}{\bf point} \\
\hline  coincidence & 0 & 3 & 0 & 2 & 0 & 1 \\
 distance & 0 & 1 & 0 & 1 & 0 & 1 \\
\hline \multicolumn{7}{|l|}{\bf line} \\
\hline 
 coincidence &  &  & 2 & 2 & 1 & 1 \\
 distance &  &  & 0 & 1 & 1 & 1 \\
 parallel &  &  & 2 & 0 & 1 & 0 \\
 perpendicular &  &  & 1 & 0 & 2 & 0 \\
 fixed angular &  &  & 1 & 0 & 1 & 0 \\
\hline  \multicolumn{7}{|l|}{\bf plane} \\
\hline 
 coincidence & & & & & 2 & 1  \\
 distance & & & & & 2 & 1  \\
parallel & & & & & 2 & 0 \\
 perpendicular  & & & & & 1 & 0 \\
 fixed angular & & & & & 1 & 0\\
 \hline
\end{tabular}
\end{center}
\caption{Association of body-and-cad ({\em coincidence, angular, distance}) constraints 
with the number of blind and angular primitive constraints. } 
\label{table:constraintAssoc}
\end{table}

\section{Combinatorics}
\label{sec:combinatorics}
Now we address the question of combinatorially
characterizing when a body-and-cad rigidity matrix
is generically independent, i.e.,
the rank function drops only on a measure-zero set of
possible entries. The shape of the rigidity matrix
leads to a natural property that we call {\em nested sparsity}.
We show that nested sparsity is a necessary condition for 
body-and-cad rigidity 
and prove by a counterexample that it is insufficient.

\smallskip\noindent{\bf Nested sparsity.}
A graph on $n$ vertices is {\em $(k, \ell)$-sparse} if every subset of $n'$ vertices spans
at most $kn' - \ell$ edges; it is {\em tight} if, in addition, it
spans $kn - \ell$ total edges.

Let $G = (V, R \cup B)$ be a graph with its edge set colored
into red and black edges, corresponding to $R$ and $B$, respectively.
We say that $G$ is {\em $(k_1, \ell_1, k_2, \ell_2)$-nested 
sparse} if it is $(k_1,\ell_1)$-sparse and $G_R = (V, R)$ is $(k_2, \ell_2)$-sparse;
the graph is {\em $(k_1, \ell_1, k_2, \ell_2)$-nested tight} if, in addition,
$G$ is $(k_1, \ell_1)$-tight. Note that nested
sparsity only makes sense when $(k_2, \ell_2)$-sparsity is more
restrictive than $(k_1, \ell_1)$-sparsity.

\smallskip\noindent{\bf Primitive cad graphs.}
Given a cad graph $(G,c)$, we define the 
{\em primitive cad graph} $H = (V, R \cup B)$ to be the multigraph obtained
by assigning vertices to bodies and constraints to disjoint edge sets
$R$ and $B$, corresponding
respectively to primitive angular and blind constraints. 
For each edge $e$ with type $c_i$, associate 
primitive angular constraints to edges in $R$ 
and primitive blind constraints to edges in $B$ as described in Table
\ref{table:constraintAssoc}. Figure \ref{fig.dicePrimCadGraph} depicts the primitive cad graph
associated with the dice example from Figure \ref{fig.dice}, whose
cad graph is depicted in Figure \ref{fig.diceCadGraph}.
\begin{figure}[htb]
  \begin{center}
    \includegraphics[width=.7\linewidth]{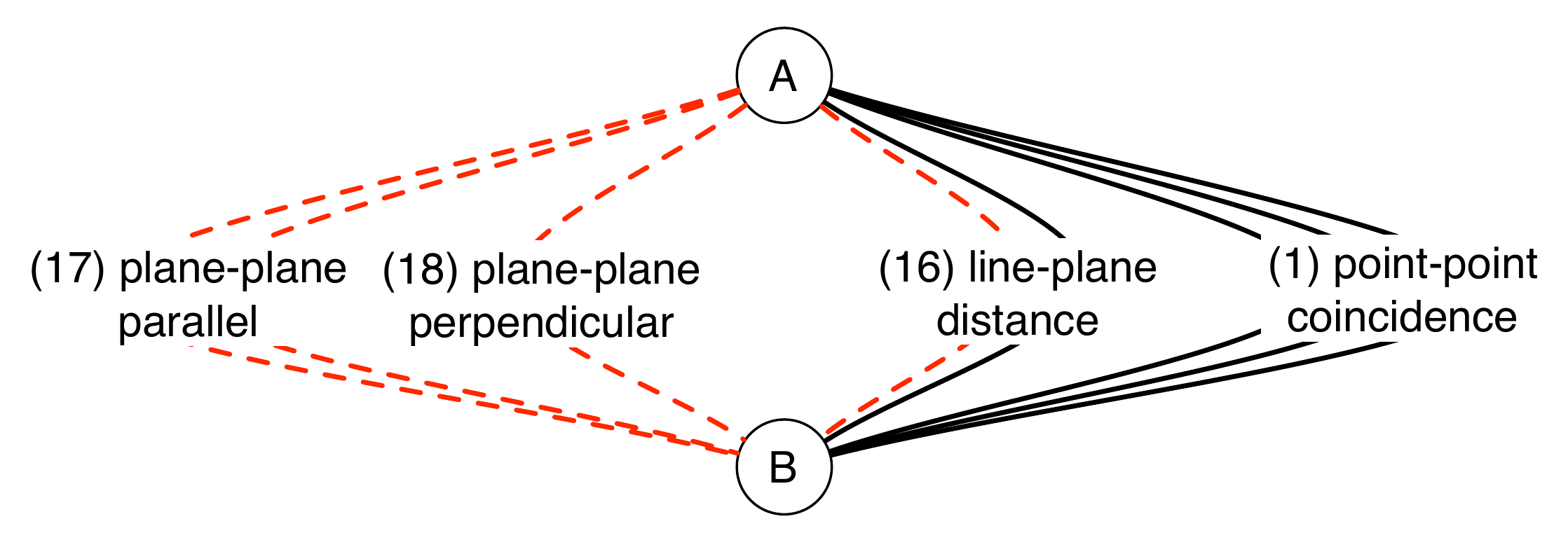}
  \end{center}
  \caption{The primitive cad graph for the example depicted in Figures \ref{fig.dice} 
and \ref{fig.diceCadGraph}; dashed edges denote $R$ edge set.}
  \label{fig.dicePrimCadGraph}
\end{figure}

\begin{thm}
	\label{thm:necessaryCADnested}
	Let $H = (V, R \cup B)$ be the primitive cad graph associated to a body-and-cad framework,
	where $R$ and $B$ correspond to primitive angular and blind constraints,
	respectively.
	Then $(6,6,3,3)$-nested tightness
	is a necessary condition for generic minimal body-and-cad rigidity.
\end{thm}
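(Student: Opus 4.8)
The plan is to reduce generic minimal rigidity to a purely linear-algebraic statement about the rigidity matrix (denoted $M$ below, to avoid confusion with the red edge set $R$) and then to read off the three required counts from the column support and kernel structure of $M$. First I would note that, combining the definition of minimal rigidity with the characterization $\mathrm{rank}(M)=6n-6$ of infinitesimal rigidity, a generically minimally rigid framework has a rigidity matrix whose rows are linearly independent and whose rank is exactly $6n-6$: if some row were a linear combination of the others, deleting the corresponding primitive constraint would leave the rank unchanged, contradicting minimality; and rigidity forces the rank to attain its maximum $6n-6$. Hence the total number of primitive constraints satisfies $|R|+|B|=6n-6$, which is exactly the $(6,6)$-tightness of $H$. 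It then remains to establish the two sparsity inequalities, each of which I would obtain by bounding the rank of an induced submatrix and invoking row independence.

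For $(6,6)$-sparsity of $H$, fix a subset $V'$ with $|V'|=n'$ and let $M'$ be the submatrix formed by the rows of all primitive constraints spanned by $V'$. Each such row is supported only on the $6n'$ columns belonging to bodies in $V'$. Since every constraint (from all building blocks in Section \ref{sec:buildingBlocks}) depends on the screws only through the relative screw $\vec s_i-\vec s_j$, the common-screw diagonal (all $\vec s_i$ equal for $i\in V'$) is a $6$-dimensional subspace in the kernel of $M'$, exactly as in the body-and-bar case; thus $\mathrm{rank}(M')\le 6n'-6$. Because the rows of $M$ are independent, so are those of $M'$, giving at most $6n'-6$ spanned edges and hence $(6,6)$-sparsity of $H$.

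The heart of the argument is $(3,3)$-sparsity of the red subgraph $G_R=(V,R)$, where the special algebraic behavior of angular constraints enters. From the angular building blocks, every primitive angular row has zero entries in the three $\vec v_i$-columns of each body and depends on the screws only through the rotational differences $\vc\omega_i-\vc\omega_j$. Consequently the angular submatrix is supported entirely on the $3n$ rotational columns, and its restriction to $V'$ acts only on the $3n'$ rotational coordinates of the bodies in $V'$. The $3$-dimensional common-rotation diagonal (all $\vc\omega_i$ equal for $i\in V'$) annihilates every angular constraint and so lies in the kernel, whence this induced angular submatrix has rank at most $3n'-3$. Row independence once more converts the rank bound into the edge bound: the number of red edges spanned by $V'$ is at most $3n'-3$, establishing $(3,3)$-sparsity of $G_R$.

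The step requiring the most care, and the main conceptual obstacle, is isolating the correct column support and kernel for the angular rows: one must check from the explicit building-block equations that \emph{all} angular constraints—including those arising from the line-plane and plane-plane angular conditions through their reductions in Section \ref{sec:extensionAngle}—factor through $\vc\omega_i-\vc\omega_j$ and vanish on the common-rotation diagonal, so that the tighter $(3,3)$ bound, rather than the generic $(6,6)$ bound, governs the red edges. Once this is verified, the remainder of the argument simply mirrors the classical Maxwell sparsity reasoning for body-and-bar frameworks.
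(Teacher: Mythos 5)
Your proposal is correct and follows essentially the same route as the paper's proof: bound the rank of the angular submatrix (supported on the rotational columns) and of the full matrix by exhibiting the $3$-dimensional space of trivial infinitesimal rotations and the $6$-dimensional space of trivial rigid motions in the respective kernels, then use row independence forced by minimality to convert rank bounds into the $(3,3)$ and $(6,6)$ edge counts. Your write-up is in fact somewhat more explicit than the paper's, since you spell out the restriction to arbitrary vertex subsets $V'$ and the derivation of tightness from $\mathrm{rank}=6n-6$, which the paper leaves implicit.
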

\begin{proof}
	Let $\mtx A$ be the rigidity matrix associated with $G$. Reorder the columns
	so that the first $3n$ columns correspond to the $-\vc \omega$ elements of the
	screws and the last $3n$ columns correspond to the $\vec v$ elements.
	Reorder the rows to have the $|R|$ rows corresponding to primitive angular
	constraints first; since these rows have all 0s in the last $3n$ columns, 
	we simply consider the submatrix $\mtx A_R$ defined by these $|R|$ 
	rows and the first $3n$ columns. Then
	it is clear that $(3,0)$-sparsity is necessary on $G_R = (V, R)$. To see
	that $(3,3)$-sparsity is necessary, we note that the 3-dimensional
	space of trivial motions of $\so(3)$ (infinitesimal rotations) 
	is a subspace of the kernel of $A_R$. 
	These are defined by the basis $\{\vc \rho_1, \vc \rho_2, \vc \rho_3\}$, 
	where $\vc \rho_1$ is the vector obtained by $n$ copies of $(1, 0, 0)$,
	$\vc \rho_2$ is the vector obtained by $n$ copies of $(0, 1, 0)$,
	and $\vc \rho_3$ is the vector obtained by $n$ copies of $(0, 0, 1)$. 
	Similarly, for the overall $(6,6)$-sparsity, note that we have a 
	6-dimensional space of trivial motions of $\se(3)$ (infinitesimal
	rigid body motions), defined by the basis
	$\{\hat{\vc \rho_1}, \hat{\vc \rho_2}, \hat{\vc \rho_3}, 
	\vc \tau_1, \vc \tau_2, \vc \tau_3\}$, where $\hat{\vc \rho_i}$ 
	simply appends $3n$ zeros to $\vc \rho_i$;
	 $\vc \tau_1$ is the vector obtained by $n$ copies of $(0,0,0,1, 0, 0)$,
	$\vc \tau_2$ is the vector obtained by $n$ copies of $(0,0,0,0, 1, 0)$,
	and $\vc \tau_3$ is the vector obtained by $n$ copies of $(0,0,0,0, 0, 1)$.
\end{proof}
\begin{figure}[tb]
\centering \subfloat[Structure with 3 bodies and 6 constraints.
$A$ and $B$ have 2 point-point distance constraints (denoted by the
gray pair of points and black pair of points)  
and a line-line coincidence constraint (denoted by the shared dotted line).
$A$ and $C$ have a line-line fixed angular constraint (denoted by the dashed lines)
and a plane-plane coincidence constraint (denoted by the two faces with vertical stripes).
$B$ and $C$ have a line-plane coincidence
constraint (denoted by the solid line on $B$ and the checkered face on $C$).
] {\label{fig.nestedCounterStr}
\begin{minipage}[b]{.95\linewidth}
\centering
\includegraphics[width=.85\linewidth]{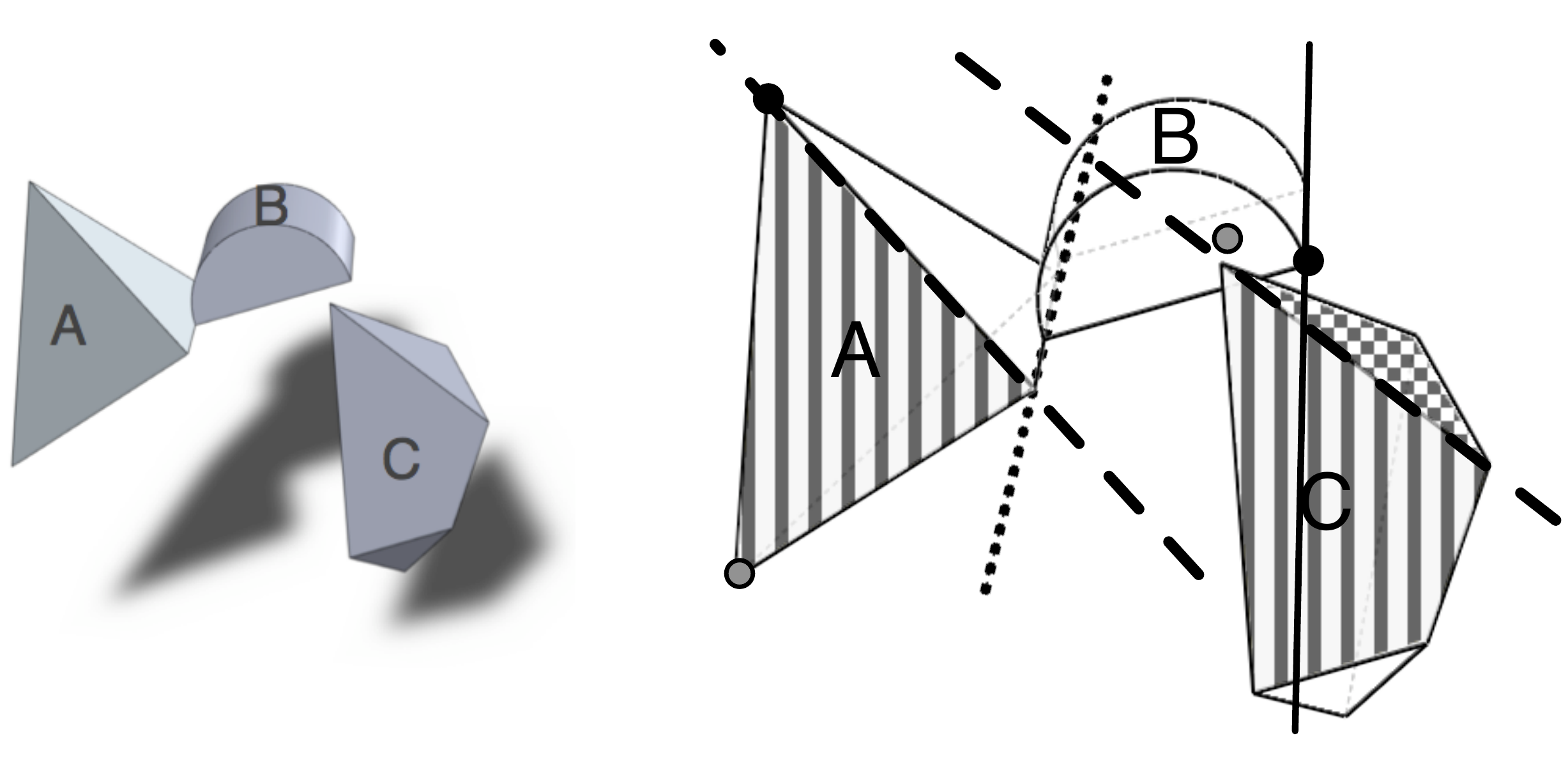}
\end{minipage}}%
\\
\centering \subfloat[The structure is flexible:
$C$ can move relative to $A$ and $B$ by translating
in the direction indicated by the arrows.]{\label{fig.nestedCounterPic}
\begin{minipage}[b]{0.45\linewidth}
\centering\includegraphics[width=.7\linewidth]{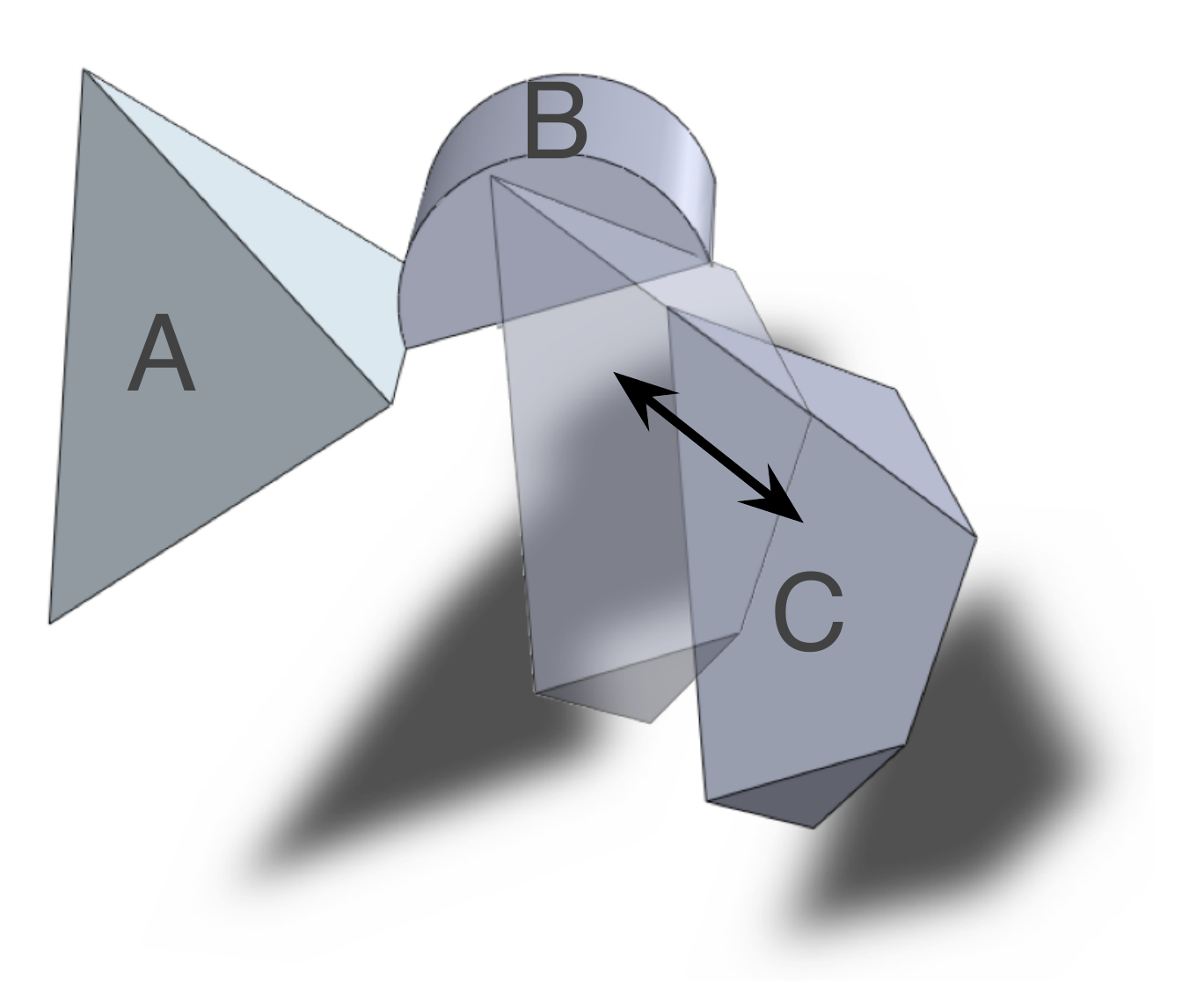}
\end{minipage}}
\hspace{2mm}
\centering \subfloat[Corresponding primitive cad
graph is $(6,6,3,3)$-nested tight; dashed edges denote $R$ edge set.]{\label{fig.nestedCounterGraph}
\begin{minipage}[b]{0.45\linewidth}
\centering\includegraphics[width=\linewidth]{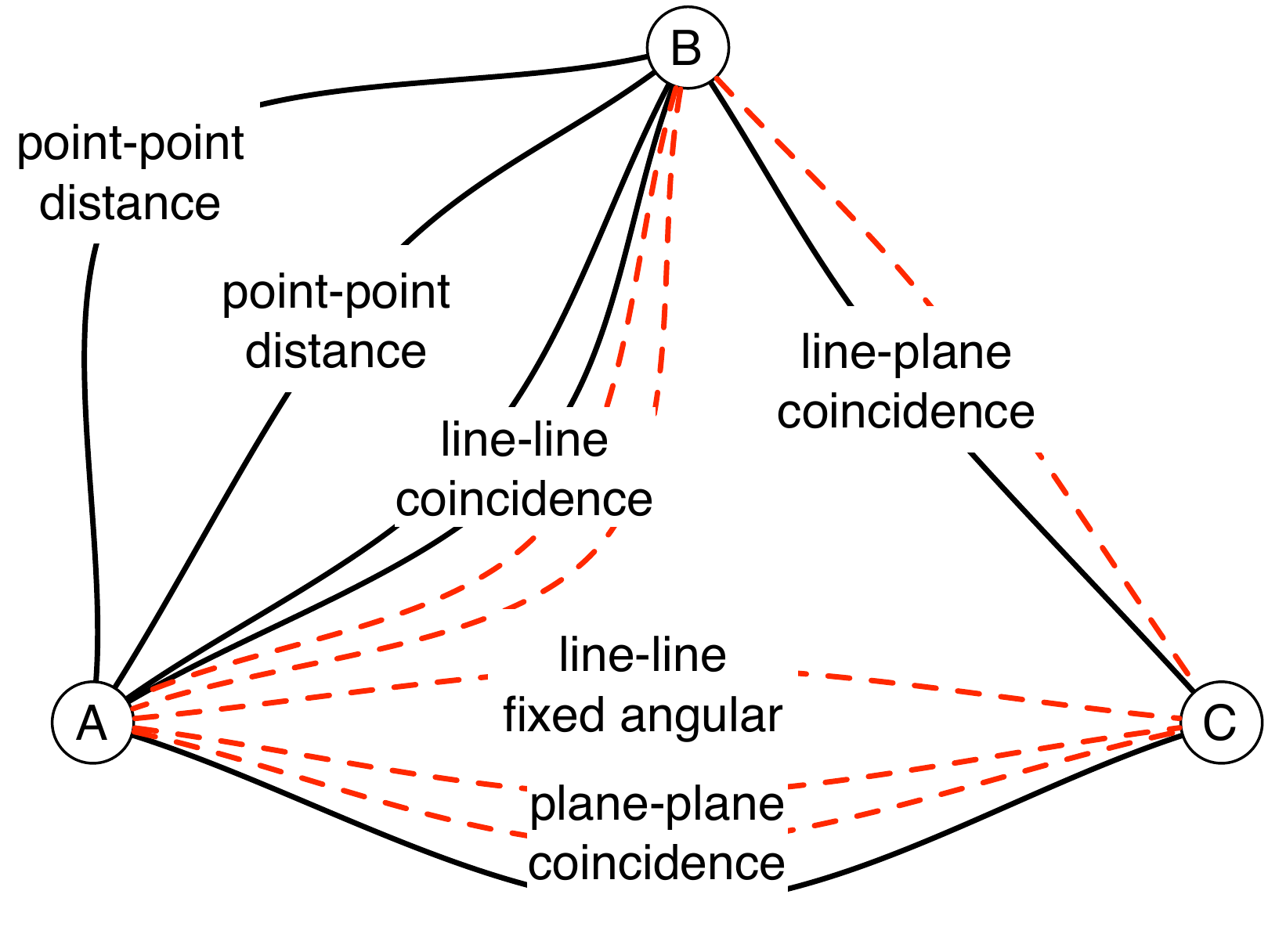}
\end{minipage}}
\caption{Counterexample shows nested sparsity condition is not sufficient. }
\label{fig.nestedCounter}
\end{figure}
\noindent{\bf Counterexample.}
We now show that $(6,6,3,3)$-nested sparsity is 
not sufficient for body-and-cad rigidity. The example in 
Figure \ref{fig.nestedCounter} depicts a {\em flexible} structure
whose associated graph is $(6,6,3,3)$-nested tight. 
It is composed of 3 bodies $A, B$ and $C$; Figure \ref{fig.nestedCounterStr}
depicts the constraints. The structure has one degree of freedom, as indicated
by the arrows in Figure \ref{fig.nestedCounterPic}. The associated primitive cad
graph is shown in Figure \ref{fig.nestedCounterGraph}; the reader may check that 
it is $(6,6,3,3)$-nested tight.

\section{Algorithms for nested sparsity}
\label{sec:algs}
In the previous section, we defined nested sparsity, proving
that is a necessary, but not sufficient, condition for body-and-cad
rigidity. We now examine the algorithmic aspects of nested sparsity.
\begin{figure}[th]
\centering \subfloat[Original graph is not $(2,2,1,1)$-nested sparse; 
dashed edges denote $R$ edge set.]{\label{fig.counter2211_all}
\begin{minipage}[b]{0.3\linewidth}
\centering
\includegraphics[width=1.6in]{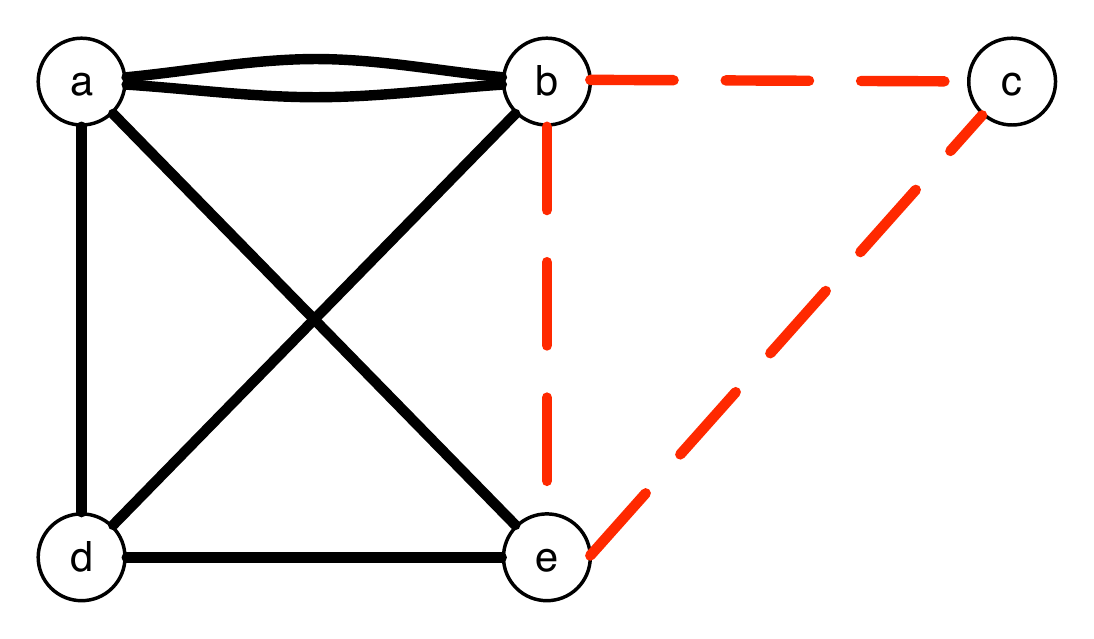}
\end{minipage}}
\hspace{2mm}
\subfloat[Subgraph that is maximally $(2,2,1,1)$-nested sparse with 7
edges.]{\label{fig.counter2211_maximal1}
\begin{minipage}[b]{0.3\linewidth}
\centering
\includegraphics[width=1.6in]{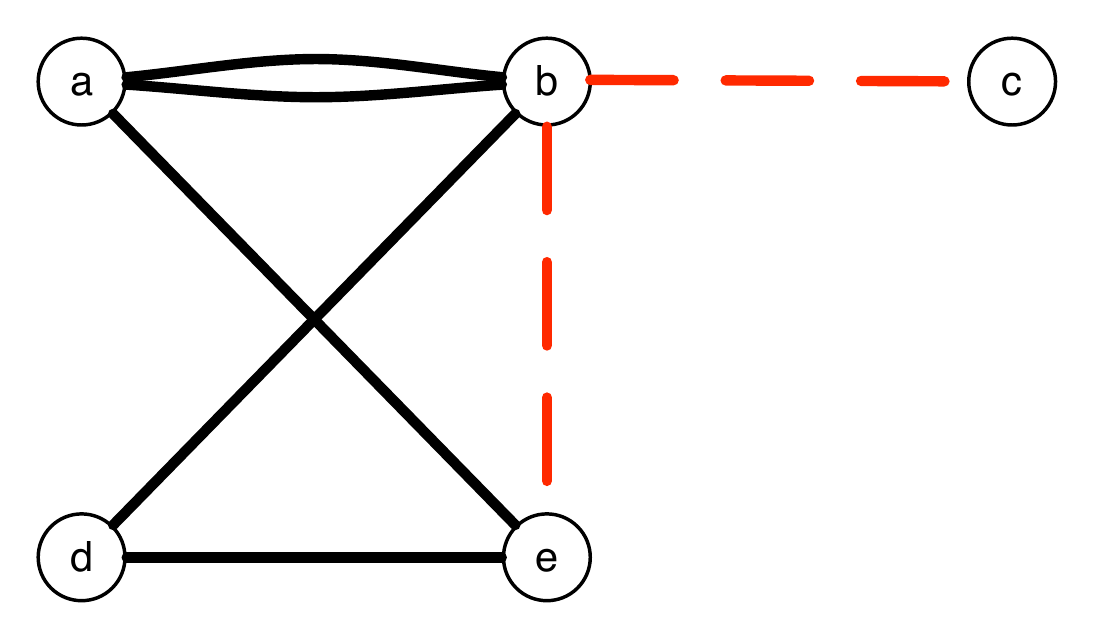}
\end{minipage}}
\hspace{2mm}
\subfloat[Subgraph that is maximally $(2,2,1,1)$-nested sparse with 8
edges.]{\label{fig.counter2211_maximal2}
\begin{minipage}[b]{0.3\linewidth}
\centering
\includegraphics[width=1.6in]{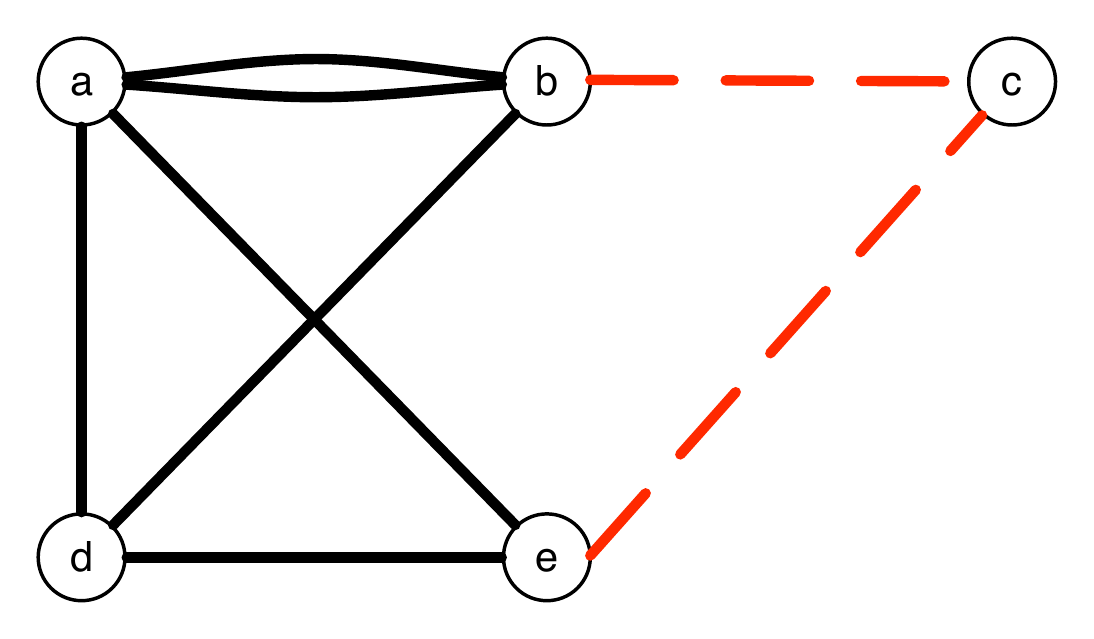}
\end{minipage}}
\caption{Example showing why $(2,2,1,1)$-sparsity is not matroidal;
two maximal subgraphs do not have the same size.}
\label{fig.counter2211}
\end{figure}

We first observe that nested sparsity is not matroidal, as seen by the
example for $(2,2,1,1)$-nested sparsity in Figure
\ref{fig.counter2211}.
However, for certain values of $k_1, \ell_1, k_2$
and $\ell_2$, nested sparsity is the intersection of
two matroids.

\begin{thm}\cite{TheranPersonalCom07}
	\label{thm:nestedIntersection}
	When $0 \leq \ell_i < 2 k_i$, for $i=1,2$, nested sparsity is the intersection
	of two matroids.
\end{thm}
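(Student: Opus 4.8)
The plan is to realize each of the two defining conditions of nested sparsity as independence in a single matroid on the common ground set $E = R \cup B$, and then to observe that a subgraph is nested sparse exactly when it is independent in both. The essential input is the classical fact that, whenever $0 \le \ell < 2k$, the $(k,\ell)$-sparse edge subsets of a (multi)graph are the independent sets of a matroid, the so-called $(k,\ell)$-sparsity matroid (see, e.g., \cite{streinu:lee:pebbleGames:2008}); this is precisely where the hypothesis $0 \le \ell_i < 2 k_i$ enters. Throughout, ``intersection of two matroids'' is understood in the usual sense of the family of common independent sets.

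First I would fix the ground set to be the full edge set $E = R \cup B$ and let $M_1$ be the $(k_1,\ell_1)$-sparsity matroid on $E$, which exists because $0 \le \ell_1 < 2 k_1$. By definition, $F \subseteq E$ is independent in $M_1$ if and only if $F$ is $(k_1,\ell_1)$-sparse, which matches the first requirement of nested sparsity verbatim. Next I would encode the condition on the red subgraph. Let $N_2$ be the $(k_2,\ell_2)$-sparsity matroid on the ground set $R$ alone, which exists because $0 \le \ell_2 < 2 k_2$, and let $\mathrm{Free}(B)$ denote the free matroid on $B$ (in which every subset is independent). I then define
\begin{equation*}
	M_2 = N_2 \oplus \mathrm{Free}(B),
\end{equation*}
the direct sum, which is again a matroid, now on the ground set $E$. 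For $F \subseteq E$, writing $F_R = F \cap R$ and $F_B = F \cap B$, independence in a direct sum is independence on each summand separately; since every subset of $B$ is free, $F$ is independent in $M_2$ if and only if $F_R$ is $(k_2,\ell_2)$-sparse. This matches the second requirement.

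Combining the two characterizations closes the argument: $F$ is $(k_1,\ell_1,k_2,\ell_2)$-nested sparse if and only if $F$ is $(k_1,\ell_1)$-sparse and $F \cap R$ is $(k_2,\ell_2)$-sparse, which holds if and only if $F \in \mathcal{I}(M_1) \cap \mathcal{I}(M_2)$. Thus nested sparsity is exactly the family of common independent sets of $M_1$ and $M_2$, as claimed.

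The step I expect to require the most care is the first one: correctly invoking (and respecting the hypotheses of) the fact that the plain $(k,\ell)$-sparsity families are matroids precisely in the regime $0 \le \ell < 2k$. Everything afterward --- passing to the two-colored ground set and adjoining a free matroid on the black edges --- is a routine direct-sum construction, justified because the black edges impose no constraint on the second condition. I would also double-check that the multigraph (rather than simple-graph) setting does not disturb the matroid property and that the boundary cases of the hypothesis, namely $\ell_i = 0$ and $\ell_i = 2k_i - 1$, genuinely lie in the known matroidal range; these are the only places where the claim could fail, and the assumption $0 \le \ell_i < 2k_i$ is exactly what rules them out.
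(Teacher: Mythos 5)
Your construction is essentially the paper's: the paper also takes $M_1$ to be the $(k_1,\ell_1)$-sparsity matroid with colors disregarded and $M_2$ to be the red $(k_2,\ell_2)$-sparsity matroid ``padded with full edge multiplicity of the black edges,'' which is exactly your direct sum with the free matroid on $B$, just phrased in terms of bases on a complete multigraph ground set rather than independent sets on $E=R\cup B$. The argument is correct, and your explicit appeal to the $0\le\ell<2k$ matroidal range for $(k,\ell)$-sparsity is the same key fact the paper relies on implicitly.
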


	\begin{proof}	
Define the ground set $E$ to be the complete graph
$K_n^{(2k_1 - \ell_1) + (2k_2 - \ell_2)}$; the edges of the
graph are colored red and black, with $2k_1 - \ell_1$ black
edge multiplicity and $2k_2 - \ell_2$ red edge multiplicity. Let
$M(k,\ell)$ be the bases of the $(k,\ell)$-sparsity matroid; then
$(k_1,\ell_1,k_2,\ell_2)$-nested sparsity is the intersection of 
the following two matroids, defined by their bases:
\begin{enumerate}
	\item $M_1 = \{E' \subseteq E | E' \in M(k_1, \ell_1)\}$, bases in the
	$M(k_1, \ell_1)$-sparsity matroid when edge color is disregarded.
	\item $M_2 = M(k_2,\ell_2) \cup K_n^{2k_1 - \ell_1}$, bases in
	the red $(k_2, \ell_2)$-sparsity matroid padded with full
	edge multiplicity of the black edges.
\end{enumerate}
\end{proof}

As a consequence, when $0 \leq \ell_i < 2 k_i$, the matroid intersection algorithm of
Edmonds \cite{EdmondsMatroidIntersection} can be used to solve the {\bf Decision}
{\em (is a graph 
nested sparse?)}, 
{\bf Extraction} 
{\em (given an input graph, extract a maximum-sized 
nested sparse
subgraph)} and {\bf Components} 
{\em (given an input graph, extract its maximal vertex sets that span 
nested tight subgraphs)}
problems. 

Edmonds' algorithm  outputs a maximum-sized set of edges that are independent
in both matroids and requires an {\em oracle} to test for independence
in each matroid. For the oracles, we use the pebble games algorithms of \cite{streinu:lee:pebbleGames:2008}, a family of algorithms
parametrized by two constants $k$ and $\ell$; the $(k, \ell)$-pebble game characterizes
$(k,\ell)$-sparsity. In particular, the $(k,\ell)$-pebble game takes a graph as
input and can be run in two modes:
the {\bf Decision} mode returns ``yes'' if the input graph is $(k, \ell)$-sparse, and 
the {\bf Components} mode returns the maximal vertex sets that span $(k, \ell)$-tight subgraphs.
Algorithm \ref{alg:nestedSparsity} gives a more detailed
description of how Edmonds' matroid intersection algorithm is used to solve
problems for nested sparsity. 

\begin{figure}[ht]
\noindent \framebox[.95\textwidth][c]{ 
\parbox{.9\textwidth}
{
\begin{myAlg}
	\label{alg:nestedSparsity}
	
\noindent{\bf $(k_1, \ell_1, k_2, \ell_2)$-nested sparsity: Decision, Extraction and Components}\\

\noindent{\em Input:} A graph $G = (V, E = R \cup B)$ and constants $k_1, \ell_1, k_2, \ell_2$, where
$0 \leq \ell_i < 2 k_i$ for $i=1,2$.

\noindent{\em Method:}
\begin{enumerate}
	\item Run Edmonds' matroid intersection algorithm \cite{EdmondsMatroidIntersection} on
		$G$ for the two matroids $M_1$ and $M_2$,
		as defined in the proof of Theorem \ref{thm:nestedIntersection}. 
		When the algorithm performs independence queries on a set of edges $I \subseteq E$,
		\begin{enumerate}
			\item For the matroid $M_1$,
			play the $(k_1, \ell_1)$-pebble game on the input $(V, I)$ in {\bf Decision} mode, which 
			returns ``yes'' if it is $(k_1,\ell_1)$-sparse
			\item For the matroid $M_2$,
			play the $(k_2, \ell_2)$-pebble game on the input $(V, I \cap R)$ in {\bf Decision} mode, which 
			returns ``yes'' if it is $(k_2,\ell_2)$-sparse
		\end{enumerate}
	\item Edmonds' algorithm returns a set $I \subseteq E$ that is of maximum size, where
	 $(V,I)$ is $(k_1, \ell_1, k_2, \ell_2)$-nested sparse.
\item \noindent {\em Output:}
	\begin{itemize}
		\item For {\bf Decision}, {\em ``yes''} if $I = E$ and {\em no} otherwise.
		\item For {\bf Extraction}, $(V,I)$.
		\item For {\bf Components}, play the $(k_1, \ell_1)$-pebble game  in {\bf Components} mode
		on $(V, I)$ and
			{\em output} the components returned by the pebble game.
	\end{itemize}
\end{enumerate}
\end{myAlg}
}} 
\caption{Algorithm for nested sparsity.}
\label{fig:alg:nestedSparsity}
\end{figure}

\noindent{\bf Complexity analysis.} 
Edmonds' algorithm queries the oracles $O(mr^2)$ times, where
$m$ is the number of elements in the ground set and $r$ is the smaller rank of 
the two matroids. For nested sparsity, on a graph $G$ with $n$ vertices and
$m=O(n^2)$ edges, both matroids have rank $O(n)$. 
Therefore, Edmonds' algorithm requires $O(n^4)$ oracle queries. The pebble game
algorithms require $O(n^2)$ time, resulting in $O(n^6)$
 total complexity for Algorithm \ref{alg:nestedSparsity}.
We note that, using the recent matroid intersection algorithm of Harvey \cite{DBLP:journals/siamcomp/Harvey09},
a more efficient running time of $O(mr^{\omega-1})=O(n^{\omega + 1})$, where $\omega$ is the matrix
multiplication exponent, can be obtained for nested sparsity.

Since $(6,6,3,3)$-nested sparsity meets the conditions for Theorem \ref{thm:nestedIntersection}, 
we can apply Algorithm \ref{alg:nestedSparsity} to address the necessary 
condition for body-and-cad rigidity in polynomial time.
\section{Conclusions}
\label{sec:conclusions}
Constraint-based CAD software contains a rich
set of geometric constraints.
Motivated by their applications, 
we have initiated the study of body-and-cad rigidity 
by identifying a class of constraints amenable to 
rigidity-theoretical investigation and developing
their infinitesimal theory.
The shape of the rigidity matrix naturally
led to the study of $(6,6,3,3)$-nested sparsity, a necessary,
but not sufficient, condition for body-and-cad rigidity. 
The polynomial time algorithm we presented for testing this condition
may have practical applications as a filter for finding rigid components 
in a CAD environment, providing informative feedback to the user.


\smallskip
\noindent{\bf Applications.}
The results presented can be applied to a larger
set of CAD constraints via simple reductions. In particular,
it is easy to establish the following reductions:

\begin{itemize}
	\item {\bf Sphere-sphere tangency:} Reduces to {\bf point-point distance} using
	the sphere centers and the sum of the radii.
	\item {\bf Sphere-plane tangency:} Reduces to {\bf point-plane distance} using
	the sphere center, the plane and the sphere radius.
	\item {\bf Sphere-line tangency:} Reduces to {\bf point-line distance} using
	the sphere center, the line and the sphere radius.
	\item {\bf Sphere-point coincidence:} Reduces to {\bf point-point distance} using
	the sphere center, the point and the sphere radius.
\end{itemize}
Analogous reductions can be applied when considering cylinders instead of spheres
by substituting the cylinder's center axis for the sphere's center point.

\smallskip
\noindent{\bf Future directions.}
It remains an open problem to find a combinatorial characterization for
generic body-and-cad rigidity.
We anticipate the study of some of the 
constraints introduced here may prove more tractable than 
classical 3D bar-and-joint rigidity. 
A full combinatorial characterization for angular constraints appears in
\cite{stjohn:streinu:angularRigidityCCCG:2009,lee:PhDThesis:2008}.
However, we observe that finding a complete
characterization may require overcoming well-known obstacles
such as detecting dependencies in 3D bar-and-joint,
2D points-and-angles, 2D circles-and-angles, and 2D point-line
incidence constraint systems. 

Analogous body-and-cad structures for 2D consist of rigid bodies
with  pairwise coincidence (point-point, point-line and line-line), 
angular (line-line) and distance (point-point, point-line and line-line) constraints identified
between points and lines rigidly attached to bodies. 
The development of the rigidity matrix is a straightforward
extension of this work. The interesting question, which remains future work, is a full combinatorial characterization.

\smallskip
\noindent {\bf Acknowledgements.} 
The authors would like to thank
Louis Theran for the observation and proof of Theorem \ref{thm:nestedIntersection}. 
The figures were made using the SolidWorks CAD software \cite{solidworks}.

\bibliographystyle{abbrv}
 \bibliography{geomConstraints}
\end{document}